\renewcommand{\paragraph}{%
  \@startsection{paragraph}{4}%
  {\z@}{1ex \@plus 1ex \@minus .2ex}{-1em}%
  {\normalfont\normalsize\bfseries}%
}
\setlist{itemsep=0.2\parsep,topsep=0.5\parsep}
\newtheorem{theorem}{Theorem}[section]
\newtheorem{definition}[theorem]{Definition}
\newtheorem{lem}[theorem]{Lemma}
\newtheorem{claim}[theorem]{Claim}
\newcommand{\beq}{\begin{eqnarray}}
\newcommand{\eeq}{\end{eqnarray}}
\newcommand\proj[1]{\ket{#1}\!\bra{#1}}
\let\epsilon=\varepsilon %
\newcommand{\eps}{\epsilon} %
\newcommand{\class}[1]{\textup{#1}\xspace} %
\newcommand{\NP}{\class{NP}} %
\newcommand{\QMA}{\class{QMA}} %
\newtheorem{remark}[theorem]{Remark} %
\newcommand{\setft}[1]{\mathrm{#1}} %
\newcommand{\Density}{\setft{D}} %
\DeclareMathOperator{\poly}{poly}
\DeclareMathOperator{\negl}{negl}
\newcommand{\cX}{\mathcal{X}}
\newcommand{\cY}{\mathcal{Y}}
\newcommand{\N}{\mathbb{N}}
\newcommand{\C}{\mathbb{C}}
\newcommand{\Tr}{\mbox{\rm Tr}}
\newcommand{\crs}{\mathrm{crs}}
\newcommand{\anote}[1]{}
\newcommand{\tnote}[1]{}
\newcommand{\znote}[1]{}
\begin{document}

\title{Non-interactive zero-knowledge arguments for QMA, with preprocessing}

\date{}

\author[1,2]{Andrea Coladangelo\footnote{Email: acoladan@caltech.edu}}
\author[1,2]{Thomas Vidick\footnote{Email: vidick@caltech.edu}}
\author[3]{Tina Zhang\footnote{Email: tinazhang@caltech.edu}}
\affil[1]{Department of Computing and Mathematical Sciences, California Institute of Technology}
\affil[2]{Institute for Quantum Information and Matter, California Institute of Technology}
\affil[3]{Division of Physics, Mathematics and Astronomy, California Institute of Technology}

\maketitle

\begin{abstract}
A non-interactive zero-knowledge (NIZK) proof system for a language $L\in\NP$ allows a prover (who is provided with an instance $x \in L$, and a witness $w$ for $x$) to compute a \emph{classical certificate $\pi$} for the claim that $x\in L$ such that $\pi$ has the following properties: 1) $\pi$ can be verified efficiently, and 2) $\pi$ does not reveal any information about $w$, besides the fact that it exists (i.e. that $x \in L$). NIZK proof systems have recently been shown to exist for all languages in $\NP$ in the common reference string (CRS) model and under the learning with errors (LWE) assumption. 

We initiate the study of NIZK \emph{arguments} for languages in $\QMA$. An argument system differs from a proof system in that the honest prover must be efficient, and that it is only sound against (quantum) polynomial-time provers. Our first main result is the following: if LWE is hard for quantum computers, then any language in $\QMA$ has an \emph{NIZK argument} \emph{with preprocessing}. The preprocessing in our argument system consists of (i) the generation of a CRS and (ii) a \emph{single (instance-independent) quantum message} from verifier to prover. The instance-dependent phase of our argument system, meanwhile, involves only a single \emph{classical} message from prover to verifier. Importantly, verification in our protocol is entirely classical, and the verifier needs not have quantum memory; its only quantum actions are in the preprocessing phase. 
NIZK proofs of (classical) knowledge are widely used in the construction of more advanced cryptographic protocols, and we expect the quantum analogue to likewise find a broad range of applications. In this respect, the fact that our protocol has an entirely classical verification phase is particularly appealing. 

Our second contribution is to extend the notion of a classical \emph{proof of knowledge} to the quantum setting. We introduce the notions of \emph{arguments} and \emph{proofs of quantum knowledge} (AoQK/PoQK), and we show that our non-interactive argument system satisfies the definition of an AoQK, which extends its domain of usefulness with respect to cryptographic applications. In particular, we explicitly construct an extractor which can recover a quantum witness from any prover who is successful in our protocol. We also show that any language in QMA has an (interactive) \emph{proof of quantum knowledge}, again by exhibiting a particular proof system for all languages in QMA and constructing an extractor for it.

\end{abstract}

\renewcommand{\baselinestretch}{0.75}\normalsize
{
  \hypersetup{linkcolor=black}
  \tableofcontents
}
\renewcommand{\baselinestretch}{1.0}\normalsize

\newpage

\section{Introduction}

The paradigm of the interactive proof system is commonly studied in cryptography and in complexity theory. Intuitively speaking, an interactive proof system is a protocol in which an \emph{unbounded} prover attempts to convince an \emph{efficient} verifier that some problem instance $x$ is in some language $L$. The verifier represents an entity less computationally powerful or less informed than the prover; the prover holds some knowledge that the verifier does not (namely, that $x \in L$), and the prover attempts to convince the verifier of this knowledge. We say that there is an interactive proof system \emph{for a language $L$} if the following two conditions are satisfied. Firstly, for any $x \in L$, there must exist a prover (the `honest' prover) which causes the (honest) verifier to accept in the protocol with high probability; and secondly, for any $x \notin L$, there is no prover which can cause the honest verifier to accept, except with some small probability. These two conditions are commonly referred to as the `completeness' and `soundness' conditions. We can also consider a relaxed soundness condition where, when $x \notin L$, we require only that it be computationally intractable (rather than impossible) to cause the verifier to accept. A protocol satisfying this relaxed soundness condition, and which has an efficient honest prover, is known as an interactive \emph{argument} system.

Some interactive proof and argument systems satisfy a third property known as \emph{zero-knowledge} \cite{goldwasser1985knowledge}, which captures the informal notion that the verifier (even a dishonest verifier) `learns no new information' from an interaction with the honest prover, except for the information that $x \in L$. This idea is formalised through a \emph{simulator}, which has the same computational powers as the verifier $V$ does, and can output transcripts that (for $x$ such that $x \in L$) are indistinguishable from transcripts arising from interactions between $V$ and the honest prover. As such, $V$ intuitively `learns nothing', because whatever it might have learned from a transcript it could equally have generated by itself. The property of zero-knowledge can be \emph{perfect} (PZK), \emph{statistical} (SZK) or \emph{computational} (CZK). The difference between these three definitions is the extent to which simulated transcripts are indistinguishable from real ones. In a PZK protocol, the simulator's output distribution is \emph{identical} to the distribution of transcripts that the honest prover and (potentially dishonest) verifier generate when $x \in L$. In SZK, the two distributions have negligible statistical distance, and in CZK, they are computationally indistinguishable. In this work we will primarily be concerned with CZK.

A \emph{non-interactive} proof system (or argument system) is a protocol in which the prover and the verifier exchange only a single message that depends on the problem instance $x$. (In general, an instance-independent setup phase may be allowed in which the prover and verifier communicate, with each other or with a trusted third party, in order to establish shared state that is used during the protocol execution proper. We discuss this setup phase in more detail in the following paragraph.) Non-interactive zero-knowledge (NIZK) proofs and arguments have seen widespread application in classical cryptography, often in venues where their interactive counterparts would be impracticable---including, notably, in CCA-secure public-key cryptosystems \cite{naor1990cca, sahai1999nmcca}, digital signature schemes \cite{bellare1990nizksig, chaum1992wallet, bellare2003groupsig}, verifiable delegated computation~\cite{parno2013pinocchio} and, recently, a number of blockchain constructions \cite{gennaro2013quadratic,zcash14, coda17}. A particularly attractive feature of classical NIZK systems is that they can be amplified \emph{in parallel} to achieve better security parameters \cite{blum1991nizk}, which is in general not true of their interactive (private-coin) counterparts. 

It is known \cite{goldreich1994imposs} that NIZK proofs and arguments in the \emph{standard model} (namely, the model where the only assumption is that adversarial entities are computationally efficient) exist only for languages in BPP. As such, in order to construct NIZK protocols for more interesting languages, it is customary to consider \emph{extended} cryptographic models. Examples of these include the \emph{common reference string} (CRS) model, in which the verifier and the prover are assumed to begin the protocol sharing access to a common string sampled from a specified distribution; and the \emph{random oracle} (RO) model, in which prover and verifier have access to an efficiently evaluable function that behaves like a function sampled uniformly at random from the set of possible functions with some specified, and finite, domain and range. In these extended models, and under certain computational hardness assumptions, non-interactive computational zero-knowledge proof systems for all languages in NP are known. For instance, Blum, Santis, Micali and Persiano \cite{blum1991nizk} showed in 1990 that NIZK proofs for all languages in NP exist in the CRS model, assuming that the problem of quadratic residuosity is computationally intractable.

At this point, a natural question arises: what happens in the \emph{quantum} setting? Ever since Shor's algorithm for factoring \cite{shor1995factoring} was published in 1995, it has been understood that the introduction of quantum computers would render a wide range of cryptographic protocols insecure. For example, quadratic residuosity is known to be solvable in polynomial time by quantum computers. Given that this is so, it is natural to ask the following question: in the presence of quantum adversaries, is it still possible to obtain proof systems for all languages in NP that are complete and sound, and if it is, in which extended models is it feasible? This question has been studied in recent years. For example, Unruh showed in \cite{unruh2015nizk} that quantum-resistant NIZK proof systems for all languages in NP exist in the quantum random oracle (QRO) model, a quantum generalisation of the random oracle model. More recently, Peikert and Shiehian \cite{peikert2019noninteractive} achieved a more direct analogue of Blum et al.'s result, by showing that NIZK proofs for all languages in NP exist in the CRS model, assuming that learning with errors (LWE)---a problem believed to be difficult for quantum computers---is computationally intractable. \footnote{Peikert and Shiehan construct, based on LWE, a NI(C)ZK proof system in the common \emph{reference} string model, and a NI(S)ZK argument system in the common \emph{random} string model. They do not explicitly consider the applications of either result to the quantum setting. We show, however, for our own purposes, that the latter of these results generalises to quantum adversaries. In other words, we show (in section \ref{sec:nizk-np}) that the Peikert-Shiehan NIZK \emph{argument} system in the common \emph{random} string model is adaptively sound against quantum adversaries and adaptively (quantum computational) zero-knowledge.}

However, the advent of large-scale quantum computers would not only render some cryptosystems insecure; it would also provide us with computational powers that extend those of our current classical machines, and give rise to new cryptographic tasks that were never considered in the classical literature. A second natural question which arises in the presence of quantum computers is the following: in which models is it possible to obtain a NIZK proof or argument system not only for all languages in NP, but for all languages in `quantum NP' (i.e. QMA)? Loosely speaking, NIZK protocols for NP languages allow the prover to prove any statement that can be checked efficiently by a classical verifier who is given a classical witness. A NIZK protocol for QMA languages would, analogously, allow the prover to prove to the verifier (in a non-interactive, zero-knowledge way) the veracity of statements that require a quantum witness and quantum computing power to check. To our knowledge, the question of achieving NIZK protocols for QMA has not yet been studied. In 2016, Broadbent, Ji, Song and Watrous \cite{broadbent2016zero} exhibited a zero-knowledge proof system for QMA with an efficient honest prover, but their protocol requires both quantum and classical interaction.

In this work, our first contribution is to propose a non-interactive (computational) zero-knowledge argument system for all languages in QMA, based on the hardness of LWE, in which both verifier and prover are quantum polynomial time. The model we consider is the CRS (common reference string) model, augmented by a single message of (quantum) preprocessing. (The preprocessing consists of an instance-independent quantum message from the verifier to the prover.) The post-setup single message that the prover sends to the verifier, after it receives the witness, is classical; the post-setup verifier is also entirely classical; and, if we allow the prover and verifier to share EPR pairs \emph{a priori}, as in a model previously considered by Kobayashi \cite{kobayashi2002nizk}, we can also make the verifier's preprocessing message classical. Like classical NIZK protocols, our protocol shows itself to be receptive to parallel repetition (see section \ref{sec:parallel-ampl}), which allows us to amplify soundness concurrently without affecting zero-knowledge. Our model and our assumptions are relatively standard ones which can be fruitfully compared with those which have been studied in the classical setting. As such, this result provides an early benchmark of the kinds of assumptions under which NIZK can be achieved for languages in QMA.

Our second contribution is to show that our protocol also satisfies a notion of \emph{argument of quantum knowledge}. In the classical setting, some proof systems and argument systems for NP languages satisfy a stronger notion of soundness
 wherein a witness can be \emph{extracted} from any prover $P$ who convinces the verifier to accept with high probability. More formally, in such a setting, there is an \emph{extractor} machine which---given black-box access to any $P$ who convinces the verifier to accept with high probability (on the input $x$)---is able to efficiently compute a witness $w$ that testifies that the problem instance $x$ is in the language $L$. Such protocols are known as \emph{proofs} and \emph{arguments of knowledge} (PoK and AoK). Intuitively speaking, the notion of PoK/AoK is a framework for describing situations where the prover is not necessarily more powerful, but only \emph{better informed}, than the verifier. In these situations, the prover possesses knowledge (the witness $w$, which could represent a password or some other form of private information) that the verifier does not; and the prover wishes to convince the verifier, possibly in a zero-knowledge way (i.e. without revealing sensitive information), that it indeed `knows' or `possesses' the witness $w$ (so that it might, for example, be granted access to its password-protected files, or cash a quantum cheque). The idea of a machine `knowing' some witness $w$ is formalised by the existence of the extractor. 

Until now, the witness $w$ has always been classical, and the notion of a proof of \emph{quantum} knowledge (PoQK) has not been formally defined or studied. In this paper, we formulate a definition for a PoQK that is analogous to the classical definition of a PoK,\footnote{This definition is joint work with Broadbent and Grilo.} and we exhibit a protocol that is an (interactive) PoQK for any language in QMA.\footnote{This result is also obtained in independent and concurrent work by Broadbent and Grilo~\cite{broadbent2019zeroknowledge}.} We also introduce the notion of an \emph{argument of quantum knowledge} (AoQK), and we prove that our NIZK protocol for QMA is (under this definition) a zero-knowledge argument of quantum knowledge. We present our definitions of PoQK and AoQK in section~\ref{sec: poqk}.

There are two main difficulties in extending the classical notion of a PoK to the quantum setting. The first is that we must precisely specify how the extractor should be permitted to interact with the successful (quantum) prover. For this, we borrow the formalism of quantum interactive machines that Unruh~\cite{unruh2012quantum} uses in defining quantum proofs of \emph{classical} knowledge. The second difficulty is to give an appropriate definition of success for the extractor. In the classical setting, the NP relation $R$ which defines the set of witnesses $w$ for a problem instance $x$ is binary: a string $w$ is either a witness or it is not. In the quantum setting, on the other hand---unlike in the classical case, in which any witness is as good as any other---different witnesses might be accepted with different probabilities by some verification circuit $Q$ under consideration. In other words, some witnesses may be of better `quality' than others. In addition, because QMA is a probabilistic class, the choice of $Q$ (which is analogous to the choice of the NP relation $R$) is more obviously ambiguous than it is in the classical case. Different (and equally valid) choices of verifiers $Q$ for a particular language $L \in \mathrm{QMA}$ might have different probabilities of accepting a candidate witness $\rho$ on a particular instance $x$. In our definition, we define a `QMA relation' with respect to a fixed choice of verifying circuit (family) $Q$; we define the `quality' of a candidate witness $\rho$ for $x$ to be the probability that $Q$ accepts $(x, \rho)$; and we require that the successful extractor returns a witness whose quality lies strictly above the soundness parameter for the $\QMA$ relation.

Since we show that our protocol is a NIZK argument of quantum knowledge under our new definition, it can be used in settings where a prover wishes to prove it `knows' or `possesses' a quantum state. For example, this state could be a quantum money state~\cite{wiesner1983conjugate}. Suppose that the prover participated in the appropriate setup phase with the bank when the quantum money state was minted. Later on, using our protocol, the prover could---by sending a \emph{single classical} message, which can be processed \emph{classically}---demonstrate to the bank that it still possesses that quantum banknote, allowing the prover to cash a quantum cheque or transfer quantum funds.\footnote{For this to be possible, the prover needs to be provided with a homomorphic encryption of a classical description of the quantum money state. In the case of Wiesner's scheme~\cite{wiesner1983conjugate}, this is a Hamiltonian that is already in the Clifford form required for our protocol.} The fact that our protocol is an \emph{argument of quantum knowledge} means that, when it is used in this way (i.e. in the place of a more traditional quantum money verification protocol), it preserves (computationally speaking) the anti-counterfeiting security properties that are central to quantum money. 

Another example of an application in which the unique properties of our protocol might be useful is the setting of \emph{verifiable delegated computation}, in which a prover (who is generally a server to whom a client, the verifier, has delegated a quantum task) wishes to prove to the verifier a statement about a history state representing a certain computation. Suppose  that the prover and the verifier complete the setup phase of our protocol when the delegation occurs. After the setup phase is complete, \emph{the verifier does not need to preserve any quantum information}, meaning that it could perform the setup phase using borrowed quantum resources, and thereafter return to the classical world. When it receives the prover's single-message zero-knowledge proof, the verifier can verify its delegated computation without performing any additional quantum operations---a property that our protocol shares with protocols that have purely classical verification, such as Mahadev's classical-verifier argument system for QMA \cite{mahadev2018classical}. An additional advantage of our protocol, however, is that the server can free the quantum memory associated with the verifier's computation \emph{immediately} after the computation terminates, rather than holding the history state until the verifier is available to perform the verification.

\subsection*{The interactive protocol from~\cite{broadbent2016zero}}

Our protocol is inspired by the protocol exhibited in \cite{broadbent2016zero}, which gives a zero-knowledge (interactive) proof system for any language in $\QMA$. The \cite{broadbent2016zero} protocol can be summarized as follows. (For a more detailed exposition, see section~\ref{sec:bjsw}.)
\begin{enumerate}
\item The verifier and the prover begin with an instance $x$ of some interesting problem, the latter of which is represented by a (promise) language $L = (L_{yes},L_{no})\in \QMA$. The prover wishes to prove to the verifier that $x \in L_{yes}$. The first step is to map $x$ to an instance $H$ of the QMA-complete \emph{local Clifford Hamiltonian problem}. In the case that $x$ is a yes instance, i.e.\ $x\in L_{yes}$, the prover, who receives a witness state $\ket{\Phi}$ for $x$ as auxiliary input, performs the efficient transformation that turns the witness $\ket{\Phi}$ for $x$ into a witness $\ket{\Psi}$ for $H$. (The chief property that witnesses $\ket{\Psi}$ for $H$ have is that $\bra{\Psi} H \ket{\Psi}$ is \emph{small}---smaller than a certain threshold---which, rephrased in physics terminology, means that $\ket{\Psi}$ has \emph{low energy with respect to $H$}.) The prover then sends an \emph{encoding} of $\ket{\Psi}$ to the verifier (under a specified quantum authentication code which doubly functions as an encryption scheme). The prover also \emph{commits} to the secret key of the authentication code.
\item The Clifford Hamiltonian $H$ to which $x$ has been mapped can be written as a sum of polynomially many terms of the form $C^* \ket{0^k} \bra{0^k} C$, where $C$ is a Clifford unitary. (This is the origin of the name `Clifford Hamiltonian'.) The verifier chooses a string $r$ uniformly at random. $r$ plays a role analogous to that of the verifier's choice of edge to check in the 3-colouring zero-knowledge protocol introduced by \cite{goldwasser1985knowledge}: intuitively, $r$ determines the verifier's challenge to the prover. Each $r$ corresponds to one of the terms $C_r^*\ket{0^k}\bra{0^k}C_r$ of the Clifford Hamiltonian.

The verifier then measures the term $C_r^*\ket{0^k}\bra{0^k}C_r$ on the encoded witness (this can be done `homomorphically' through the encoding). The outcome $z$ obtained by the verifier can be thought of as an encoding of the true measurement outcome, the latter of which should be \emph{small} (i.e. correspond to low energy) if $\ket{\Psi}$ is a true witness. The verifier sends $z$ (its measurement outcomes) and $r$ (its choice of Hamiltonian term) back the prover. 
\item Finally, using a zero-knowledge NP proof system,\footnote{It is known that there are quantumly sound and quantumly zero-knowledge proof systems for NP: see \cite{watrous2009zero}.} the prover provides an (interactive) ZK proof for the following NP statement: there \emph{exists} an opening to its earlier (perfectly binding) commitment such that, if the verifier had the opened encoding keys, it \emph{would} accept. This is an NP statement because the witness string is the encoding keys. Proving that the verifier `would accept' amounts to proving that the verifier's measurement outcomes $z$, decoded under the keys which were committed to earlier, would correspond to a low-energy outcome. Because the proof that the prover provides is zero-knowledge, the verifier learns nothing substantial from this exchange, but it becomes convinced that it should accept.
\end{enumerate}

In the protocol from~\cite{broadbent2016zero}, it is critical to soundness that the prover sends the encoding of the witness to the verifier \textit{before} the verifier chooses $r$. The zero-knowledge property holds because the encoding that the prover applies to the witness state functions like an authenticated encryption scheme: its encryption-like properties prevent the verifier from learning anything substantial about the witness while handling the encoded state, and its authentication code--like properties ensure that the verifier cannot deviate very far from its honest behaviour. 

\subsection*{Our non-interactive protocol}

We wish to make the protocol from~\cite{broadbent2016zero} \emph{non-interactive}. To start with, we can replace the prover's proof in step 3 with a NIZK proof in the CRS model. NIZK proofs for all languages in NP have recently been shown to exist~\cite{canetti2019fiat,peikert2019noninteractive} based on the hardness of LWE only, and we prove that the Peikert-Shiehian construction from \cite{peikert2019noninteractive} remains secure (i.e. quantum computationally sound and zero-knowledge) against quantum adversaries, assuming that LWE is quantum computationally intractable. 
 However, the more substantial obstacle to making the \cite{broadbent2016zero} protocol non-interactive is the following: in order to do away with the verifier's message in step 2, it seems that the prover would have to somehow \textit{predict} $z$ (the verifier's measurement outcomes) and send a NIZK proof corresponding to this $z$. Unfortunately, in order for the authentication code to work, the number of possible outcomes $z$ has to be exponentially large (and thus the prover cannot provide a NIZK proof of consistency for each possible outcome). Even allowing for an instance-independent preprocessing step between the verifier and the prover, it is unclear how this impasse could be resolved. 

Our first main idea is to use \emph{quantum teleportation}. We add an instance-independent preprocessing step in which the verifier creates a number of EPR pairs and sends half of each to the prover. We then have the verifier (prematurely) make her measurement from step 2 \emph{during the preprocessing step} (and hence \textit{independently of the instance!}), and send the measurement outcomes $z$ to the prover. Once $x$ is revealed, the prover \textit{teleports} the encoded witness to the verifier, and sends the verifier the teleportation outcomes $d$, along with a commitment to his encoding keys. The prover then provides an NIZK proof of an opening to the committed keys such that $d, z$ and the encoding keys are consistent with a low-energy outcome. The hope is that, because the prover's and the verifier's actions commute (at least when the prover is honest), this protocol will be, in some sense, equivalent to one where the prover firstly teleports the witness, \emph{then} the verifier makes the measurements, and finally the prover sends an NIZK proof. This latter protocol would be essentially equivalent to the \cite{broadbent2016zero} protocol.

There are three main issues with this strategy:
\begin{enumerate}
\item In the preprocessing step, the verifier does not yet know what the instance $x$ (and hence what the Clifford Hamiltonian) is. Thus, she cannot measure the term $C_r^* \ket{0^k} \bra{0^k} C_r$, as she would have done in what we have called step 2 of the protocol from \cite{broadbent2016zero}.
\item The second issue is that the verifier cannot communicate her choice of $r$ in the preprocessing step in the clear. If she does, the prover will easily be able to cheat by teleporting a state that passes the check for the $r$th Hamiltonian term, but that would not pass the check for any other term.
\item The third issue is a bit more subtle. If the prover knows the verifier's measurement outcomes $z$ before he teleports the witness state to the verifier, he can misreport the teleportation outcomes $d$, and make a clever choice of $d$ such that $d,z$ and the committed keys are consistent with a low-energy outcome even when he does not possess a genuine witness.
\end{enumerate}

The first issue is resolved by considering the (instance-independent) verifying circuit $Q$ \emph{for the $\QMA$ language $L$} (recall that $Q$ takes as input both an instance $x$ and a witness state), and mapping $Q$ itself to a Clifford Hamiltonian $H(Q)$. (For comparison, in the protocol from \cite{broadbent2016zero}, it is the circuit $Q(x, \cdot)$ which is mapped to a Clifford Hamiltonian.) In the instance-dependent step, the prover will be asked to teleport a ``history state'' corresponding to the execution of the circuit $Q$ on input $(x, \ket{\Psi})$, where $\ket{\Psi}$ is a witness for the instance $x$. In the preprocessing step, the verifier will measure a uniformly random term from $H(Q)$, and will also perform a special measurement (with some probability) which is meant to certify that the prover put the correct instance $x$ into $Q$ when it was creating the history state. Of course, the verifier does not know $x$ at the time of this measurement, but she will know $x$ at the point where she needs to verify the prover's NIZK proof.

Our second main idea, which addresses the second and the third issues above (at the price of downgrading our proof system to an argument system), is to have the prover \emph{compute his NIZK proof homomorphically}. During the preprocessing step, we have the verifier send the prover a (computationally hiding) commitment $\sigma$ to her choice of $r$; and, in addition, we ask the verifier to send the prover a \textit{homomorphic encryption} of $r$, of the randomness $s$ used to commit to $\sigma$, and of her measurement outcomes $z$. At the beginning of the instance-dependent step, the prover receives a witness $\ket{\Psi}$ for the instance $x$. During the instance-dependent step, and after having received the verifier's ciphertexts in the preprocessing step, we ask the prover firstly to commit to some choice of encoding keys, and then to teleport to the verifier (an encoding of) the history state corresponding to the execution of $Q$ on input $(x, \ket{\Psi})$. Let $d$ be the outcome of the teleportation measurements. After the prover has committed to his encoding keys, we ask the prover to homomorphically encrypt $d$ and his encoding keys, and homomorphically run the following circuit: check that $r,s$ is a valid opening to $\sigma$, and (using the properties of the authentication code) check also that the verifier performed the honest measurement during preprocessing. If all the checks pass, then the prover \emph{homomorphically} computes an NIZK proof that there exist encoding keys consistent with his commitment such that these keys, together with $r,z,d$, indicate that the verifier's measurement result was a low-energy outcome. The homomorphic encryption safeguards the verifier against a malicious prover who may attempt to take advantage of knowing $r$, or of the freedom to cleverly choose $d$, in order to pass in the protocol without holding a genuine witness.

In summary, the structure of our protocol is as follows. Let $Q$ be a QMA verification circuit, and let $H(Q)$ be the Clifford Hamiltonian obtained from $Q$ by performing a circuit-to-Clifford-Hamiltonian reduction. 

\begin{enumerate}
\item \textit{(preprocessing step)} The verifier creates a (sufficiently large) number of EPR pairs, and divides them into `her halves' and `the prover's halves'. She interprets her halves as the qubits making up (an encoding of) a history state generated from an evaluation of the circuit $Q$. Then, the verifier samples $r$ (her `challenge') uniformly at random, and according to its value, does one of two things: either she measures a uniformly random term of $H(Q)$ on `her halves' of the EPR pairs, or she makes a special measurement (on her halves of the EPR pairs) whose results will allow her later to verify that the circuit $Q$ was evaluated on the correct instance $x$. Following this, the verifier samples a public-key, secret-key pair $(pk,sk)$ for a homomorphic encryption scheme. She sends the prover: 
\begin{enumerate}
\item $pk$;
\item the `prover's halves' of the EPR pairs;
\item a commitment to her choice of challenge $r$;
\item homomorphic encryptions of
\begin{enumerate}
\item $r$,
\item the randomness $s$ used in the commitment, and
\item the measurement outcomes $z$.
\end{enumerate}
\end{enumerate}

\item \textit{(instance-dependent step)} Upon receiving $x$, and a witness $\ket{\Psi}$, the prover computes the appropriate history state, and samples encoding keys. Then, he teleports an encoding of the history state to the verifier using the half EPR pairs that he previously received from her. Let $d$ be the teleportation measurement outcomes. The prover sends to the verifier:
\begin{enumerate}
\item $d$;
\item a commitment $\sigma$ to his encoding keys;
\item a homomorphic encryption of a NIZK proof (homomorphically computed) of the existence of an opening to $\sigma$ such that the opened keys, together with $d,z,r$, are consistent with a low-energy outcome.
\end{enumerate}
\end{enumerate}

\subsection*{Analysis}

Our protocol is a non-interactive, zero-knowledge argument system in the CRS model with a one-message preprocessing step. It is straightforward to see that the protocol satisfies completeness.

Intuitively, soundness follows from the fact that the encryptions the prover receives in the preprocessing step should be indistinguishable (assuming the prover is computationally bounded) from encryptions of the zero string. As such, the encryptions of $z,r,s$ (and the commitment to $r$) cannot possibly be helping the prover in guessing $r$ or in selecting a false teleportation measurement outcome $d'$ which makes $z,r,d'$ and the authentication keys consistent with a low-energy outcome. Soundness then essentially reduces to soundness of the protocol in~\cite{broadbent2016zero}.

%Instead of asking the prover to send the witness state to the verifier over a quantum channel, we ask the prover to \emph{teleport} the witness to the verifier, taking advantage of shared entanglement that is established during preprocessing. Since the teleportation turns what was a synchronous process in \cite{broadbent2016zero} into an asynchronous one---in which the prover essentially receives the verifier's measurement outcomes \emph{before} it makes its initial commitment, instead of afterwards---this alteration gives the prover additional power over the verifier. To compensate, we ask the verifier to encrypt its measurement outcomes using \emph{homomorphic encryption}, and show that the security properties of the encryption are sufficient to guarantee soundness. In addition, we replace the (interactive) NP zero-knowledge proof protocol used at the end of the \cite{broadbent2016zero} protocol with a \emph{non-interactive} NP zero-knowledge proof protocol that is sound and zero-knowledge against quantum adversaries. That such a proof system exists follows from \cite{peikert2019noninteractive}; we show this in section \ref{sec:nizk-np}, as Peikert and Shiehian do not explicitly consider quantum adversaries\tnote{maybe this could be moved to earlier --- see my previous comment}.

The zero-knowledge property follows largely from the properties of the protocol in \cite{broadbent2016zero} that allowed Broadbent, Ji, Song and Watrous to achieve zero-knowledge. One key difference is that, in order to avoid rewinding the (quantum) verifier, the authors of \cite{broadbent2016zero} use the properties of an \emph{interactive coin-flipping protocol} to allow the efficient simulator to recover the string $r$ (recall that $r$ determines the verifier's challenge) with probability 1. (The traditional alternative to this strategy is to have the simulator guess $r$, and rewind the verifier if it guessed incorrectly in order to guess again. This is typical in classical proofs of zero-knowledge \cite{goldwasser1985knowledge}. However, because quantum rewinding \cite{watrous2009zero} is more delicate, the authors of \cite{broadbent2016zero} avoid it for simplicity.) As our protocol is non-interactive, we are unable to take the same approach. Instead, we ask the verifier to choose $r$ and commit to it using a commitment scheme with a property we call \emph{extractability}. Intuitively, extractability means that the commitment scheme \emph{takes a public key determined by the CRS}. We then show that the simulator can efficiently recover $r$ from the verifier's commitment by taking advantage of the CRS. For a candidate LWE-based extractable commitment scheme, see section \ref{sec: commitment}.

Another subtlety, unique to homomorphic encryption, is that the verifier may learn something about the homomorphic computations performed by the prover (and hence possibly about the encoding keys) by looking at the \emph{encryption randomness} in the encryption (of an NIZK proof) that the prover sends the verifier. (Recall that the verifier possesses the decryption key $sk$ for the homomorphic encryption scheme.) This leads us to require the use of a fully homomorphic encryption scheme which satisfies the property of \emph{circuit privacy}. For a definition of this property, see section \ref{sec:hom-enc}.

\subsection*{A non-interactive argument of quantum knowledge}
One desirable feature of our non-interactive argument system is that it is also an \textit{argument of quantum knowledge}. As we mentioned earlier, one of our contributions is to generalize the definitions of PoKs and AoKs for NP-relations to definitions of PoKs and AoKs for \emph{QMA relations}. In the latter setting, the prover wishes to convince the verifier that he `knows' or `possesses' the quantum witness for an instance of a $\QMA$ problem. In order to show that our protocol satisfies this additional property, we need to exhibit an extractor that, for any yes instance $x$, and given quantum oracle access to any prover that is accepted with high probability in our protocol, outputs a quantum state which is a witness for $x$. In section \ref{sec: aoqk}, we explicitly construct such an extractor $K$ for our non-interactive protocol. The intuition is the following. $K$ (the extractor) has oracle access to a prover $P^*$, and it simulates an execution of the protocol between $P^*$ and the honest verifier $V$. We show that, if $P^*$ is accepted in our protocol with sufficiently high probability, then it must teleport to $V$ (and hence to $K$) the \textit{encoding} $\tilde{\rho}$ of a witness state, and a commitment $\sigma$ to the encoding keys. If $K$ knew the encoding keys, it would be able to decode $\tilde{\rho}$, but it is not clear \emph{a priori} how $K$ could obtain such keys. Crucially, the same feature of our protocol that allows the \emph{zero-knowledge} simulator to extract $r$ from the verifier's commitment to $r$ also plays in $K$'s favour: when $K$ simulates an execution of the protocol, it samples a common reference string which is given to both $V$ and $P^*$, and in our protocol, the CRS contains a public key which $P^*$ uses to make his commitment. As such, in order to extract a witness from $P^*$, the extractor samples a CRS containing a public key $pk$ for which it knows the corresponding secret key $sk$, and provides this particular CRS as input to $P^*$. Then, when $K$ receives $\tilde{\rho}$ and $\sigma$ from $P^*$, it is able to extract the committed keys from $\sigma$, and use these to decode $\tilde{\rho}$.

\subsection*{An interactive proof of quantum knowledge}
Our non-interactive protocol is an \emph{argument system}, which means that it is sound only against computationally bounded provers. In Section~\ref{sec:bjsw-poqk}, we introduce a separate but complementary result to our NIZK argument (of knowledge) for QMA by showing that the zero-knowledge proof system for QMA exhibited in~\cite{broadbent2016zero} (with some minor modifications) is also a \emph{proof of quantum knowledge}. 

\subsection*{Related work}

A number of other recent works have explored questions that bear some degree of resemblance to those which we have considered. We compare our results with those obtained in these works.

\begin{enumerate}
\item In \cite{broadbent2019zeroknowledge}, Broadbent and Grilo present the definition of proofs of quantum knowledge that we arrived at in collaboration with them, and also exhibit both a \emph{proof of quantum knowledge for QMA} and a \emph{non-interactive proof system (with preprocessing) for QMA}. The Broadbent-Grilo non-interactive protocol has the advantages that it is both a proof system and statistically zero-knowledge (while our protocol is an argument system and computationally zero-knowledge), and that in addition it is a proof system not only for QMA but also for QAM. However, their extended model is the \emph{secret parameters} model, in which a trusted third party generates two classical strings $r_V$ and $r_P$, which are in general correlated, and shares $r_V$ with the verifier only and $r_P$ with the prover only. Our protocol is secure in the \emph{common reference string model with one message of preprocessing}, which is closer to models considered standard in the classical NIZK literature, and which (since the single message of preprocessing can be generated by the verifier and sent to the prover, so that the only role the trusted third party plays is to generate the CRS) lends itself more easily to cryptographic applications. Our protocol also has the feature that the verifier's quantum actions are entirely in the preprocessing phase, and are thus instance-independent, which again makes our protocol more attractive for practical applications while quantum resources remain considerably more expensive than classical ones.
\item In \cite{bitansky2019postquantum}, Bitansky and Shmueli present (as a corollary to their main result) a \emph{constant-round} computationally zero-knowledge argument system for QMA, based on secure quantum fully homomorphic encryption and post-quantumly secure LWE. (The former can be constructed from post-quantum LWE together with a circular security assumption.) The key ideas in their approach are somewhat similar to those in ours. Bitansky and Shmueli perform their analysis in the standard model, meaning that their result does not require a CRS. However, their protocol is constant-round rather than non-interactive, and the security of their protocol requires stronger cryptographic assumptions. (We require only that LWE is secure against quantum computers, because we use \emph{classical} and not quantum fully homomorphic encryption.)
\item In \cite{alagic2019twomessage}, Alagic, Childs, and Hung construct (based on the classical-verifier protocol of \cite{mahadev2018classical}) a non-interactive \emph{classical-verifier} argument system for QMA, which, like ours, requires one message of preprocessing. Their verifier is entirely classical, but they consider neither zero-knowledge nor proof-of-knowledge properties. Their result is also obtained in a different extended model from ours (namely the \emph{quantum random oracle} model).
\end{enumerate}

\subsection*{Open questions}

One of the intriguing properties of our protocol is that it is both computationally sound and computationally zero-knowledge. It would be interesting to obtain a non-interactive protocol with a similarly simple preprocessing phase for which at least one of these properties (soundness or zero-knowledge) can be shown to hold information-theoretically. (This is the case for the NIZK proof for NP languages from~\cite{peikert2019noninteractive}: the authors show that it is possible to obtain both NI(S)ZK arguments and NI(C)ZK proofs from LWE assumptions.)

It is also interesting to ask how far our preprocessing phase could be weakened or modified. Our protocol is non-interactive in the \emph{CRS model with a single message of quantum preprocessing}. In the classical literature, a broad variety of models for achieving NIZK exists, and it seems plausible (especially given the array of recent work we have outlined in the previous section) that the potential for variety in the quantum setting is even higher. It might be fruitful to ask: in which \emph{other} models can NIZK for QMA be achieved? In particular, is it possible to weaken our preprocessing phase (while minimally perturbing the other desirable properties of our scheme, including the fact that the instance-dependent phase only involves classical communication and classical verification) so that \emph{all} the information generated during the preprocessing phase is---like the CRS---generated by a trusted entity, and shared by this trusted entity with both prover and verifier? (Currently, the verifier in our protocol generates some private information during the preprocessing stage, including a secret key for the homomorphic encryption scheme.)

\begin{comment}
We believe that the same idea that we use to obtain an argument of quantum knowledge (i.e. the idea of extracting the prover's encoding keys from his commitment) can be leveraged to extend the zero-knowledge proof system for QMA from \cite{broadbent2016zero}, in a natural way, to a (interactive) zero-knowledge \emph{proof of quantum knowledge}. This could be done by requiring the prover in the protocol from \cite{broadbent2016zero} to send, along with his commitment to his encoding keys, a (classical) proof of knowledge, $\pi$, of an opening to his commitment. The extractor $K$ would then proceed similarly to the extractor which we designed for our protocol---except that, in order to extract the prover's encoding keys, $K$ would invoke the extractor for $\pi$ (rather than using the CRS). We think that this would be a simple and direct way to obtain (interactive) \emph{proofs} (and not only arguments) of knowledge for QMA relations.
\end{comment}

The technique we proposed to remove interaction from the protocol of \cite{broadbent2016zero} is based on two main ingredients: the use of quantum teleportation, which allows the verifier to \textit{anticipate} her measurements of the state she receives from the prover in the instance-dependent step, and the use of classical homomorphic encryption to allow the prover to demonstrate (homomorphically) that he has performed a certain computation correctly. These two ingredients work in tandem to ensure that the soundness and the zero-knowledge property of the \cite{broadbent2016zero} protocol are preserved. We believe that this technique could find use more broadly. In particular, it may be applicable as a general (soundness and zero-knowledge preserving) transformation to any interactive proof system for QMA with an efficient honest prover. We leave a more thorough investigation of this as a direction for future work. 

\paragraph{Acknowledgements.} We thank Anne Broadbent and Alex Grilo for useful discussions on the definition of proofs and arguments of quantum knowledge. We thank Zvika Brakerski and Vinod Vaikuntanathan for useful correspondence. 

Andrea Coladangelo is supported by AFOSR YIP award number FA9550-16-1-0495.
Thomas Vidick is supported by NSF CAREER Grant CCF-1553477, AFOSR YIP award number FA9550-16-1-0495, a CIFAR Azrieli Global Scholar award, MURI Grant FA9550-18-1-0161 and the IQIM, an NSF Physics Frontiers Center (NSF Grant PHY-1125565) with support of the Gordon and Betty Moore Foundation (GBMF-12500028).

\section{Preliminaries}

\subsection{Notation}

For an integer $\ell\geq 1$, $[\ell]$ denotes the set $\{1,\ldots,\ell\}$. We use $\poly(n)$ and $\negl(n)$ to denote an arbitrary polynomial and negligible function of $n$ respectively (a negligible function $f$ is any computable function such that $f(n)q(n)\to_{n\to \infty} 0$ for all polynomials $q$). 
For an integer $d\geq 1$, 
$\Density(\C^d)$ denotes the set of density matrices on $\C^d$, i.e.\ positive semidefinite $\rho$ on $\C^d$ such that $\Tr(\rho)=1$. For a set $S$ and an element $s \in S$, we write $s \xleftarrow{\$} S$ to mean that $s$ is sampled uniformly at random from $S$. For an integer $l$, we denote by $\{0,1\}^{\leq l}$ the set of binary strings of length at most $l$. We use the notation $S_N$ to denote the set of all permutations of a set of $N$ elements.

We use the terminology PPT for \emph{probabilistic polynomial time} and QPT for \emph{quantum polynomial time} to describe algorithms; formally, a PPT (resp. QPT) procedure $A$ is a uniformly generated family $\{A_n\}$ of classical (resp. quantum) polynomial-size circuits  such that $A_n$ takes classical (resp. quantum) inputs of length $\poly(n)$.

\subsection{The local Hamiltonian problem}

The \emph{local Hamiltonian problem} is a problem that is known to be complete for the complexity class \QMA, meaning that any problem in \QMA can be reduced efficiently to an instance of the local Hamiltonian problem. Informally stated, Kitaev's proof of $\QMA$-completeness of the local Hamiltonian problem~\cite{kitaev2002classical} shows that, given as input the classical description of an $(n+m)$-qubit circuit $U = U_T \cdots U_1$ acting on $n$ input qubits and $m$ ancilla qubits, where each $U_i$ is a $2$-qubit gate,\footnote{For convenience we assume that each gate is taken from the universal gate set $\Lambda(P),H\otimes H\}$ specified in~\cite{broadbent2016zero}. Here $\Lambda(P)$ is a controlled phase gate, and $H$ is the Hadamard gate.}  it is possible to efficiently compute the description of a $5$-local Hamiltonian 
\begin{equation}\label{eq:h-cliff}
H = H_{\textsf{in}} + H_{\textsf{out}} + H_{\textsf{prop}} + H_{\textsf{clock}}
\end{equation}
acting on $(n+m+T)$ qubits with the following properties:
\begin{enumerate}
\item whenever there is an input state of the form $\ket{\psi}_{\textsf{in}}\ket{0}_{\textsf{anc}}$ such that $U$ returns the computational basis outcome `$1$' with high probability on input $\ket{\psi}_{\textsf{in}}\ket{0}_{\textsf{anc}}$, then the smallest eigenvalue of $H$ is less than some threshold $a$,
\item whenever $U$ does not return the outcome `$1$' with high probability on any state of the form $\ket{\psi}_{\textsf{in}}\ket{0}_{\textsf{anc}}$, then all eigenvalues of $H$ are larger than a certain threshold $b$,
\item $b-a$ scales at least inverse polynomially with $T$,
\item $H$ has the following structural properties. Call the first $n$ qubits on which $H$ acts the \emph{input} qubits (labeled $(\textsf{in},i)$ for $i\in\{1,\ldots,n\}$), the next $m$ the \emph{ancilla} qubits (labeled $(\textsf{anc},i)$ for $i\in\{1,\ldots,m\}$), and the last $T$ the \emph{clock} qubits (labelled $(\textsf{clock},i)$ for $i\in\{1,\ldots,T\}$).
\begin{enumerate}
\item The term $H_{\textsf{out}}$ is a single $2$-local projection $\proj{0}_{\textsf{in},1} \otimes \proj{1}_{\textsf{clock},T}$ (we assume without loss of generality that the first input qubit is also the designated output qubit).
\item The term $H_{\textsf{clock}}$ is the sum of $2$-local projections $\proj{0}_{\textsf{clock},t} \otimes \proj{1}_{\textsf{clock},t+1}$ for $t\in\{1,\ldots,T-1\}$.
\item The term $H_{\textsf{prop}}$ is the sum of $5$-local terms, each of which acts on $3$ clock qubits and $2$ input or ancilla qubits and has the form $C^* \proj{0} C$ where $C$ is a $5$-qubit unitary.
\end{enumerate}
\end{enumerate}

In \cite{broadbent2016zero}, the authors show that it is possible to strengthen the reduction from~\cite{ksv} by showing an efficient reduction to a special class of $5$-local Hamiltonians, namely $5$-local \emph{Clifford} Hamiltonians. A $5$-local Clifford Hamiltonian is such that the terms in $H_{\textsf{prop}}$ have the form $C^* \proj{0} C$ where $C$ is a $5$-qubit \emph{Clifford} unitary. The following lemma expands on the result from \cite{broadbent2016zero}. While the first two items in the lemma are standard, the third is implicit in the proofs. We make it explicit for the application to proofs of quantum knowledge. 

\begin{lem}\label{lem:5local}
There is a polynomial $p$ such that the following hold. There is an efficient classical reduction that takes as input a $T$-gate quantum circuit $U$ such that $U$ has $n$ input qubits, $m$ ancilla qubits and one designated output qubit and returns a $5$-local Clifford Hamiltonian $H$ of the form~\eqref{eq:h-cliff} such that the following hold:
\begin{enumerate}
\item If there exists an $n$-qubit state $\ket{\psi}$ such that $U$ returns `$1$' on input $\ket{\psi}_{\textsf{in}}\ket{0^m}_{\textsf{anc}}$ with probability $1-\eps$ for some $0\leq\eps\leq 1$, then $H$ has an eigenvalue at most $\eps$;
\item If for all $\ket{\psi}$, $U$ returns `$1$' on input $\ket{\psi}_{\textsf{in}}\ket{0^m}_{\textsf{anc}}$ with probability at most $\frac{1}{2}$, then all eigenvalues of $H$ are at least $\frac{1}{p(T)}$;
\item There exists an efficiently computable quantum map $\tau$ (depending on $U$ only) from $(n+m+T)$ to $n$ qubits such that for any $\rho$ such that $\Tr(H\rho)\leq \delta$ it holds that a measurement of the output qubit of $U$ in the computational basis returns `$1$' on input $\tau(\rho)_{\textsf{in}} \otimes \proj{0^m}_{\textsf{anc}}$ with probability at least $1-\delta^{1/4}\poly(T,m)$.
\end{enumerate}
\end{lem}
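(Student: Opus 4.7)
The construction combines Kitaev's history-state reduction with the Clifford-ization trick from~\cite{broadbent2016zero}. Each non-Clifford gate of $U$ is replaced by a gate-teleportation gadget that acts as a Clifford on the computational register together with an additional ``magic'' ancilla qubit; this produces a Clifford circuit $\tilde U = \tilde U_{T}\cdots \tilde U_1$ on $n$ inputs and an enlarged ancilla register that computes the same function as $U$ whenever the magic ancillas are prepared in the appropriate states. Applying the standard $5$-local history-state construction with a unary clock to $\tilde U$ yields an $H$ of the form~\eqref{eq:h-cliff} whose propagation terms have the required Clifford form. Item~1 follows by plugging the history state $\ket{\eta_\psi}=\frac{1}{\sqrt{T+1}}\sum_t (\tilde U_t\cdots \tilde U_1)\ket{\psi}_{\textsf{in}}\ket{0}_{\textsf{anc}}\otimes \ket{\hat t}_{\textsf{clock}}$ into $H$ and computing directly that $\bra{\eta_\psi}H\ket{\eta_\psi}\leq \eps$. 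Item~2 is Kitaev's geometric-lemma argument, which is unaffected by the Clifford restriction: $H_{\textsf{prop}}+H_{\textsf{clock}}$ has an inverse-polynomial spectral gap with null space spanned by history states of arbitrary inputs, and $H_{\textsf{in}}+H_{\textsf{out}}$ penalizes histories whose inputs do not make $U$ accept with high probability.

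The new content is the extraction map $\tau$ in item~3. I would define the Clifford unitary
\[
W \;=\; \sum_{t=0}^{T}(\tilde U_t\cdots \tilde U_1)^\dagger\otimes \ketbra{\hat t}{\hat t}_{\textsf{clock}},
\]
which rotates the history state of any input $\ket{\psi}$ into the product form $\ket{\psi}_{\textsf{in}}\ket{0}_{\textsf{anc}}\otimes \ket{\omega}_{\textsf{clock}}$, where $\ket{\omega}=\frac{1}{\sqrt{T+1}}\sum_t \ket{\hat t}$. The map $\tau$ applies $W$ to $\rho$, traces out the clock and ancilla registers (both the original ancillas and the magic ancillas introduced by the gadgets), and returns the resulting state on the $n$-qubit input register. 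Since $W$ depends only on $U$ and is efficiently implementable (being Clifford), this map has the required form.

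The main obstacle is the quantitative $\delta^{1/4}\poly(T,m)$ bound. The plan is a chain of three reductions. First, use $\Tr((H_{\textsf{prop}}+H_{\textsf{clock}})\rho)\leq \delta$ together with the inverse-polynomial spectral gap to conclude, via a gentle-measurement argument, that $\rho$ is within trace distance $O(\sqrt{\delta}\,\poly(T))$ of a state $\rho'$ supported on the history subspace $W^\dagger(\mathcal H_{\textsf{in,anc}}\otimes \ketbra{\omega}{\omega})W$. Second, use $\Tr(H_{\textsf{in}}\rho)\leq \delta$ combined with the preceding bound to conclude that the reduced state of $W\rho' W^\dagger$ on the ancilla register is within trace distance $O(\sqrt{\delta}\,\poly(T,m))$ of $\ketbra{0^m}{0^m}$. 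Third, use $\Tr(H_{\textsf{out}}\rho)\leq \delta$ to conclude that measuring the output qubit of $U$ on input $\tau(\rho)_{\textsf{in}}\otimes \ketbra{0^m}{0^m}_{\textsf{anc}}$ returns $1$ with probability at least $1-O(\sqrt{\delta}\,\poly(T,m))$. The $\delta^{1/4}$ exponent arises because converting an energy bound to a trace-distance bound via the spectral gap costs one square root, and the subsequent conversion from a trace-distance bound to a bound on a probability (via Fuchs--van de Graaf) costs a second square root. The delicate step is the first reduction, which requires Kitaev's spectral analysis together with a careful application of Jordan's lemma to control the distance from $\rho$ to a physical history state in terms of only a single square root of the spectral gap.
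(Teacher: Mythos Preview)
Your proposal is essentially the same approach as the paper's. Your coherent extraction unitary $W$ followed by tracing out clock and ancillas defines the same channel as the paper's map, which measures the clock in the computational basis, applies the corresponding inverse $U_1^\dagger\cdots U_t^\dagger$, and then traces out the ancillas; the three-step analysis (project to the history subspace via the gap of $H_{\textsf{clock}}+H_{\textsf{prop}}$, then use $H_{\textsf{in}}$, then $H_{\textsf{out}}$) also matches.

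One small correction on the bookkeeping: the second square root in $\delta^{1/4}$ does not come from Fuchs--van~de~Graaf. After the first step you know the state is trace-distance $O(\sqrt{\delta}\,\poly(T))$ from a history state $\ket{\psi''}$, so $\bra{\psi''}H_{\textsf{in}}\ket{\psi''}=O(\sqrt{\delta}\,\poly(T,m))$; this is again an \emph{energy} bound, and converting it to a trace-distance bound between $\ket{\psi_0}$ and $(I\otimes\proj{0^m})\ket{\psi_0}$ costs the second square root. Trace distance then directly bounds the acceptance probability without a further loss. Also, no Jordan's-lemma argument is needed for the first step---a direct spectral-gap inequality suffices.
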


\begin{proof}
We refer to the proof of~\cite[Theorem 2]{broadbent2016zero} for details on the construction of $H$ from $U$. The first two properties claimed in the lemma are shown explicitly in~\cite{broadbent2016zero}. To show the last, we first define the map $\tau$.  Given an $(n+m+T)$-qubit state $\rho$ the map first measures the $T$ clock qubits in the computational basis to obtain an outcome $t$. If $t \notin \{1^i 0^{T-i} : \, i\in \{0,\ldots,T\}\}$ then $\tau$ returns the state $\proj{0^n}$. Otherwise, let $\rho_t$ be the $(n+m)$-qubit post-measurement state. Then $\tau$ returns 
\[ \Tr_{\textsf{anc}} \big(U_1^\dagger \cdots U_t^\dagger (\rho_t) U_t \cdots U_1 \big)\;,\]
where $\Tr_{\textsf{anc}}$ traces out the $m$ ancilla qubits. 

We show that $\tau$ has the desired properties. Let $\rho$ be a state such that $\Tr(H \rho) \leq \delta$. Assume without loss of generality that $\rho = \proj{\psi}$ is a pure state. By definition $H_{\textsf{clock}}$ commutes with $H_{\textsf{in}}$, $H_{\textsf{out}}$, and $H_{\textsf{prop}}$. Moreover, $H_{\textsf{clock}}$ has smallest nonzero eigenvalue $1$, and as shown in Eq.~(14.16) in~\cite{ksv} $H_{\textsf{prop}}$ has smallest nonzero eigenvalue $\Omega(1/T^2)$. Using that $\bra{\psi} H_{\textsf{clock}} + H_{\textsf{prop}} \ket{\psi} \leq \delta$ it follows that if $\ket{\psi'}$ is the (sub-normalized) projection of $\ket{\psi}$ on the ground space (i.e. the zero-eigenvalue space) of $H_{\textsf{clock}} + H_{\textsf{prop}}$ then 
\begin{equation}\label{eq:overlap-1}
|\bra{\psi}\psi'\rangle| \,\geq\, 1-O\big(\sqrt{\delta} T\big)\;. 
\end{equation}
%\tnote{And I get $\sqrt{\delta}T$... This way: since $H_{\textsf{clock}}$ and $H_{\textsf{prop}}$ commute, the smallest nonzero eigenvalue of $H'=H_{\textsf{clock}}+H_{\textsf{prop}}$ is $1/T^2$. 
%Write $\ket{\psi} = \ket{\psi'} + \ket{\psi^\perp}$ where $\ket{\psi'}$ is the projection on the $0$ eigenspace, so $H'\ket{\psi'}=0$. Then $\delta \geq \bra{\psi} H' \ket{\psi} = \bra{\psi^\perp} H' \ket{\psi^\perp} \geq \|\ket{\psi^\perp}\|^2/T^2$. Then $|\bra{\psi}\psi'\rangle|\geq 1-|\bra{\psi}\psi^\perp\rangle| \geq 1- \| \ket{\psi^\perp}\|$}

 Since $\ket{\psi'}$ is in the ground space of $H_{\textsf{clock}} + H_{\textsf{prop}}$, after re-normalization it can be written as 
\[ \ket{\psi''} = \ket{\psi'}_{normalised} = \frac{1}{\sqrt{T+1}}\sum_{i=0}^T \big( U_i\cdots U_1 \ket{\psi_0}_{\textsf{in},anc} \big)\otimes \ket{1^i 0^{T-i}}_{\textsf{clock}}\;,\]
where $\ket{\psi_0}_{\textsf{in},anc}$ is an arbitrary $(n+m)$-qubit state. Note that $|\braket{\psi | \psi'}| = |\braket{\psi | \psi''}|^2$. Using that $\bra{\psi} H_{\textsf{in}}+H_{\textsf{out}} \ket{\psi} \leq \delta$ and~\eqref{eq:overlap-1} it follows that 
\[ \bra{\psi''} H_{\textsf{in}}+H_{\textsf{out}} \ket{\psi''} \,=\, O\big(\sqrt{\delta} T\big)m\;,\]

where the extra factor of $m$ comes from $\|H_{\textsf{in}} + H_{\textsf{out}} \|\leq m+1$. From the definition of $H_{\textsf{in}}$ it follows that 
\begin{equation}\label{eq:overlap-2}
 \big\| \ket{\psi_0} - (I_{\textsf{in}} \otimes \proj{0^m}_{\textsf{anc}})\ket{\psi_0} \big\|^2 \,=\, O\big(\sqrt{\delta}T\big) Tm^2 \;,
\end{equation}

where the extra $m$ is due to there being $m$ ancilla qubits, and the extra $T$  is due to the fact that the first clock qubit in $\ket{\psi'}$ is a $0$ with probability only $1/(T+1)$. Similarly,  
\begin{equation}\label{eq:overlap-3}
\big|(\bra{1}_{\textsf{out}} \otimes I) U \ket{\psi_0} \big| \geq 1 -  O\big(\sqrt{\delta} T\big) Tm \;.
\end{equation}
As such, there exists a $\delta' =O({\delta}^{1/4} (mT)^c)$ for some universal $c>0$ and a  state 
\[\ket{\phi} = \frac{(I \otimes \bra{0^m}_{\textsf{anc}}) \ket{\psi_0}}{\| (I \otimes \bra{0^m}_{\textsf{anc}}) \ket{\psi_0} \|}\]
such that by~\eqref{eq:overlap-2} $\ket{\psi}$ (recall that $\rho = \proj{\psi}$, where $\rho$ is a state such that $\Tr(H \rho) \leq \delta$) is within trace distance at most $\delta'$ from the history state $\ket{\phi_{hist}}$ associated with the computation of $U$ on input $\ket{\phi}_{\textsf{in}} \ket{0^m}_{\textsf{anc}}$, and furthermore, using~\eqref{eq:overlap-3}, on the input $\ket{\phi}_{\textsf{in}} \ket{0^m}_{\textsf{anc}}$ $U$ returns `$1$' with probability at least $1-\delta'$. When given as input the history state $\ket{\phi_{hist}}$ the map $\tau$ returns the state $\ket{\phi}$. By contractivity of the trace distance under quantum maps it follows that on input $\ket{\psi}$ the map $\tau$ returns a state on which $U$ returns the outcome `$1$' with probability at least $1-2\delta'$.
\end{proof}

\subsection{The \cite{broadbent2016zero} protocol}
\label{sec:bjsw}

The following exposition is taken from \cite{vz19}.

In \cite{broadbent2016zero}, Broadbent, Ji, Song and Watrous describe a protocol involving a quantum polynomial-time verifier and an unbounded prover, interacting quantumly, which constitutes a zero-knowledge proof system for languages in QMA. (Although it is sound against arbitrary provers, the system in fact only requires an honest prover who is provided with a single witness state to perform quantum polynomial-time computations.) We summarise the steps of their protocol below. For details and fuller explanations, we refer the reader to~\cite[Section 3]{broadbent2016zero}.

\emph{Notation.}
Let $L$ be any language in QMA. For a definition of the \emph{$k$-local Clifford Hamiltonian problem}, see~\cite[Section 2]{broadbent2016zero} (this is the defined analogously to the $k$-local Hamiltonian problem, except that the Hamiltonian instance consists of Clifford terms, as introduced in the previous subsection). The $k$-local Clifford Hamiltonian problem (with exponentially small ground state energy) is QMA-complete for $k=5$; therefore, for all possible inputs $x$, there exists a 5-local Clifford Hamiltonian $H$ (which can be computed efficiently from $x$) whose terms are all operators of the form $C^* \ket{0^k}\bra{0^k} C$ for some Clifford operator $C$, and such that
\begin{enumerate}
\item if $x \in L$, the ground energy of $H$ is $\leq 2^{-p}$,
\item if $x \notin L$, the ground energy of $H$ is $\geq \frac{1}{q}$, 
\end{enumerate}
for some positive integers $p$ and $q$ which are bounded above by polynomials in $|x|$.

\emph{Parties.}
The proof system involves 
\begin{enumerate}
\item A \emph{verifier}, who implements a quantum polynomial-time procedure;
\item A \emph{prover}, who is unbounded, but who is only required by the protocol to implement a quantum polynomial-time procedure.
\end{enumerate}
The verifier and the prover communicate quantumly.

\emph{Inputs.} 
\begin{enumerate}
\item Input to the verifier:
\begin{enumerate}
\item The Hamiltonian $H$.
\item A quantum computationally concealing, perfectly binding (classical) commitment protocol. In this section, we refer to the commitment algorithm from this protocol as \textsf{commit}; $\textsf{commit}(\mu, s)$ takes as input a message $\mu$ and a random string $s$ and produces a commitment string $z$.
\item A proof system for NP sound against arbitrary quantum provers.
\end{enumerate}
\item Input to the prover:
\begin{enumerate}
\item The Hamiltonian $H$.
\item The $n$-qubit quantum state $\rho$, where $\rho$ is a ground state of the Hamiltonian $H$.
\item A quantum computationally concealing, perfectly binding (classical) commitment protocol.
\item A proof system for NP sound against arbitrary quantum provers.
\end{enumerate}
\end{enumerate}

\emph{Protocol.} 
\begin{enumerate}
\item \emph{The prover's encoding step}. The prover applies the following encoding to the witness state $\rho$.
\begin{figure}[H]
\rule[1ex]{16.5cm}{0.5pt}
\raggedright

\textbf{\textsf{Auth.Enc}:}

Parameters: $N(\cdot)$, a polynomially bounded function in $|x|$. ($N$ functions as a security parameter.)

Input: An $m$-qubit state $\rho$.

The prover firstly applies a concatenated Steane code (which maps every one qubit to $N(|x|)$ qubits) to each qubit in $\rho$. (For details on the concatenated Steane code, see~\cite[Appendix A.6]{broadbent2016zero}. It will be important to Broadbent et al.'s purposes, and ours, that this code admits transversal applications of Clifford operations.) It then executes the following steps:
\begin{enumerate}[topsep=0.8\baselineskip]
\item Concatenate $N$ trap qubits to the end of each logical qubit (alternatively, to the end of each $N$-qubit block) in the result of applying the concatenated Steane code to $\rho$. Each trap qubit is initialised uniformly at random to one of $\ket{0}, \ket{+}, \ket{+_y}$. ($\ket{+_y}$ here refers to the state $\frac{1}{\sqrt{2}}(\ket{0} + i \ket{1})$.) Denote the string that records the choices of trap qubits for all $m$ logical qubits by $t = t_1, \ldots, t_N \in \{\ket{0}, \ket{+}, \ket{+_y} \}^{mN}$.
\item Permute each $2N$-tuple of qubits in the result of (a) according to a uniformly random permutation $\pi \in S_{2N}$. (Note that the same permutation $\pi$ is applied to every $2N$-tuple.)
\item Apply a Pauli one-time pad $X^a Z^b$, for uniformly random $a, b \in \{0,1\}^{2mN}$, to the entire $2mN$-qubit state.
\end{enumerate}

\rule[2ex]{16.5cm}{0.5pt}\vspace{-.5cm}
\caption{The authentication code}
\label{fig: authentication code}
\end{figure}
We refer to $t,\pi,a,b$ as `the authentication keys' or `the encoding keys'.

The prover's encoding applied to $\rho$ is denoted by $E(\rho)$, and the procedure $E$ is fully determined by the encoding key $(t, \pi, a, b)$ which the prover chose to use. At this point, the prover sends the state $E(\rho)$ to the verifier, along with a commitment (using some perfectly binding, computationally concealing classical commitment protocol) to the tuple $(\pi, a, b)$. (A commitment to the sequence of trap qubits $t$ is unnecessary because, in a sense, the trap qubits exist only to check the verifier.) Let the prover's commitment string be denoted $z$.
\item \emph{Coin-flipping protocol.} The prover and the verifier execute a coin-flipping protocol, choosing a string $r$ of fixed length uniformly at random. This random string $r$ determines a local Hamiltonian term $H_r = C^*_r \ket{0^k} \bra{0^k} C_r$ that is to be tested. (This step can be implemented \cite{damgaard2009coinflip} using the same classical commitment protocol that the prover employed in the previous step.)
\item \emph{Verifier's challenge.} The verifier applies the Clifford $C_r$ transversally to the qubits on which the $k$-local Hamiltonian term $H_r$ acts nontrivially, and measures them in the standard basis. It then sends the measurement results $u_{i_1}, \ldots, u_{i_k}$ which it obtained to the prover. (Each $u_i$ is a $2N$-bit string, and $i_1, \ldots, i_k$ are the indices of the logical qubits on which the term $H_r$ acts nontrivially.)
\item \emph{Prover's response}. The prover receives the verifier's measurement results $u$, and firstly checks whether they cause a predicate $\tilde{Q}(t, \pi, a, b, r, u)$ to be satisfied. (We will explain the predicate $\tilde{Q}$ in more detail shortly. Intuitively, $\tilde{Q}$ is satisfied if and only if both verifier \emph{and} prover behaved honestly. Note that we have used the notation $\tilde{Q}$ to represent this predicate, while the authors of \cite{broadbent2016zero} simply call it $Q$. We add the tilde in order to differentiate their predicate from our predicate $Q$, the latter of which we define in Definition \ref{def:new-Q}.) If $\tilde{Q}$ is not satisfied, the prover aborts, causing the verifier to reject. If $\tilde{Q}$ is satisfied, then the prover proves to the verifier, using an NP zero-knowledge protocol, that there exists randomness $s_P$ and an encoding key $(t, \pi, a, b)$ such that $z = \textsf{commit}((\pi, a, b), s_P)$ and $\tilde{Q}(t, \pi, a, b, r, u) = 1$.

\begin{comment}
$Q(t, \pi, a, b, r, u)$ is satisfied if and only if 1. the state that would result from applying the inverse of the encoding determined by the key $(\pi, a, b)$ to the state the prover initially sent satisfies the Hamiltonian term $H_r$, and 2. with high probability, the verifier behaved honestly when it made its measurementsâmeaning, in particular, that all of its reported measurement results decode in a valid way under the concatenated Steane code, and none of the traps $t$ were sprung.
\end{comment}
\end{enumerate}

We now describe the predicate $\tilde{Q}$ in precise terms. For convenience, Broadbent et al.\ define a predicate $\tilde{R}_r$, which represents the prover's check after it reverses the effects on $u$ of the one-time pad $X^a Z^b$, and then proceed to define $\tilde{Q}$ in terms of $\tilde{R}_r$. (As with $Q$, we denote the predicate found in \cite{broadbent2016zero} by $\tilde{R}_r$ instead of $R_r$, in order to differentiate it from our $R_r$, which we define in Definition \ref{def:new-R_r}.) We quote below the definition of $\tilde{R}_r$ used in \cite{broadbent2016zero}. (Note that we have altered their notation slightly: the strings that they call $y_i$ and $z_i$ we have called $p_i$ and $q_i$, and the strings that they call $c$ and $d$ we have called $e$ and $f$, to avoid conflicts.)

\begin{definition}[Definition of $\tilde{R}_r$]
\label{def:bjsw-R_r}
Let $u_{i_1}, \ldots, u_{i_k}$ be the measurement results that the verifier sent to the prover in step 3. For each $i \in \{i_1, \ldots, i_k\}$, define strings $p_i, q_i$ in $\{0,1\}^N$ such that $\pi(p_i \| q_i) = u_i$ (alternatively: $\pi^{-1}(u_i) = p_i \| q_i)$. The predicate $\tilde{R}_r(t, \pi, u)$ takes the value 1 if and only if the following two conditions are met:
\begin{enumerate}
\item $p_i \in \mathcal{D}_N$ for every $i \in \{i_1, \ldots, i_k\}$, and $p_i \in \mathcal{D}^1_N$ for at least one index $i \in \{i_1, \ldots, i_k\}$. ($\mathcal{D}_N = \mathcal{D}^0_N \cup \mathcal{D}^1_N$ is the set of all valid classical $N$-bit codewords of the concatenated Steane code. We note, as an aside, that $\mathcal{D}^0_N$ and $\mathcal{D}^1_N$ are both sets of size $8^t$, where $t$ is an even positive integer such that $7^t = N$; as such, $\mathcal{D}_N$ is polynomially sized.)
\item $\bra{q_{i_1} \cdots q_{i_k}} C^{\otimes n}_r \ket{t_{i_1} \cdots t_{i_k}} \neq 0.$
\end{enumerate}
\end{definition}
Here $\ket{t_{i_1} \cdots t_{i_k}}$ is the state of $mN$ qubits obtained by tensoring $\ket{0}, \ket{+}$ and $\ket{+_y}$ in the natural way.

Now we define the predicate $\tilde{Q}(t, \pi, a, b, r, u)$ in terms of $\tilde{R}_r$:

\begin{definition}[Definition of $\tilde{Q}$]
\label{def:bjsw-Q}

Let $e_1, \ldots, e_m, \: f_1, \ldots, f_m \in \{0,1\}^{2N}$ be the unique strings such that
\begin{gather*}
C_r^{\otimes 2N}(X^{a_1} Z^{b_1} \otimes \cdots \otimes X^{a_m} Z^{b_m}) = \alpha(X^{e_1} Z^{f_1} \otimes \cdots \otimes X^{e_m} Z^{f_m}) C_r^{\otimes 2N}\;,
\end{gather*}
for some $\alpha \in \{1, i, -1, -i\}$. (It is possible to efficiently compute $e = e_1, \ldots, e_m$ and $f = f_1, \ldots, f_m$ given $a, b$ and $C_r$.) The predicate $\tilde{Q}$ is defined by
\begin{gather*}
\tilde{Q}(t, \pi, a, b, r, u) = \tilde{R}_r(t, \pi, u \oplus e_{i_1} \cdots e_{i_k})\;.	
\end{gather*}
\end{definition}

\begin{comment}

\subsubsection{The authentication code}
\label{sec: authentication}

We describe the authentication code employed by the prover.

\begin{figure}[H]
\rule[1ex]{16.5cm}{0.5pt}
\raggedright

\textbf{\textsf{Auth.Enc}:}

Parameters: $N(\cdot)$ a polynomially bounded function.\\
Input: A state $\rho$. 
\begin{itemize}
    \item Let $n$ be the number of qubits of $\rho$. Pick uniformly at random: \begin{itemize}
        \item $t=t_1\cdot \cdot \cdot t_n$, where $t_1,..,t_n \in \{0, +, +_{\textsf{Y}}\}^N$
        \item $\pi \in S_{2N}$
        \item $a = a_1\cdot \cdot \cdot a_n, b = b_1 \cdot \cdot \cdot b_n$ for $a_1,..,a_n, b_1,..,b_n \in \{0,1\}^{2N}$.
    \end{itemize}
    \item Create the state $\tilde{\rho}$ of $2N(n)n$ qubits as follows. For $i=1,..,n$, the qubit $\textsf{X}_i$ of $\rho$ is encoded into $N$ qubits $(\textsf{Y}_1^i, .., \textsf{Y}_N^i)$ via the concatenated Steane code. Then, append an additional $N$ qubits to each $N$-tuple of qubits. Each of the $nN$ additional qubits is initialized in one of three states $\ket{0}, \ket{+}, \ket{+_{\textsf{Y}}}$ based on the corresponding bit of the string $t$. Permute each $2N$-tuple of qubits according to $\pi$. Finally, apply a quantum one-time pad to each qubit. More precisely, for each $i=1,..,n$, apply $X^{a_i}Z^{b_i}$ to the qubits $(\textsf{Y}_1^i, .., \textsf{Y}_N^i)$. Output $\tilde{\rho}$.
    \end{itemize}

\rule[2ex]{16.5cm}{0.5pt}\vspace{-.5cm}
\caption{The authentication code}
\label{fig: authentication code}
\end{figure}
\end{comment}

\begin{definition}

\label{def: dec auth}

For notational convenience, we define a quantum procedure $Dec_{\pi, a, b}$ on a $2nN$-qubit state $\rho$ as follows:
\begin{enumerate}[itemsep=1pt,topsep=0pt]
\item Apply $X^a Z^b$ to $\rho$, to obtain a state $\rho'$.
\item Apply $\pi^{-1}$ to each $2N$-qubit block in the state $\rho'$, to obtain a state $\rho''$.
\item Discard the last $N$ qubits of each $2N$-qubit block in $\rho''$, to obtain a state $\rho'''$.
\item Return $\rho'''$.
\end{enumerate}

Intuitively, we think of $Dec_{\pi, a, b}$ as an inverse to the prover's encoding procedure $E$. (Note that this procedure does not depend on $t$.) 

\end{definition}

\subsection{Cryptographic primitives}

\subsubsection{Homomorphic encryption (with circuit privacy)}

\label{sec:hom-enc}

The following definitions are taken (with modifications) from \cite{bitansky2019pqzk}.

A homomorphic encryption scheme $\textsf{FHE} = (\textsf{FHE.Gen}, \textsf{FHE.Enc}, \textsf{FHE.Dec}, \textsf{FHE.Eval})$ is a quadruple of PPT algorithms which operate as follows:

\begin{itemize}
\item $\textsf{FHE.Gen}(1^\lambda)$: given the security parameter $\lambda$, outputs a secret key $sk$ and a public key $pk$.
\item $\textsf{FHE.Enc}(pk, x)$: using the public key $pk$, encrypts a message $x \in \{0,1\}^*$ into a ciphertext $\mathsf{ct}$ and outputs $\mathsf{ct}$.
\item $\textsf{FHE.Dec}(sk, \mathsf{ct})$: using the secret key $sk$, decrypts a ciphertext $\mathsf{ct}$ to recover a message $x$.
\item $\textsf{FHE.Eval}(pk, C, \mathsf{ct})$: applies circuit $C$ to ciphertext $\mathsf{ct}$ and outputs another ciphertext $\hat{\mathsf{ct}}$.
\end{itemize}

For any polynomial-size family of classical circuits $\mathcal{C} = \{\mathcal{C}_\lambda\}_{\lambda \in \mathbb{N}}$, the following properties hold:

\textbf{Perfect correctness.} For any $\lambda \in \mathbb{N}$, $x \in \{0,1\}^*$, and $C \in \mathcal{C}_\lambda$,

\begin{align*}
Pr \left[
\begin{array}{c|c}
& pk, sk \leftarrow \mathsf{FHE.Gen}(1^\lambda), \\
\mathsf{FHE.Dec}(sk, \hat{\mathsf{ct}}) & \mathsf{ct} \leftarrow \mathsf{FHE.Enc}(pk, x), \\
& \hat{\mathsf{ct}} \leftarrow \mathsf{FHE.Eval}(pk, C, \mathsf{ct})
\end{array}
\right] = 1.
\end{align*}

\textbf{Semantic security against quantum adversaries.} For any polynomial $\ell(\lambda)$ and polynomial-size quantum adversary $\mathsf{A^*}=\{A^*_\lambda, \rho_\lambda\}_{\lambda \in \mathbb{N}}$, there exists a negligible function $\mu(\cdot)$ such that, for every two messages $x_0, x_1$ of length $\ell$, and for every $\lambda \in \mathbb{N}$:

\begin{align*}
Pr \left[
\begin{array}{c|c}
& b \leftarrow \{0,1\}, \\
\mathsf{A}^*_\lambda(\mathsf{ct}) = b & pk, sk \leftarrow \mathsf{FHE.Gen}(1^\lambda), \\
& \hat{\mathsf{ct}} \leftarrow \mathsf{FHE.Enc}(pk, x_b)
\end{array}
\right] \leq \frac{1}{2} + \mu(\lambda).
\end{align*}

\textbf{Circuit privacy.} For the protocol described in section \ref{sec: protocol}, we require an additional property of \textsf{FHE} known as (malicious) \emph{circuit privacy}. We formalise the property of circuit privacy in the existence of a fifth PPT algorithm, \textsf{FHE.Refresh}, which operates as follows:

\begin{itemize}
\item $\textsf{FHE.Refresh}(pk, \mathsf{ct})$: takes a ciphertext $\mathsf{ct}$, and outputs another ciphertext $\mathsf{ct}'$ such that $\mathsf{FHE.Dec}(sk, \mathsf{ct}) = \mathsf{FHE.Dec}(sk, \mathsf{ct}')$
\end{itemize}

We say that \textsf{FHE} is a scheme \emph{with malicious statistical circuit privacy} if there exist unbounded algorithms $\textsf{Enc}^*(pk, x; 1^\lambda)$ and $\textsf{Dec}^*(pk, \mathsf{ct}; 1^\lambda)$ (not to be confused with $\mathsf{FHE.Enc}$ and $\mathsf{FHE.Dec}$) such that:

\begin{enumerate}
\item For any $x \in \{0,1\}^*$, any (possibly malformed) $pk^* \in \{0,1\}^{\mathrm{poly}(\lambda)}$, and any $\mathsf{ct} \in \mathsf{FHE.Enc}(pk^*, x)$, it holds that $\mathsf{Dec^*}(pk^*, \mathsf{ct}; 1^\lambda)= x$, and
\item For any $\lambda \in \mathbb{N}$, $C \in \mathcal{C}_\lambda$, and arbitrary (possibly malformed) $pk^*, \mathsf{ct}^* \in \{0,1\}^{\mathrm{poly}(\lambda)}$, the distribution of
\begin{equation*}
\mathsf{FHE.Refresh}\big(pk^*, \mathsf{FHE.Eval}(pk^*, C, \mathsf{ct}^*)\big)
\end{equation*}
(over the random coins of $\mathsf{FHE.Eval}$) is statistically indistinguishable from the distribution of
\begin{equation*}
\mathsf{Enc^*}\left(pk^*, C\Big(\mathsf{Dec^*}\big(pk^*, \mathsf{ct}^*; 1^\lambda\big)\Big); 1^\lambda \right)
\end{equation*}
over the random coins of $\mathsf{Enc^*}$.
\end{enumerate}

Intuitively, circuit privacy guarantees that `refreshed' ciphertexts are distributed like some known distribution which can be sampled from without the knowledge of $C$. (This is formalised by the property that the distribution of refreshed ciphertexts is negligibly close to the distribution of ciphertexts which $\mathsf{Enc^*}$ produces, where $\mathsf{Enc^*}$ does not have access to a description of $C$, but only to the output of $C$ on a certain input. In \cite{bitansky2019pqzk}, $\mathsf{Enc^*}$ is known as \textsf{Sim}, and $\mathsf{Dec^*}$ is known as $\mathsf{Ext}$.) In other words, refreshed ciphertexts do not leak information about the nature of the homomorphic computations that were performed on the ciphertext prior to the refresh operation.

\begin{remark}
We refer to ciphertexts which are sampled by $\mathsf{Enc^*}$ as `fresh' ciphertexts. This makes it convenient to state that ciphertexts originating from $\mathsf{FHE.Refresh}$ are `indistinguishable from fresh ciphertexts'.
\end{remark}

Fully homomorphic encryption schemes with malicious circuit privacy are known to exist, predicated on the assumption that LWE is (quantum) computationally intractable \cite{ostrovsky2013maliciously}. The circuit-private FHE scheme exhibited in \cite{ostrovsky2013maliciously} actually has $\mathsf{FHE.Refresh} = I$, where $I$ is the identity circuit (meaning that circuit privacy is `built into' evaluation). For clarity we explicitly consider a `refreshing' algorithm in our use of the primitive.

\subsubsection{Commitment scheme}
\label{sec: commitment}

In the protocol described in section \ref{sec: protocol}, we will twice make use of a commitment scheme consisting of a tuple of PPT algorithms (\textsf{gen}, \textsf{commit}, \textsf{reveal}, \textsf{verify}) which operate as follows:

\begin{itemize}[itemsep=1pt, topsep=0pt]
\item $\mathsf{gen(1^\ell)}$ takes as input a security parameter, and generates a public key $pk$ and a secret key $sk$. (We remark that $sk$ is never used in the real commitment procedure; it exists purely for simulation purposes.)
\item $\textsf{commit}(pk, b, s)$ takes as input a public key $pk$, a bit $b \in \{0,1\}$ to which to commit, and a string $s$, and produces a commitment string $z$.
\item $\textsf{reveal}(pk, z, b, s)$ outputs the inputs it is given.
\item $\textsf{verify}(pk, z, b, s)$ takes as argument a purported public key, commitment string, committed bit and random string, and outputs either $1$ (accept) or 0 (reject).
\end{itemize}

We require the following properties of this commitment scheme:

\begin{itemize}[itemsep=1pt, topsep=0pt]
\item \emph{Perfectly binding}: If $\textsf{commit}(pk, b, s) = \textsf{commit}(pk, b', s')$, then $b=b'$.
\item \emph{(Quantum) computationally concealing}: For any public key $pk \leftarrow \textsf{gen}(1^\ell)$ and any two messages $b, b'$, the distributions over uniformly random $s$ of $\textsf{commit}(pk, b, s)$ and $\textsf{commit}(pk, b', s)$ are quantum computationally indistinguishable.
\item \emph{Extractable}: There is a PPT algorithm $\textsf{recover}(pk, sk, z)$ which, for any $z \leftarrow \textsf{commit}(pk, b, s)$, outputs the unique $b$ such that $z = \textsf{commit}(pk, b, s)$.
\end{itemize}

\textbf{Instantiation based on LWE.} There are a number of folklore results on commitment schemes based on the intractability of LWE. Below, we sketch a scheme (one sufficient for our purposes) which is inspired by the LWE-based public-key encryption scheme presented by Regev in \cite{regev2009lattices}. Before we introduce our commitment scheme, we firstly provide an overview of the \cite{regev2009lattices} public-key encryption scheme.

\begin{definition}[\cite{regev2009lattices} public-key encryption scheme]

\label{def:regev-pke}

Let $m, n$ be security parameters ($m \geq n$), and let $q = \mathrm{poly}(n)$ be a prime. The \cite{regev2009lattices} public-key encryption scheme is a tuple of algorithms $(\textsf{KeyGen}, \textsf{Enc}, \textsf{Dec})$ of the following descriptions:

\begin{itemize}
\item $\textsf{KeyGen}(1^n, 1^m, q)$ generates a uniformly random matrix $A \in \mathbb{Z}_q^{m \times n}$, along with a uniformly random vector $\vec{s} \in \mathbb{Z}_q^n$ and a `short' noise vector $e \in \mathbb{Z}_q^n$. (We say that the vector $e$ is `short' because it has bounded norm. We will not go into detail here about the distribution from which $e$ is drawn. It suffices to state that this distribution is well-defined and efficiently sampleable.) It outputs $pk = (A, A\vec{s} + e)$ and $sk = \vec{s}$.
\item $\textsf{Enc}(pk, b)$ takes a message $b \in \{0,1\}$ and chooses a vector $w \in \{0,1\}^m$ uniformly at random. It outputs a ciphertext $c = (c_0, c_1) = (w^T A, w^T (A\vec{s} + e) + b \cdot \lfloor \frac{q}{2} \rfloor).$
\item $\textsf{Dec}(sk, c)$ is 0 if $c_1 - \langle c_0, \vec{s} \rangle$ is closer to 0 than to $\lfloor \frac{q}{2} \rfloor \mathrm{\: mod \: } q$, and 1 otherwise.
\end{itemize}

\end{definition}

We refer the reader to \cite[Section 5]{regev2009lattices} for the security analysis of the cryptosystem defined in Definition \ref{def:regev-pke}.

We now provide a sketch of an LWE-based commitment scheme satisfying the properties specified at the start of this section. In addition to the \cite{regev2009lattices} cryptosystem, we use the algorithms \textsf{GenTrap} and \textsf{Invert} from \cite[Theorem 2]{micciancio2012trapdoors}. \textsf{GenTrap} is an algorithm which generates a matrix $A$ with a distribution negligibly close to the distribution of $A$ in Definition \ref{def:regev-pke}, and in addition generates a \emph{trapdoor} $R$ for $A$ that allows the algorithm \textsf{Invert} to (given $R$) efficiently recover $(\vec{s}, e)$ from any $A\vec{s} + e$.

\begin{definition}[Perfectly binding, computationally hiding, extractable commitment scheme based on LWE]

\label{def:commitment}

Let $m, n$ be security parameters ($m \geq n$), and let $q = \mathrm{poly}(n)$ be a prime.

\begin{itemize}
\item $\mathsf{gen}(1^n, 1^m, q)$ outputs $(A, R) \leftarrow \textsf{GenTrap}(1^n, 1^m, q)$ as $(pk, sk)$.
The distribution of $A$ is negligibly close to uniform over $\mathbb{Z}_q^{m \times n}$.

\item $\textsf{commit}(pk, b, s)$ uses $s$ to generate
\begin{enumerate}
\item a vector $\vec{s}(s) \in \mathbb{Z}_q^n$, whose distribution over uniformly random $s$ is uniformly random,
\item a short vector $e(s)$, whose distribution over uniformly random $s$ is equal to the appropriate distribution for error vectors, and 
\item a string $w(s) \in \{0,1\}^m$, whose distribution over uniformly random $s$ is uniformly random. 
\end{enumerate}
It then outputs $z = (z_0, z_1, z_2, z_3) = (A, A\vec{s} + e, w^T A, w^T(A\vec{s} + e) + b \cdot \lfloor \frac{q}{2} \rfloor)$.
\item $\textsf{reveal}(pk, z, b, s)$ outputs the inputs it is given.
\item $\textsf{verify}(pk, z, b, s)$ checks that $e(s)$ is sufficiently short, checks that $z_1 = z_0 \vec{s} + e(s)$, and then decrypts $(z_2, z_3)$ using $w(s)$ and $\vec{s}(s)$. It outputs 1 if and only if the result is $b$.
\end{itemize}

\end{definition}

In our setting, we are guaranteed that $A$ is generated honestly (i.e.\ that it comes from running \textsf{GenTrap}), because $A$ is embedded into a CRS. Given that this is so, the commitment scheme we have defined in Definition \ref{def:commitment} is perfectly binding because, for an $A$ originating from $\mathsf{GenTrap}$ and a sufficiently short $e$, the vector $A\vec{s} + e$ perfectly determines $\vec{s}$. Then, if $\vec{s}$ is determined, $w^T(A\vec{s} + e) + b \cdot \lfloor \frac{q}{2} \rfloor$ cannot decrypt to a value other than $b$, by correctness of the \cite{regev2009lattices} encryption scheme. That the commitment scheme is computationally hiding follows directly from the IND-CPA security of the \cite{regev2009lattices} scheme.

Finally, the commitment scheme is extractable because the trapdoor $R$ allows an algorithm to efficiently invert $A$ (as described in \cite[Theorem 12]{micciancio2012trapdoors}). Specifically, $\textsf{recover}(pk = A, sk = R, z)$ runs $\textsf{Invert}(R, A, A\vec{s} + e)$ to recover $(\vec{s}, e)$, and then recovers $b$ by decrypting $w^T(A\vec{s} + e) + b \cdot \lfloor \frac{q}{2} \rfloor$.

\begin{remark}
It is possible to strengthen the commitment scheme in Definition \ref{def:commitment} so that it satisfies the following two properties:
\begin{enumerate}
\item The perfectly binding property holds for the randomness as well as for the committed bit. In other words, if $\textsf{commit}(pk, b, s) = \textsf{commit}(pk, b', s')$, then we can conclude not only that $b = b'$ but also that $(b, s) = (b', s')$.
\item The algorithm $\textsf{recover}(pk, sk, z)$ recovers the randomness $s$ as well as the committed bit $b$ from a commitment $z$.
\end{enumerate}

In order to achieve these two properties, we simply have \textsf{gen} sample a second matrix $A'$ from \textsf{GenTrap} during key generation, along with its trapdoor $R'$, and set $pk = (A, A'), sk = (R, R')$. Then we have \textsf{commit} append the value $z_4 = A'w + e'$ to the commitment $z$ that it generates, for some appropriately sampled short vector $e'$, where $w$ is the uniformly random string in $\{0,1\}^m$ that is used to compute $z_2$ and $z_3$. We also modify \textsf{verify} appropriately, so that, in addition to the checks it performs in Definition \ref{def:commitment}, it also checks that $z_4 = A'w + e'$ and that $(z_2, z_3) = (w^T A, w^T(A\vec{s} + e) + b \cdot \lfloor \frac{q}{2} \rfloor)$. The purpose of these changes is to ensure that the value of $w$ which is used to produce a given commitment $z$ is also perfectly determined by the value of $z$ (which was not necessarily the case before, as $A$ is in general not invertible). Because $A'$ is sampled independently from $A$, the commitment scheme remains computationally hiding if LWE is computationally intractable.
\end{remark}

\subsection{Argument systems}

\subsubsection{Interactive quantum machines}
\label{sec: interactive machines}

The definitions of \textit{interactive quantum machines}, their \textit{executions} and \textit{oracle access} to an interactive quantum machine are taken largely from \cite{unruh2012quantum}.

\paragraph{Interactive quantum machines} An \textit{interactive quantum machine} is a machine $M$ with two quantum registers: a register $\textsf{S}$ for its internal state, and a register $\textsf{N}$ for sending and receiving messages (the network register). Upon activation, $M$ expects in $\textsf{N}$ a message, and in $\textsf{S}$ the state at the end of the previous activation. At the end of the current activation, $\textsf{N}$ contains the outgoing message of $M$, and $\textsf{S}$ contains the new internal state of $M$. A machine $M$ gets as input: a security parameter $\mu \in \mathbb{N}$, a classical input $x \in \{0,1\}^*$, and quantum input $\ket{\Phi}$, which is stored in $\textsf{S}$. Formally, machine $M$ is specified by a family of circuits $\{M_{\mu x}\}_{\mu \in \mathbb{N}, x \in \{0,1\}^*}$, and a family of integers $\{r_{\mu x}\}_{\mu \in \mathbb{N}, x \in \{0,1\}^*}$. $M_{\mu x}$ is the quantum circuit that $M$ performs on the registers $\textsf{S}$ and $\textsf{N}$ upon invocation. $r_{\mu x}$ determines the total number of messages/invocations. We might omit writing the security parameter when it is clear from the context. We say that $M$ is \textit{quantum-polynomial-time} (QPT for short) if the circuit $M_{\mu x}$ has polynomial size in $\mu + |x|$, the description of the circuit is computable in deterministic polynomial time in $\mu + |x|$ given $\mu$ and $x$, and $r_{\mu, x}$ is polynomially bounded in $\mu$ and $x$.

\paragraph{Execution of interactive quantum machines}
For a pair of interactive quantum machines $M$ and $M'$ with internal registers $\textsf{S},\textsf{S}'$ respectively and network register $\textsf{N}$, a security parameter $\mu$, a pair of quantum states $\ket{\Phi}, \ket{\Phi'}$ and a pair of strings $x,x'$, we define the \textit{execution} $\left (M(x, \ket{\Phi}) , M'(x',\ket{\Phi'}) \right )$ via the following process:
initialize registers $\textsf{S}, \textsf{S}', \textsf{N}$ as $\ket{\Phi}, \ket{\Phi'}$ and $\ket{0}$; alternately apply $M_{\mu x}$ to registers $\textsf{S}$ and $\textsf{N}$ and $M_{\mu x'}$ to registers $\textsf{S}'$ and $\textsf{N}$; stop applying $M_{\mu x}$ after $r_{\mu x}$ times and $M_{\mu x'}$ after $r'_{\mu x}$ times, and then measure the $\textsf{S}$ register in the computational basis. Let $\left < M(x, \ket{\Phi}) , M'(x',\ket{\Phi'}) \right >$ be the random variable for the outcome of this measurement.
For a $k$-tuple of interactive machines $M_1, \ldots, M_k$, we assume that there is a message register $\textsf{N}_{ij}$ between any two machines. Machine $M_i$ has an internal register $\textsf{S}_i$, an invocation register $\textsf{INV}_i$, and is specified by a unitary $M_{i, \mu x}$, which acts on $\textsf{S}_i$, on all invocation registers $\textsf{INV}_j$, and on all of $M_i$'s message registers ($\textsf{N}_{ij}$ for all $j$). $M_{i, \mu x}$ has the additional property that it always leaves all invocation registers in the state $\ket{0}$, except for one in the state $\ket{1}$. This determines the next machine to be invoked.

\paragraph{Oracle access to an interactive quantum machine} We say that a quantum algorithm $A$ has oracle access to an interactive quantum machine $M$ (and we write this as $A^M$, or sometimes $A^{\ket{M}}$ to emphasize that $M$ is a quantum machine and that oracle access includes the ability to apply the inverse of $M$) to mean the following. Besides the security parameter and its own classical input $x$, we allow $A$ to execute the quantum circuit $M_{\mu x}$ specifying $M$, and its inverse (recall that these act on the internal register $\textsf{S}$ and on the network register $\textsf{N}$ of $M$). Moreover, we allow $A$ to provide and read messages from $M$ (formally, we allow $A$ to act freely on the network register $\textsf{N}$). We do not allow $A$ to act on the internal register $\textsf{S}$ of $M$, except via $M_{\mu x}$ or its inverse.

\subsubsection{Argument systems with setup}
\label{sec: arguments}

First we define the kinds of relations that underlie our argument systems. 
Classically, a relation over finite sets $\cX\times \cY$ is a subset $R\subseteq \cX\times \cY$. An $\NP$ relation $R = \{(x,w): V_{|x|}(x,w)=1\}$ has the additional property that given any $x\in \cX$ and $w\in\cY$, the claim that $(x,w)\in R$ can be verified by a uniformly generated family of circuits $V = \{V_n\}$ (the ``verifier''). 

In the quantum case the ``input'' $x$ (the first argument to the relation) remains classical, but the ``witness'' $w$ (the second argument) can be a quantum state $\ket{\psi}$. Before we give our definition of a { $\QMA$ relation} we introduce some notation. Fix a uniformly generated family of  polynomial-size quantum circuits $Q=\{Q_n\}_{n\in \N}$ such that for every $n$, $Q_n$ takes as input a string $x\in\{0,1\}^n$ and a quantum state $\sigma$ on $p(n)$ qubits (for some polynomial $p(n)$) and returns a single bit as output. For any $0\leq \gamma \leq 1$ define
\[R_{Q,\gamma} = \bigcup_{n\in \N} \big\{(x,\sigma)\in \{0,1\}^n\times \Density(\C^{p(n)})\,\big|\;  \Pr(Q_{n}(x, \sigma)=1)\geq \gamma\big\}\;
\]
and
\[N_{Q,\gamma} = \bigcup_{n\in \N} \big\{x\in \{0,1\}^n\,\big|\; \forall \sigma \in  \Density(\C^{p(n)})\,,\;  \Pr(Q_{n}(x, \sigma)=1)< \gamma\big\}\;.\]
Note the presence of the parameter $\gamma$, that quantifies the expected success probability for the verifier; $\gamma$ can be thought of as a measure of the ``quality'' of a witness $\ket{\psi}$ (or mixture theoreof, as represented by the density matrix $\sigma$) that is sufficient for the witness to be acceptable with respect to the relation $R$. 

%Notice that with $R_{Q,\gamma}$, we implicitly define subspaces
%$\{\mathcal{S}_x\}_x$ such that
%$(x,\sigma) \in R_{Q,\gamma}$ iff $\sigma \in \mathcal{S}_x$.

\begin{definition}[$\QMA$ relation]
  \label{def:qma-relation}
	A \emph{$\QMA$ relation} is specified by triple $(Q,\alpha,\beta)$ where $Q = \{Q_n\}_{n\in\N}$ is a uniformly generated family of quantum circuits such that for every $n$, $Q_n$ takes as input a string $x\in\{0,1\}^n$ and a quantum state $\ket{\psi}$ on $p(n)$ qubits and returns a single bit, and $\alpha,\beta:\N\to[0,1]$ are such that $ \alpha(n)- \beta(n) \geq 1/p(n)$ for some polynomial $p$ and all $n\in\N$. The $\QMA$ relation associated with $(Q,\alpha,\beta)$ is the pair of sets $R_{Q,\alpha}$ and $N_{Q,\beta}$. 

  We say that a \emph{language $L=(L_{yes},L_{no})$ is specified by a $\QMA$ relation $(Q,\alpha,\beta)$} if  
	\begin{equation}\label{eq:l-qma}
	L_{yes} \subseteq \bigcup_{n\in \N} \big\{x \in \{0,1\}^n | \exists \sigma \in \Density(\C^{p(n)}) \text{ s.t. } (x,\sigma)\in R_{Q,\alpha}\big\}\;,
	\end{equation}
	and $L_{no} \subseteq N_{Q,\beta}$.
  \end{definition}

Note that in contrast to an $\NP$ relation, we define a $\QMA$ relation using two sets: the first set, $R_{Q,\alpha}$, is the set of (instance, witness) pairs that are deemed to form part of the relation. The second set, $N_{Q,\beta}$, is the set of instances that are deemed to be such that they are in relation to no witness. Some instances may lie in neither (the projection of) $R_{Q,\alpha}$ or $N_{Q,\beta}$; this is analogous to the necessity for a ``promise'' between the completeness and soundness parameters $\alpha$ and $\beta$ in the definition of the class $\QMA$, that do not appear in the definition of $\NP$. In particular, note that, whenever $\alpha - \beta > 1/\poly(n)$, a language $L$ that is specified by $(Q,\alpha,\beta)$ lies in $\QMA$. Conversely, any language in $\QMA$ is specified by some $\QMA$ relation (of course such relation is not unique).

%\begin{definition}[$\QMA$ relation]
  %\label{def:qma-relation2}
	 %Let $Q = \{Q_n\}_{n\in\N}$ be a uniformly generated family of quantum circuits such that for every $n$, $Q_n$ takes as input a string $x\in\{0,1\}^n$ and a quantum state $\ket{\psi}$ on $p(n)$ qubits and returns a single bit, and $\alpha,\beta:\N\to[0,1]$ are such that $\alpha(n)-\beta(n) > 1/q(n)$ for all $n\in\N$, for some polynomials $q,q>0$. A \emph{$\QMA$ relation} $R_{Q, \alpha, \beta}$ is a relation $R_{Q,\alpha}$ which additionally satisfies the following: if $x \in N_{Q,\alpha}$, then $x \in N_{Q,\beta}$.
  %\end{definition}

\begin{definition}[protocol with setup]
\label{def: protocol with setup}
A \emph{protocol with setup} is  a triple of interactive machines $(S,P,V)$ with the following properties:
\begin{enumerate}
\item $S = \{S_{\mu n}\}_{\mu \in \N}$ depends on the security parameter $\mu$ and an instance size $n$, takes no input and returns a classical output in the message registers $\textsf{N}_{SP}$ and $\textsf{N}_{SV}$. When the output in both registers is the same, we refer to it as ``common reference string''.
\item Each of $P$ and $V$ has two phases: $P=(P_1,P_2)$ and $V=(V_1,V_2)$. $P_1 = \{P_{1,\mu n}\}$ and $V_1 = \{V_{1,\mu n}\}$ are interactive machines that depend on the security parameter $\mu$ and an instance size parameter $n$, take a classical message input in register $\textsf{N}_{SP}$ and $\textsf{N}_{SV}$ respectively and return a quantum message as output in registers $\textsf{N}_{P_1P_2}$ and $\textsf{N}_{V_1V_2}$ respectively.  $P_2 = \{P_{2,\mu n}\}$ and $V_2 = \{V_{2,\mu n}\}$ are interactive machines that depend on the security parameter $\mu$ and an input size $n$. $V_2$ takes as input the output of $V_1$, in register $\textsf{N}_{V_1V_2}$, as well as an instance $x$ such that $|x|=n$. $P_2$ takes as input the output of $P_1$, in register $\textsf{N}_{P_1P_2}$, an instance $x$ such that $|x|=n$, and a quantum state $\rho$. $V_2$ returns a single bit $b\in\{0,1\}$ as output, and $P_2$ returns no output. If $b=1$ then we say that $V$ accepts, and otherwise we say that it rejects. 
\end{enumerate}
\end{definition}
We refer to the first phase of $P$ and $V$ as the \textit{preprocessing phase}, and to the second phase as the \textit{instance-dependent phase}.

\begin{definition}[argument system with completeness $c$ and soundness $s$]
\label{def: arg qma-rel}
Let  $(Q,\alpha,\beta)$ be a $\QMA$ relation and $s,c:\N\to [0,1]$. An \emph{argument system} (with setup) for $(Q,\alpha,\beta)$, with completeness $c$ and soundness $s$, is a protocol with setup $(S,P,V)$ such that $S,P,V$ are quantum polynomial-time and, in addition, the following hold: 
\begin{enumerate}
\item (Completeness) For all $(x,\rho) \in R_{Q,\alpha}$, for all integer $\mu$, the execution $( S , P(x,\rho), V(x))$ returns $1$ with probability at least $c(\mu)$.
\item (Soundness) For all $x\in N_{Q,\beta}$, all integer $\mu$ and all polynomial-time $P^*$ the execution $( S , P^*(x), V(x))$ returns $1$ with probability at most $s(\mu)$.
\end{enumerate}
\end{definition}

When the second phase of a protocol with setup $(S,P,V)$ consists of a single message from $P$ to $V$ we refer to it as a \textit{non-interactive} protocol with setup. If it is a an argument system with setup, we refer to it as a \textit{non-interactive} argument system with setup. When the first phase involves some communication between $P$ and $V$, we specify that it is a non-interactive argument system with setup \textit{and preprocessing}. When $S$ outputs a common reference string (as defined in \ref{def: protocol with setup}), we refer to it as an argument system \textit{with CRS setup} (possibly with preprocessing).

Note that Definition \ref{def: arg qma-rel} requires that the execution $( S , P(x,\rho), V(x))$ returns $1$ with probability at least $c(\mu)$. In the case of sequential or parallel repetition of a protocol, it may not be possible for the prover to succeed with a single copy of the witness $\rho$ as input. In this case we may considering relaxing the definition as follows. 

\begin{definition}[Completeness of argument system with setup --- alternative definition]\label{def:compl-alt}
There exists a polynomial $q>0$, such that for all $(x,\rho) \in R_{Q,\alpha}$, for all integers $\mu$, the execution $( S , P(x,\rho^{\otimes q(\mu)}), V(x))$ returns $1$ with probability at least $c(\mu)$.
\end{definition}

We will clarify, whenever we refer to an argument system with setup, which definition we refer to. 

Finally, we define the notion of \emph{adaptive soundness}, which captures security against adversaries that are allowed to choose the common instance $x$ \emph{after} having carried out the preprocessing phase.

\begin{definition}[Adaptive soundness]\label{def:adaptive-soundness}
An argument with setup $(S,P,V)$ for a $\QMA$ relation $(Q,\alpha,\beta)$ has \emph{adaptive soundness} $s(\mu)$ if for every QPT algorithm $P^* = \{(P^*_{1,\mu n}, P^*_{2,\mu n})\}$, for all $\mu$,
\[ \Pr_{\substack{(\sigma_{PV}) \leftarrow (S_{\mu n}, P_{1,\mu n}^*,V_{1, \mu n}), \\ (x, \tau) \leftarrow P_{2,\mu n}^*(\sigma_P)}} \big( x \in N_{Q, \beta} \wedge (P^*_{2, \mu n} (x, \tau),V_{2, \mu n} (x, \sigma_V))=1\big)\leq s(\mu)\;.\]
\end{definition}

The terminology that follows Definition \ref{def: arg qma-rel} is modified in the natural way in the case of adaptive soundness.

\subsection{Zero-knowledge arguments}

\subsubsection{Zero-knowledge}
\label{sec: zk}

The notion of Zero-Knowledge in the presence of quantum verifiers and quantum auxiliary information was first formulated by Watrous~\cite{watrous2009zero}. Here, we adapt Unruh's version of this definition. We first give a definition for argument systems with setup for $\NP$ relations, and then we give the generalization to $\QMA$ relations.

\begin{definition}[computational zero-knowledge argument system for NP relation]
\label{def: zk np}
An interactive argument system with setup $(S,P,V)$ for an $\NP$ relation $R$ is quantum computational zero-knowledge if for every quantum polynomial-time verifier $V^*$ there is a quantum polynomial-time simulator $Sim$ such that, for any quantum polynomial-time distinguisher $D$ and any polynomial $l>0$, there is a negligible $\nu$ such that for any $(x, w) \in R$ with $|x|, |w| \leq l(\mu)$, and for any quantum state $\ket{\Psi}$, we have:
\begin{align*}
\Pr\big(&b = 1 : \textsf{ZE} \leftarrow \ket{\Psi}, (S, P(x,w),V^*(\textsf{Z})), b \leftarrow D(\textsf{Z,E})\big) \\
 & - \Pr\big(b = 1 : \textsf{ZE} \leftarrow \ket{\Psi}, S(x,\textsf{Z}), b \leftarrow D(\textsf{Z,E})\big)  \leq \nu(\mu)\;.
\end{align*}
Here $\textsf{ZE} \leftarrow \ket{\Psi}$ denotes that the quantum registers \textsf{Z,E} are initialized jointly in state $\ket{\Psi}$. And $(S, P(x,\sigma),V^*(\textsf{Z}))$ denotes an execution where $V^*$ gets access to the quantum register \textsf{Z}. Note that after that execution $V^*$ may have changed the state of \textsf{Z}. $S(x,\textsf{Z})$ also gets access to and may change \textsf{Z}.
\end{definition}

The only difference in the definition for $\QMA$ relations is that the prover receives a state $\sigma$ such that $(x, \sigma) \in R_{Q,\alpha}$. We report the full definition for completeness.
\begin{definition}[computational zero-knowledge argument system for QMA relation]
\label{def: zk qma}
An interactive argument system with setup $(S,P,V)$ for a $\QMA$ relation $(Q,\alpha, \beta)$ is quantum computational zero-knowledge if for every quantum polynomial-time verifier $V^*$ there is a quantum polynomial-time simulator $Sim$ such that, for any quantum polynomial-time distinguisher $D$ and any polynomial $l>0$, there is a negligible $\nu$ such that for any $(x, \sigma) \in R_{Q,\alpha}$ with $|x| \leq l(\mu)$, and for any quantum state $\ket{\Psi}$, 
\begin{align*}
\Pr\big(&b = 1 : \textsf{ZE} \leftarrow \ket{\Psi}, (S, P(x,\sigma),V^*(\textsf{Z})), b \leftarrow D(\textsf{Z,E})\big) \\
 & - \Pr\big(b = 1 : \textsf{ZE} \leftarrow \ket{\Psi}, S(x,\textsf{Z}), b \leftarrow D(\textsf{Z,E})\big) \leq \nu(\mu)\;.
\end{align*}
%Here $\textsf{ZE} \leftarrow \ket{\Psi}$ denotes that the quantum registers \textsf{Z,E} are initialized jointly with state $\ket{\Psi}$. And $(S, P(x,\sigma),V^*(\textsf{Z})$ denotes an execution where $V^*$ gets access to the quantum register \textsf{Z}. Note that after that execution $V^*$ may have changed the state of \textsf{Z}. $S(x,\textsf{Z})$ also gets access to and may change \textsf{Z}.
\end{definition}

\begin{definition}[adaptive computational zero-knowledge]

In the non-interactive setting, we may also wish to consider the notion of \emph{adaptive zero-knowledge}. We refer the reader to Definition \ref{def:adaptive-zk} and to Remark \ref{remark:adaptive-zk} for the definition of adaptive zero-knowledge. NB: Definition \ref{def:adaptive-zk} uses notation from Definition \ref{def:nizk-argument-for-np}.

\end{definition}

\subsubsection{Non-interactive zero-knowledge arguments for QMA}
\label{sec:nizk-qma}

In this section we formally define the notion of NIZK arguments for QMA. This is the main object we aim to construct in this paper. 

\begin{definition}[NIZK argument system for QMA relation]
\label{def: nizk-for-qma-relation}
An argument system with setup $\Pi$ for a $\QMA$ relation $(Q,\alpha, \beta)$ is said to be an (adaptively sound) NIZK argument system with CRS setup for $(Q,\alpha, \beta)$ if it is a non-interactive protocol with CRS setup, and in addition it satisfies the following:
\begin{enumerate}
\item $\Pi$ is quantum computational zero-knowledge, as in Definition \ref{def: zk qma}.
\item An execution of $\Pi$ involves a single message from $P$ to $V$.
\item $\Pi$ has $1-\negl$ completeness and $\negl$ (adaptive) soundness.
\end{enumerate}
If $P$ and $V$ exchange additional messages in the preprocessing phase (but not in the instance-dependent phase), we refer to $\Pi$ as an NIZK argument system with CRS setup and preprocessing.
\end{definition}

\begin{definition}[Argument system for a language in QMA]
\label{def: nizk-for-qma-language}
We say that a language $L \in \text{QMA}$ has a (adaptively sound) NIZK argument system with CRS setup (and preprocessing) if there exists a $\text{QMA}$ relation $(Q,1-\negl(n),\negl(n))$ which specifies $L$ and such that there exists an (adaptively sound) NIZK argument system with setup (and preprocessing) $(S,P,V)$ for $(Q,1-\negl(n),\negl(n))$.
\end{definition}

Using the amplification technique from~\cite{marriott2005quantum} it follows that any $\QMA$ relation that specifies a $\QMA$ language $L$ can be amplified to a $\QMA$ relation of the form $(Q,1-\negl(n),\negl(n))$, such that moreover the reduction preserves the honest quantum witness for positive instances of $L$.

\subsubsection{Non-interactive zero-knowledge for NP}
\label{sec:nizk-np}

A building block in our construction of NIZK argument systems for $\QMA$ are non-interactive argument systems for $\NP$ in the CRS model (or with CRS setup). Based on the work of~\cite{canetti2019fiat,peikert2019noninteractive} it is possible to construct such argument systems satisfying both adaptive soundness and adaptive zero-knowledge assuming only the hardness of the LWE problem. Here, we require such arguments that maintain the soundness and zero-knowledge properties in case the adversarial party may be quantum. We give the definitions and sketch how quantum security follows almost immediately by adapting the same arguments from~\cite{canetti2019fiat} and using quantum-security of the LWE-based correlation intractable hash family from~\cite{peikert2019noninteractive}. 

For an $\NP$ relation $R$ we let $\mathcal{L}(R) = \{x: \exists w, (x,w)\in R\}$. 

\begin{definition}[NIZK argument system for NP relation]
\label{def:nizk-argument-for-np}
An NIZK argument system for an $\NP$ relation $R$ is a non-interactive protocol with setup $\Pi=(S,P,V)$ such that in addition:
\begin{enumerate}
\item $S = \{S_{\mu n}\}$ is a classical polynomial-time family of circuits that depend on the security parameter $\mu$ and an instance size parameter $n$ and return a common reference string $\crs$. 
\item $P = \{P_{\mu n}\}$ is a classical polynomial-time family of circuits that take as input $\crs$, an instance $x$ and a witness $w$ and return a proof $\pi$.
\item $V = \{V_{\mu n}\}$ is a classical polynomial-time family of circuits that take as input $\crs$, an instance $x$ and a proof $\pi$ and return a decision $b\in\{0,1\}$. If $b=1$ then we say that $V$ accepts, and otherwise we say that it rejects. 
\item $P$ and $V$ do not require a setup phase (other than receiving the common reference string $\crs$ from $S$).
\item $\Pi$ is quantum computational zero-knowledge, as in Definition \ref{def: zk np}.
\end{enumerate}
The argument system should satisfy the completeness requirement: for every integer $\mu,n$, for every $(x,w)\in R$ such that $|x|=n$, $V_{\mu n}(\crs,x,\pi)=1$ with probability $1$ when $\crs\leftarrow S_{\mu n}()$ and $\pi\leftarrow P_{\mu n}(\crs,x,w)$. 
\end{definition}

We define adaptive soundness and adaptive ZK for the case of quantum adversaries. 

\begin{definition}[Adaptive soundness]\label{def:adaptive-soundness}
A NIZK argument $\Pi$ for $R$ has \emph{adaptive soundness} $s$ if for every quantum polynomial time  $P^* = \{P^*_{\mu n}\}$ and all $\mu$,
\[ \Pr_{\crs\leftarrow S_{\mu n}(),\;(x,\pi)\leftarrow P_{\mu n}^*(\crs)}\big( x\notin \mathcal{L}(R) \wedge V(\crs,x,\pi)=1\big)\leq s(\mu)\;.\]
\end{definition}

\begin{definition}[Adaptive zero-knowledge]\label{def:adaptive-zk}
Let $\Pi = (S, P, V)$ be an NIZK argument system for the relation $R$. Let $V^* = \{V^*_{\mu n}\}$ be a quantum polynomial time (possibly dishonest) verifier for $\Pi$ which takes quantum auxiliary input $\sigma$. Let $Sim = \{Sim_{\mu n}\}$ be a zero-knowledge simulator for $V^*$ which consists of two stages, $Sim_1$ and $Sim_2$, the first of which (roughly speaking) takes the place of $S$ in $\Pi$, and the second of which takes the place of $P$. For notational convenience, we define distributions $D_1$ and $D_2$ through the games \textbf{Game 1} and \textbf{Game 2}:

\textbf{Game 1 (honest prover)}
\begin{enumerate}
\item $S_{\mu n}$, the setup phase of $\Pi$, is run in order to generate a CRS $\crs$.
\item $V^*_{\mu n}$ is given access to $\crs$ and to $\sigma$, and produces an instance $x$ and a witness $w$ such that $(x, w) \in R$, in addition to quantum auxiliary output $\tau$ (which may be correlated in an arbitrary way with $(x, w)$).
\item $P_{\mu n}$, the honest prover, is run on $(\crs, x, w)$.
\end{enumerate}

$D_1$ is defined to be the distribution of the following quantities, produced through an execution of \textbf{Game 1}:
\begin{equation*}
(\crs, P_{\mu n}(\crs, x, w), \tau)
\end{equation*}

\textbf{Game 2 (simulator)}
\begin{enumerate}
\item $Sim_{1, \mu n}$, the setup phase of $Sim$, is run in order to generate a CRS $\crs$.
\item $V^*_{\mu n}$ is given access to $\crs$ and to $\sigma$, and produces an instance $x$ and a witness $w$ such that $(x, w) \in R$, in addition to quantum auxiliary output $\tau$ (which may be correlated in an arbitrary way with $(x, w)$).
\item $Sim_{2, \mu n}$, the second stage of the simulator, is run on $(\crs, x)$. (Note that $Sim_2$ is \emph{not} given access to $w$.)
\end{enumerate}

$D_2$ is defined to be the distribution of
\begin{equation*}
\big(\crs, Sim_{2, \mu n} (\crs, x), \tau \big)
\end{equation*}

produced through an execution of \textbf{Game 2}.

An NIZK argument $\Pi$ for $R$ is \emph{adaptive (computational) zero-knowledge} if there is a quantum polynomial time simulator $Sim = \{Sim_{\mu n}\}$ such that, for every quantum polynomial time  $V^* = \{V^*_{\mu n}\}$, the distribution ensembles $\{D_1\}_{\mu n}$ and $\{D_2\}_{\mu n}$ are, for all quantum auxiliary inputs $\sigma$ to $V^*_{\mu n}$, indistinguishable with respect to quantum polynomial-time distinguishers.
\end{definition}

\begin{remark}
\label{remark:adaptive-zk}
Definition \ref{def:adaptive-zk} extends in a straightforward way to statistical zero-knowledge, to QMA-relations (see Definition \ref{def:qma-relation}), and to argument systems with preprocessing as well as CRS setup. Specifically, in order to get statistical zero-knowledge, one would require that $\{D_1\}_{\mu n}$ and $\{D_2\}_{\mu n}$ be statistically instead of computationally indistinguishable; in order to extend the definition to QMA-relations, one would simply allow the relation $R$ to be a QMA-relation; and, in order to accommodate argument systems with preprocessing, one would allow $V^*$ to choose an instance $x$, together with a witness $w$ for $x$, in a way which depends on all the parts of the transcript of the preprocessing phase to which it has access. These parts of the transcript, $T_V$, would then take the place of $\crs$ in \textbf{Game 2}. \textbf{Game 1}, meanwhile, would remain a full execution of the real protocol with the honest prover. For more details and motivation, see~\cite{feige1999multiple}.
\end{remark}

\begin{lem}\label{lem:nizk-np}
For any $\NP$ relation $R$ there is a NIZK argument system with negligible adaptive soundness, which is adaptively ZK against quantum adversaries, assuming the hardness of LWE and the existence of statistically binding and quantum computationally concealing commitment schemes.

Moreover, there is a quantum polynomial-time simulator $\{Sim_{\mu n}\}$ that satisfies Definition~\ref{def:adaptive-zk} and such that $\{Sim_{\mu n}\}$ is \emph{straight-line} in the following sense: 
\begin{enumerate}
\item $Sim_{\mu n}$ first generates a CRS $\crs$ without having access to $V^*$ or $\sigma$;
\item $Sim_{\mu n}$ then executes $V^*_{\mu n}(\crs,\sigma)$ to obtain $(x,\tau)$;
\item Finally, $Sim_{\mu n}$ (classically) computes $\pi$ directly from $(\crs,x)$ and returns $(\crs,\pi,\tau)$.
\end{enumerate}
\end{lem}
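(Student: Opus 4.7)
The plan is to follow the template of \cite{canetti2019fiat, peikert2019noninteractive}, in which an NIZK argument system for any $\NP$ relation $R$ is built by applying the Fiat--Shamir transform to a suitable three-message public-coin sigma protocol $\Sigma$ using a correlation intractable (CI) hash family $\mathcal{H}$, and to verify that each step of the analysis survives when the adversary is QPT rather than PPT. Concretely, I would take $\Sigma$ to be an HVZK sigma protocol for $R$ (e.g.\ the Blum sigma protocol for $\NP$-complete Graph Hamiltonicity) whose first message uses a statistically binding and quantum computationally concealing commitment scheme, and whose ``bad-challenge'' function --- mapping $(x,\alpha)$ to the unique challenge $\beta^{*}$ (if any) that could complete an accepting transcript when $x \notin \mathcal{L}(R)$ --- lies in the class that the Peikert--Shiehian CI hash family is intractable against. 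The CRS is a hash key $k$ for $\mathcal{H}$; the proof $\pi=(\alpha,\gamma)$ is generated by running $\Sigma$'s prover on challenge $\beta = H_k(x,\alpha)$; the verifier recomputes $\beta$ and checks $\Sigma$'s accept predicate.

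For adaptive quantum soundness, the reduction is standard and black-box: any QPT $P^{*}$ that produces $(x,\pi)$ with $x \notin \mathcal{L}(R)$ and $V$ accepting yields a QPT adversary against the correlation intractability of $\mathcal{H}$ by outputting $(x,\alpha)$, whose hash image is forced to be the bad challenge. Because \cite{peikert2019noninteractive} base the CI property directly on LWE and their reduction is itself black-box in the adversary (no rewinding, no measurement of internal quantum state), correlation intractability against QPT adversaries follows immediately from the postulated quantum hardness of LWE, and hence so does adaptive soundness.

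For adaptive zero-knowledge with a straight-line simulator, I would rely on the programmability property established in \cite{peikert2019noninteractive}: the CI hash key can be sampled jointly with a trapdoor $\mathrm{td}$, in a way indistinguishable from honest sampling against QPT distinguishers, such that $\mathrm{td}$ enables the simulator to produce valid-looking proofs without the witness. The simulator $Sim_{\mu n}$ proceeds exactly as required by the lemma: first, it samples $(\crs, \mathrm{td})$ from the trapdoored distribution, independently of $V^{*}$ and $\sigma$; second, it executes $V^{*}_{\mu n}(\crs,\sigma)$ in a single pass to obtain $(x,\tau)$; third, it classically computes $\pi$ from $(\crs, x, \mathrm{td})$ by invoking the HVZK simulator of $\Sigma$ and using $\mathrm{td}$ to make the Fiat--Shamir transcript consistent. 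Indistinguishability of the distributions $D_1$ and $D_2$ from Definition~\ref{def:adaptive-zk} then follows by a two-step hybrid: first swap the honest CRS for the trapdoored one (indistinguishable against QPT by quantum LWE), then swap the real transcript for the HVZK-simulated one (indistinguishable against QPT by quantum computational hiding of the commitment used in $\Sigma$).

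The main obstacle is simply to confirm that every reduction in \cite{canetti2019fiat,peikert2019noninteractive} survives against quantum adversaries. Since Fiat--Shamir in this regime is non-rewinding and each sub-reduction treats the adversary as a black box that is only queried on classical inputs producing classical outputs, the classical analyses transfer essentially unchanged: the only places where quantum security is actually invoked are the quantum hardness of LWE (our assumption) and the quantum computational hiding of the commitment scheme (also assumed, and moreover realized by Definition~\ref{def:commitment}). No quantum rewinding or cloning obstruction arises, because the simulator never needs to rewind $V^{*}$ nor to manipulate its internal quantum state. A minor technical point is to check that the trapdoored CRS distribution is compatible with the strict ``sample $\crs$ first, then invoke $V^{*}$ once, then produce $\pi$'' format required by the lemma; this is exactly the straight-line guarantee that the Peikert--Shiehian construction provides, so no additional work is needed beyond citing it.
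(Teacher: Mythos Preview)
Your proposal is correct and follows essentially the same approach as the paper: Fiat--Shamir applied to the Blum/Feige--Lapidot--Shamir Hamiltonicity sigma protocol using the Peikert--Shiehian correlation-intractable hash family, with soundness via CI (reducing to quantum LWE) and straight-line zero-knowledge via programmability plus the computational hiding of the commitment. The paper spells out the simulator more concretely --- it pre-samples the challenges $e_i$ and the instance-independent first messages $a_i$, programs $k^*$ so that $h_{k^*}(a_i)=e_i$, sets $\crs=(pk,k^*)$, and only then runs $V^*$ to obtain $x$ before completing the third messages --- whereas you package the pre-sampled data as an abstract ``trapdoor'' $\mathrm{td}$; but the substance is the same, and your observation that all reductions are black-box and non-rewinding (hence quantum-safe) is exactly the point the paper makes.
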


\begin{proof}
Fix $\mu$ and $n$ and let $\textsf{Com}$ be a statistically binding and quantum computationally concealing commitment scheme that can be used to commit to $n\times n$ Boolean matrices. We write a key for $\textsf{Com}$ as $pk$. 

Our starting point is the $3$-message protocol for graph Hamiltonicity from~\cite{feige1999multiple}, as described in~\cite[Section 5.1]{canetti2019fiat}, and the NIZK argument $\tilde{\Pi}$ for it given in~\cite[Construction 5.2]{canetti2019fiat}. 

\paragraph{Argument system.}
We recall the protocol to establish notation, referring to~\cite[Section 5.1]{canetti2019fiat} for more details. 
In the $3$-message protocol, the prover first sends a commitment $a=\textsf{Com}.\textsf{commit}(pk, H, s)$ to a graph $H$ that is a uniformly random permutation $\pi(C_n)$, for $C_n$ the adjacency matrix of a canonically fixed cycle. The verifier sends a random challenge $e\in\{0,1\}$. If $e=0$, the prover reveals $H$. If $e=1$, the prover, given $(x,w)\in R$ where $x$ is a graph and $w$ a Hamiltonian cycle, selects a permutation $\sigma$ such that $\sigma^{-1}(C_n)$ is the cycle $w$. The prover sends $\pi\circ\sigma$ as well as decommitments to entries of $H$ that correspond to non-edges of $\pi\circ\sigma(G)$. (In the actual protocol, the above steps are repeated $t$ times in parallel. For simplicity, we describe the protocol for a single run.)

Next we recall the non-interactive variant, protocol $\tilde{\Pi}$ in~\cite[Construction 5.2]{canetti2019fiat}. Here $\crs = (pk,k)$ where $pk$ is a key for the commitment scheme and $k$ is a key for a family $\mathcal{H}$ of hash functions satisfying certain properties (correlation intractability and programmability, see~\cite{canetti2019fiat}). The prover then computes $a$ as above, sets $e=h_k(a)$, and computes the third message $z$ as above. The prover returns $\pi=(a,e,z)$. 

The only difference with~\cite[Construction 5.2]{canetti2019fiat} is that we use a statistically binding commitment scheme instead of their PKE, as this is needed for soundness.  We instantiate $\mathcal{H}$ using the correlation intractable family of hash functions from~\cite{peikert2019noninteractive}.

In~\cite[Theorem 5.5]{canetti2019fiat} it is shown that $\tilde{\Pi}$ is a NIZK argument system for $\NP$ in the common reference string model satisfying both adaptive soundness and adaptive zero-knowledge against classical adversaries. We briefly sketch how their argument extends to the quantum case. 

\paragraph{Adaptive soundness.}
First we consider adaptive soundness. Here the argument is essentially identical; we follow the proof of~\cite[Lemma 5.6]{canetti2019fiat}. Let $( \crs,x,\pi=(a,e,z))$ be such that $V(\crs,x,\pi)=1$. Then for every $i\in\{1,\ldots,t\}$ (recall that the basic $3$-message protocol is repeated $t$ times before applying Fiat-Shamir), using that the commitment is unconditionally binding it must be that $e_i=0$ if and only if $a_i$ is a commitment to a cycle. This means that any $P^*$ which breaks the adaptive soundness condition from Definition~\ref{def:adaptive-soundness}, given only $\crs$, returns $(a,e)\in R$, where is the set of pairs where for all $i$, $a_i$ is a commitment to a cycle if and only if $e_i=0$, with probability at least $\nu(\mu)$. By an averaging argument, there exists a fixed $\crs$ for which the same holds. This contradicts the correlation intractability of $\mathcal{H}$ which by~\cite{peikert2019noninteractive} reduces to the quantum hardness of LWE. 

\paragraph{Adaptive zero-knowledge.}
Next we consider adaptive zero-knowledge. For this we construct a simulator, mostly following the proof of~\cite[Proposition 7.6]{canetti2018fiat}. Let $V^*$ be quantum polynomial time. First, $Sim_{\mu n}(\sigma)$ generates a key $pk$ for $\textsf{Com}$. Then, it selects uniformly random challenges $\{e_i\}$ and permutations $\{\pi_i\}$. If $e_i=0$, it sets $a_i\leftarrow \textsf{Com}.\textsf{commit}(pk, \pi_i(C_n), s)$. If $e_i=1$, it sets $a_i\leftarrow \textsf{Com}.\textsf{commit}(pk, 0, s)$ (for some randomness $s$).

The simulator then samples a key $k^*$ such that $h_{k^*}(a_i)=e_i$ for all $i\in\{1,\ldots,t\}$ and creates $\crs = (pk,k^*)$.\footnote{The possibility for selecting $k^*$ in a way that is indistinguishable from a random key comes from the \emph{programmability} of $\mathcal{H}$; see~\cite{canetti2019fiat}.}  Then, the simulator executes $V^*(\crs,\sigma)$ to obtain $(x,w,\tau)$. Finally, if $e_i=0$ the simulator sets $z_i=\pi_i(C_n)$. If $e_i=1$ it selects a uniformly random permutation $\sigma_i$ and sets $z_i$ to be $\sigma_i$ together with decommitments to the entries of $a_i$ that correspond to non-edges of $\sigma_i(G)$.

This simulator is ``straight-line simulator'' and does not involve rewinding $V^*$. Therefore, the computational zero-knowledge property follows from the computationally hiding property of $\textsf{Com}$ and the programmability of $\mathcal{H}$ (to argue that $k^*$ is indistinguishable from a random key).  
\end{proof}

\subsection{Proofs and arguments of quantum knowledge}
\label{sec: poqk}

The content of this subsection, as it pertains to \emph{proofs of quantum knowledge}, was written in collaboration with Broadbent and Grilo, and appears with slight differences in~\cite{broadbent2019zeroknowledge}.

A \emph{Proof of Knowledge (PoK)} is an interactive proof system
for some relation $R$ such that if
the verifier accepts some input $x$ with high enough
probability, then she is ``convinced'' that the prover ``knows'' some witness $w$ such
that $(x,w) \in R$. 
This notion is formalized by requiring the existence of an efficient \emph{extractor}~$K$
that is able to return a witness for $x$ when given oracle access to
the prover (including the ability to rewind its actions, in the classical case).

\begin{definition}[Classical Proof of Knowledge]
  \label{def:pok}
  Let $R \subseteq \mathcal{X} \times \mathcal{Y}$ be a relation.  A proof
  system $(P,V)$ for $R$ is a Proof of Knowledge for $R$ with knowledge error $\kappa$ if there
  exists a polynomial $p > 0$ and a polynomial-time machine $K$
  such that for any classical interactive machine $P^*$, any $\mu \in \mathbb{N}$, any polynomial $l>0$, any instance $x \in \{0,1\}^{n}$ for $n=\poly(\mu)$ and any string $y$: 
  if the execution $(P^*(x, y), V(x))$ returns $1$ with probability
  $\eps > \kappa(\mu)$, we have
  \[\Pr\left(\left(x, K^{P^*(x,y)}(x)\right) \in R \right) \geq p\left(\eps - \kappa(\mu),
  \frac{1}{\mu}\right). \]
\end{definition}
In this definition, $y$ corresponds to the side information that $P^*$ has,
possibly including some $w$ such that $(x,w) \in R$.

\medskip

PoKs were originally defined only considering classical adversaries, and 
this notion was first studied in the quantum setting by Unruh~\cite{unruh2012quantum}.
The first issue that arises in the quantum setting is to formalize the type of query that the extractor $K$ is able
to make. In order to do so, we
assume that $P^*$ always performs a fixed unitary operation $U$ when invoked. Notice that this can
be assumed without loss of generality since $(i)$ we can always consider a
purification of $P^*$, $(ii)$ all measurements can be
performed coherently, and $(iii)$ $P^*$ can keep track of the round of communication
in some internal register and $U$ can implicitly control on this value.
Then, the quantum extractor $K$ has oracle access to $P^*$ in the sense that it may
perform $U$ and $U^\dagger$ on the message register and private register
of $P^*$, but has no direct access to the latter.
We denote the extractor
$K$ with such oracle access to $P^*$ by $K^{\ket{P^*(x,\rho)}}$, where
$\rho$ is some (quantum) side information held by $P^*$. 

\begin{definition}[Quantum Proof of (Classical) Knowledge]
  \label{def:qpok}
  Let $R \subseteq \mathcal{X} \times \mathcal{Y}$ be a relation.  A proof
  system $(P,V)$ for $R$ is a Quantum Proof of Knowledge for $R$ with knowledge error $\kappa$ 
  if there
  exists a polynomial $p > 0$ and a quantum polynomial-time machine $K$
  such that for any quantum interactive machine  $P^*$, any $\mu \in \mathbb{N}$, any polynomial $l>0$, any instance $x \in \{0,1\}^{n}$ for $n=\poly(\mu)$ and any state $\rho$: 
  if the execution $(P^*(x, \rho), V(x))$ returns $1$ with probability
  $\eps > \kappa(\mu)$, we have
  \[\Pr\left(\left(x, K^{\ket{P^*(x,\rho)}}(x)\right) \in R \right) \geq p\left(\eps - \kappa(\mu),
  \frac{1}{\mu}\right). \]
\end{definition}

\begin{remark}\label{rem:repetition-extractor}
  In the fully classical case of \ref{def:pok}, the extractor could repeat the procedure in sequence
  polynomially many times in order to increase the probability of a successful extraction (which, in Definitions \ref{def:pok} and \ref{def:qpok}, is allowed to be inverse-polynomially small in the security parameter).
  This is not known to be possible for a general quantum $P^*$, since the final measurement to extract the witness could possibly
  disturb the internal state of $P^*$, making it impossible to simulate the
  side information that $P^*$ had originally in the subsequent simulations.
\end{remark}

We finally move on to the full quantum setting, where we want a {\em Proof of
Quantum Knowledge} (PoQK). Intuitively, at the end of the protocol, we would like the verifier
to be `convinced' that the prover `has' a {\em quantum witness} for the input $x$. The main difference from Quantum Proofs of (classical) Knowledge is that in the case of $\QMA$ relations, as defined in section \ref{sec: arguments}, the notion of a witness is not as unambiguous as in the case of $\NP$ relations. We introduce a parameter $q$ which quantifies the probability that the witness returned by the extractor makes the verifying circuit accept. We refer to this parameter as the ``quality'' of the PoQK. We also allow the extractor $K$ to return a special symbol ``$\perp$'' in a designated portion of the output register, and we require that either the extractor returns ``$\perp$'' or it returns a witness of a certain quality. Formally, we assume that the output of the extractor is measured according to $\{\ket{\perp}\bra{\perp}, I-\ket{\perp}\bra{\perp}\}$. We ask that the outcome of this measurement be the latter with at least inverse-polynomial probability, and that, conditioned on the latter outcome, the post-measurement state be a witness (of a certain quality).

\begin{definition}[Proof of Quantum Knowledge]
  \label{def:poq-single}
  Let $(Q,\alpha,\beta)$ be a \QMA{} relation.
  A proof
  system $(P,V)$ is a Proof of Quantum Knowledge for $(Q,\alpha,\beta)$ with knowledge
  error 
  $\kappa$ and quality $q>\beta$, if there exists a polynomial $p > 0$ and a quantum polynomial-time machine $K$ such that for any quantum interactive machine $P^*$, any $\mu \in \mathbb{N}$, any polynomial $l>0$, any instance $x \in \{0,1\}^{n}$ for $n=\poly(\mu)$ and any state $\rho$: 
  if the execution $(P^*(x, \rho), V(x))$ returns $1$ with probability
  $\eps > \kappa(\mu)$, we have
 \[\Pr\left( K^{\ket{P^*(x,\rho)}}(x) \neq \text{``$\perp$''} \text{ and }\left(x, \frac{(I-\ket{\perp}\bra{\perp})K^{\ket{P^*(x,\rho)}}(x)}{\Tr[(I-\ket{\perp}\bra{\perp})K^{\ket{P^*(x,\rho)}}(x)]}\right) \in R_{Q,q(|x|,\eps)}\right)  \geq
  p\left(\eps-\kappa(\mu), \frac{1}{\mu}\right).
  \]
\end{definition}

We also define \textit{arguments} of quantum knowledge (with a setup). The main difference is that the proof system is replaced by an argument system with setup. Moreover, the extractor is allowed to create the setup as they wish (they can ``impersonate'' the setup procedure $S$).

\begin{definition}[Quantum Argument of (Classical) Knowledge]
  \label{def:qaok}
  Let $R \subseteq \mathcal{X} \times \mathcal{Y}$ be a relation.  An argument
  system with setup $\Pi=(S,P,V)$ for $R$ is a Quantum Argument of Knowledge with setup for $R$ with knowledge error $\kappa$ 
  if there
  exists a polynomial $p > 0$ and a quantum polynomial-time machine $K$
  such that for any quantum polynomial-time interactive machine  $P^*$, any $\mu \in \mathbb{N}$, any polynomial $l>0$, any instance $x \in \{0,1\}^{n}$ for $n=\poly(\mu)$ and any state $\rho$: 
  if the execution $(S, P^*(x, \rho), V(x))$ returns $1$ with probability
  $\eps > \kappa(\mu)$, we have
  \[\Pr\left(\left(x, K^{\ket{P^*(x,\rho)}}(x)\right) \in R \right) \geq p\left(\eps - \kappa(\mu),
  \frac{1}{\mu}\right)\;. \]
\end{definition}

\begin{definition}[Argument of Quantum Knowledge]
  \label{def:aoq-single}
  Let $(Q,\alpha,\beta)$ be a \QMA{} relation.
  An argument 
  system with setup $\Pi=(S,P,V)$ is an Argument of Quantum Knowledge with setup for $(Q,\alpha,\beta)$ with knowledge
  error 
  $\kappa$ and quality $q>\beta$ if there exists a polynomial $p>0$ and a quantum polynomial-time interactive machine $K$ such that for any quantum polynomial-time interactive machine $P^*$, any $\mu \in \mathbb{N}$, any polynomial $l>0$, any instance $x \in \{0,1\}^{n}$ for $n=\poly(\mu)$ and any state $\rho$: 
  if the execution $(S, P^*(x, \rho), V(x))$ returns $1$ with probability
  $\eps > \kappa(\mu)$, we have
\[\Pr\left( K^{\ket{P^*(x,\rho)}}(x) \neq \text{``$\perp$''} \text{ and }\left(x, \frac{(I-\ket{\perp}\bra{\perp})K^{\ket{P^*(x,\rho)}}(x)}{\Tr[(I-\ket{\perp}\bra{\perp})K^{\ket{P^*(x,\rho)}}(x)]}\right) \in R_{Q,q(|x|,\eps)}\right)  \geq
  p\left(\eps-\kappa(\mu), \frac{1}{\mu}\right)\;.
  \]
 \end{definition}

As for the several possible specializations to the definition of Argument of Quantum Knowledge with setup based on the properties of the underlying argument system (NIZK, CRS setup, preprocessing etc.), we naturally apply the terminology introduced in sections \ref{sec: arguments}, \ref{sec: zk} and \ref{sec:nizk-qma}.

\subsubsection{Reducing the knowledge error sequentially}
\label{sec:seq-rep}

One of the most natural properties of Proofs of Knowledge that one investigates in the classical setting is reducing the knowledge error by sequential repetition. Classically, it is well-known that the knowledge error drops exponentially fast in the number of sequential repetitions \cite{bellare1992defining}. 
Just like in the classical case, sequential repetition of a proof of quantum knowledge reduces the knowledge error exponentially fast. This is an immediate consequence of the proof of a lemma from Unruh \cite{unruh2012quantum} for the case of quantum Proofs of (classical) Knowledge.

\begin{lem}
\label{lem: seq-rep}
Let $n = n(\mu)$ be a polynomially bounded and efficiently computable function. Let $(P, V)$ be a Proof of Quantum Knowledge for a $\QMA$ relation $(Q,\alpha,\beta)$ with knowledge error $\kappa$. Let $(P', V')$ be the proof system consisting of $n$ sequential executions of $(P,V)$ (where $V'$ accepts iff all executions accept). Then $(P',V')$ is a Proof of Quantum Knowledge for $(Q,\alpha,\beta)$ with knowledge error $\kappa^n$.
\end{lem}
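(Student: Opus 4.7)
My plan is to construct a quantum polynomial-time extractor $K'$ for the repeated protocol $(P',V')$ from the single-round extractor $K$ guaranteed by hypothesis, via a ``good-round'' selection argument. On input $x$ and with oracle access to a purified repeated prover $P^{*\prime}$, the extractor $K'$ samples a uniformly random index $i\in\{1,\ldots,n\}$, internally simulates the first $i-1$ executions of $(P^{*\prime},V)$ coherently by alternately applying the prover's per-round unitary and the unitary dilation of the honest verifier, and then measures the verifier's accept/reject flags in those rounds. If any flag reports rejection, $K'$ outputs ``$\perp$''. Otherwise, the post-measurement state on the prover's registers is taken to be the initial internal state of a \emph{residual prover} $P^*_i$ for the single-shot protocol, whose single-invocation unitary is the $i$-th round unitary of $P^{*\prime}$. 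Having set up $P^*_i$, the extractor invokes $K$ on $(x, P^*_i)$ and forwards its output.

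Formally, $P^*_i$ is a valid quantum interactive machine in the sense of Section~\ref{sec: interactive machines}: oracle access to $P^*_i$ (including to its inverse) reduces to applying the corresponding round unitary of $P^{*\prime}$ (and its inverse) on the already-simulated registers, which $K'$ can implement from its oracle access to $P^{*\prime}$. By construction, the probability that $V$ accepts in a single execution against $P^*_i$ equals the conditional acceptance probability $\epsilon_i$ of the $i$-th round of $(P^{*\prime},V')$ given that rounds $1,\ldots,i-1$ accepted, and the overall acceptance probability of $(P^{*\prime},V')$ is $\epsilon=\prod_{i=1}^n\epsilon_i$.

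The analysis now proceeds by a good-round argument. If $\epsilon > \kappa(\mu)^n$, then $\max_i \epsilon_i \geq \epsilon^{1/n} > \kappa(\mu)$; let $i^*$ achieve the maximum. For this index, the hypothesis on $K$ ensures that extraction succeeds with probability at least $p(\epsilon_{i^*}-\kappa(\mu), 1/\mu)$. An application of the mean value theorem to $t \mapsto t^{1/n}$ on the interval $[\kappa(\mu)^n,\epsilon]\subseteq [0,1]$, where the derivative $\frac{1}{n} t^{1/n-1}$ is bounded below by $1/n$, yields $\epsilon^{1/n}-\kappa(\mu) \geq (\epsilon - \kappa(\mu)^n)/n$, so that the lower bound on $K$'s success is polynomial in $(\epsilon-\kappa(\mu)^n)$ and $1/\mu$. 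Combining with the $1/n$ probability that $K'$ samples $i=i^*$ and the bound $\prod_{j<i^*}\epsilon_j \geq \epsilon$ (using $\epsilon_j\leq 1$) on the probability that the prerequisite prior rounds all accept, this gives an overall extraction probability of at least $\frac{\epsilon}{n}\cdot p\bigl((\epsilon-\kappa(\mu)^n)/n,\,1/\mu\bigr)$, which is of the form required by Definition~\ref{def:poq-single}; the extracted witness has quality $q(|x|,\epsilon^{1/n}) > \beta$, as required.

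The main obstacle is quantum in nature. Unlike the classical case, where the extractor could transparently re-run the first $i-1$ rounds every time $K$ wishes to rewind, in our setting the simulation of those rounds must be compatible with the coherent $U/U^\dagger$ oracle access demanded by Definition~\ref{def:qpok} and must not collapse the prover's internal quantum state before $K$ begins its work. As in Unruh~\cite{unruh2012quantum}, this is resolved by performing the preparatory simulation unitarily (purifying the verifier and deferring all measurements on the verifier's classical registers), so that once the state of $P^*_i$ is prepared, the oracle exposed to $K$ really is a single unitary with a well-defined inverse acting on $P^*_i$'s internal registers. Once this is in place, the remaining details mirror Unruh's argument for the classical-witness case, with the good-round selection above taking over for the sequential amplification step.
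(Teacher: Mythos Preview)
Your proposal is correct and follows essentially the same route as the paper: the paper's proof consists of a one-line appeal to Unruh's sequential-repetition lemma for quantum proofs of classical knowledge~\cite{unruh2012quantum}, observing that nothing in that argument uses the classicality of the extracted witness. What you have written is a faithful reconstruction of Unruh's good-round argument (random index $i$, coherent simulation of the first $i-1$ rounds via a purified verifier, invocation of the single-shot extractor on the residual prover, and the mean-value-theorem bound $\epsilon^{1/n}-\kappa \geq (\epsilon-\kappa^n)/n$), so you have in fact supplied more detail than the paper does.
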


\begin{proof}
Unruh's argument applies to interactive machines with quantum auxiliary input, which is our setting, and does not make use at all of the fact that the extractor's output is a classical string (the argument proceeds entirely at the level of the success probability of the extractor for the sequentially composed proof and for the atomic proof). 
\end{proof}

The only difference from \cite{unruh2012quantum} in the Quantum Knowledge case is that if the original PoQK had the additional property of existence of an efficient prover which is accepted with probability close to $1$ when given a witness as auxiliary input, the corresponding efficient prover for the $n$-sequentially repeated protocol requires $n$ copies of a witness. Formally, this does not affect the completeness property of the proof system since this is formulated with respect to computationally unbounded provers, who can generate as many witnesses as they wish on their own.

An analogous lemma holds for arguments of quantum knowledge. However, since in this case the prover is computationally bounded, the completeness parameter will, in general, drop below soundness, unless we give the prover multiple copies of the witness, instead of a single one. In this case we say that the argument has completeness with respect to the alternative definition, Definition~\ref{def:compl-alt}. 

\section{The protocol}
\label{sec: protocol}

\subsection{Notation and predicates}

For a circuit $Q_n$, we denote by $H(Q_n)$ the local Clifford Hamiltonian obtained by performing the circuit-to-Clifford-Hamiltonian reduction from~\cite[Section 2]{broadbent2016zero}. In the rest of this section, $Q_n$ will always be taken from a family $Q = \{Q_n\}_{n \in \mathbb{N}}$, where $Q$ specifies a $\QMA$ relation $(Q,\alpha,\beta)$, and we will let the $r$-th term of the Clifford Hamiltonian $H(Q_n)$ be $C_r^*\ket{0^k}\bra{0^k} C_r$. So, 
\begin{equation}
\label{eq:clifford-hamiltonian}
H(Q_n) =  \sum_{r=1}^m C_r^*\ket{0^k}\bra{0^k} C_r\;,
\end{equation}
where each $C_r$ is a $k$-local Clifford unitary.
(Following \cite{broadbent2016zero}, we use the short-hand $\ket{0^k}\bra{0^k}$ to denote a projector which is $\ket{0}\bra{0}$ on at most $k$ qubits and identity everywhere else. As shown in~\cite{broadbent2016zero}, we can take $k=5$ without loss of generality.)

We denote by $\mathcal{H}_{\textsf{clock}} \otimes \mathcal{H}_{\textsf{instance}} \otimes \mathcal{H}_{\textsf{witness}}$ the Hilbert space that $H(Q_n)$ acts on. For notational convenience, we assume in the rest of this section that $\mathcal{H}_{\textsf{instance}}$ is $n$ qubits, that is, $\mathcal{H}_{\textsf{instance}} = \mathbb{C}^{2^n}$.

For clarity and notational convenience, we define predicates $R_r$ and $Q$ below (along the lines of Definitions \ref{def:bjsw-R_r} and \ref{def:bjsw-Q}) which we will refer to in our description of our protocol.

\begin{remark}

Predicates $Q$ and $R_r$ are defined with respect to a fixed problem instance $x$ and a fixed Clifford Hamiltonian $H$, where

\begin{equation*}
H = \sum_{r = 1}^m C_r^*\ket{0^k} \bra{0^k} C_r
\end{equation*}

for some $m$ that is polynomial in $n$.

\end{remark}

\begin{definition}[Definition of $R_r$]
\label{def:new-R_r}

As in section \ref{sec:bjsw}, we write $\mathcal{D}_N$ to represent the set of all valid (classical) $N$-bit codewords of a particular error-correcting code. We will generally refer to this error-correcting code as `the concatenated Steane code'. (This code is the same concatenated Steane code which is outlined in \cite[Appendix A.6]{broadbent2016zero}.) We may write $\mathcal{D}_N = \mathcal{D}^0_N \cup \mathcal{D}^1_N$, where $\mathcal{D}^0_N$ is the set of all codewords that encode 0, and $\mathcal{D}^1_N$ is defined analogously.

We assume that $r$ takes values in $[m+1]$, where $m$ is the number of terms in the Clifford Hamiltonian $H$. Our $R_r$ is defined differently when $r \in [m]$ and when $r = m+1$.

If $r \in [m]$: we define $R_r(t, \pi, u) = \tilde{R}_r(t, \pi, u)$, where $\tilde{R}_r$ is the predicate defined in Definition \ref{def:bjsw-R_r}.

If $r = m+1$, then we set $R_r = R_{m+1}$, where $R_{m+1}$ is defined below (Definition \ref{def:R_{m+1}}).

\end{definition}

\begin{definition}[Definition of $R_{m+1}$]

\label{def:R_{m+1}}

Let $u = u_{\textsf{clock}_1}, u_{\textsf{instance}_1}, \dots, u_{\textsf{instance}_{n}}$ be a string in  $\{0,1\}^{2N(n + 1)}$.

\begin{remark}

Each $u_{\textsf{label}}$, for $\textsf{label} \in \{\textsf{clock}_1, \textsf{instance}_1, \dots, \textsf{instance}_{n}\}$, is a $2N$-bit string, and intuitively represents the result of measuring the \emph{logical} qubit with an index specified by \textsf{label}. (For notational convenience in the exposition below, we replace the iterator \textsf{label} by the iterator $i$.) For example, $u_{\textsf{clock}_1}$ is the string that results from measuring the first \emph{logical} qubit of the clock register. The logical clock register consists of many logical qubits, and each logical qubit is encoded in $2N$ physical qubits as a result of applying the authentication code described in Figure \ref{fig: authentication code}.
\end{remark}

For $\pi \in S_{2N}$, and for each $i \in \{\textsf{clock}_1, \textsf{instance}_1, \ldots, \textsf{instance}_{n}\}$, define strings $p_i, q_i$ in $\{0,1\}^N$ such that $\pi(p_i \| q_i) = u_i$ (alternatively: $\pi^{-1}(u_i) = p_i \| q_i)$. The predicate $R_{m+1}(t, \pi, u)$ takes the value 1 if and only if the following two conditions (1. and 2.) are met:
\begin{enumerate}
\item \emph{Either} \\[-3em]
\begin{quote}
\item $p_{\textsf{clock}_1} \in \mathcal{D}_N^1$ (this corresponds to the \textit{first} qubit of the clock register, expressed in unary, being in state $1$, i.e. the clock register is \emph{not} at time $0$),
\end{quote}
\emph{or} \\[-2em]
\begin{quote}
For every $i \in \{\textsf{instance}_1, \ldots, \textsf{instance}_n \}$, $p_i \in \mathcal{D}^{x_i}_N$.
\end{quote}
\item $\braket{q_{\textsf{clock}_1}\, q_{\textsf{instance}_1} \cdots q_{\textsf{instance}_n} | t_{\textsf{clock}_1} \, t_{\textsf{instance}_1}\cdots t_{\textsf{instance}_n} }\neq 0$.
\end{enumerate}
\end{definition}

We now define our predicate $Q$ in terms of the $R_r$ defined in Definition \ref{def:new-R_r}.

\begin{definition}[Definition of $Q$]
\label{def:new-Q}
Let $d = (x_1,\ldots ,x_{2Np(n)}, y_1,\ldots, y_{2Np(n)})$ be a string in $\{0,1\}^{4Np(n)}$, for some polynomial $p(n)$ of $n$.
Define
\begin{align*}
\mathbb{P}_{m+1} = &\ket{0}\bra{0}_{\textsf{clock}_1} \otimes \big(I - \ket{x} \bra{x}\big)_{\textsf{instance}} \otimes I_{\textsf{witness}} \\
+ &(I - \ket{0}\bra{0})_{\textsf{clock}_1} \otimes I_{\textsf{instance}} \otimes I_{\textsf{witness}}
\end{align*}

where $\ket{x}\bra{x}$ is a shorthand for the projector onto the standard-basis bitstring $\braket{x}$, and
\begin{equation*}
C_{m+1} = I_{\textsf{clock}} \otimes I_{\textsf{instance}} \otimes I_{\textsf{witness}}.
\end{equation*}

For $r \in [m+1]$, define
\begin{equation*}
\mathbb{P}_r = 
\begin{cases}
	C_r^*\ket{0^k} \bra{0^k} C_r & r \in [m] \\
	\mathbb{P}_{m+1} & r = m+1\\
\end{cases}
\end{equation*}

Let $i_1,..,i_k$ be the indices of the qubits on which $\mathbb{P}_r$ acts non-trivially. Let $d' = (a',b') = (a'_{i_1}, \dots, a'_{i_k}, b'_{i_1}, \dots, b'_{i_k}) = (x_{2Ni_1 + 1}, \ldots, x_{2Ni_1 + 2N}, \ldots, x_{2Ni_k + 1}, \ldots, x_{2Ni_k + 2N}, y_{2Ni_1 + 1}, \ldots, y_{2Ni_1 + 2N}, \ldots, y_{2Ni_k + 1}, \ldots, y_{2Ni_k + 2N})$ be a string in $\{0,1\}^{4Nk}$. (The example below, wherein $k=2, N=2, i_1 = 1, i_2 = 3,$ and $d' = (a', b') = 01001000$, may clarify the notation.)

\includegraphics[width=\textwidth]{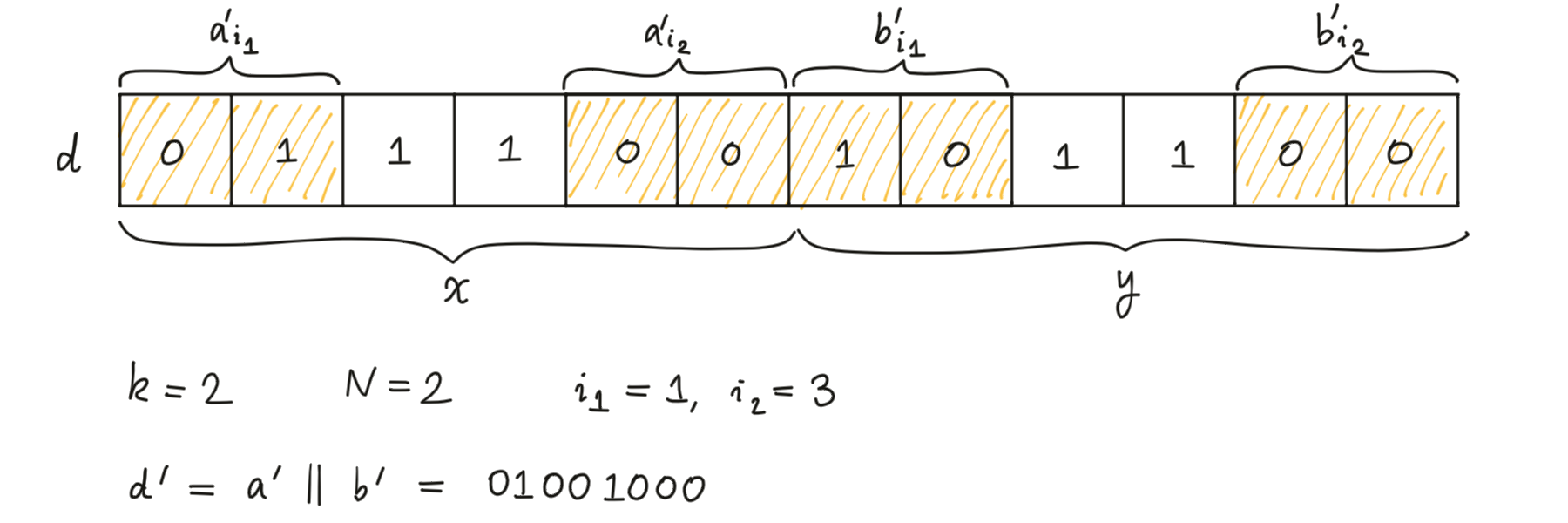}

Let $e_{i_1}, \ldots, e_{i_k}$ be the unique strings such that
\begin{gather}
C_r^{\otimes 2N}(X^{(a \oplus a')_{i_1}} Z^{(b \oplus b')_{i_1}} \otimes \cdots \otimes X^{(a \oplus a')_{i_k}} Z^{(b \oplus b')_{i_k}}) = \alpha(X^{e_{i_1}} Z^{f_{i_1}} \otimes \cdots \otimes X^{e_{i_k}} Z^{f_{i_k}}) C_r^{\otimes 2N}\label{eq:clifford-com}
\end{gather}

for some $\alpha \in \{1, i, -1, -i\}$ and some $f_{i_1}, \ldots, f_{i_k} \in \{0,1\}^{2N}$. (It is possible to efficiently compute $e = e_{i_1},\dots,e_{i_k}$ and $f = f_{i_1},\dots,f_{i_k}$ given $a, b$ and $C_r$.)

Predicate $Q$ is defined as follows:
$$Q(t, \pi, a, b, r, z, d) = R_r(t, \pi, z \oplus e_{i_1} \cdots e_{i_k}).$$

\end{definition}

\begin{comment}

\end{definition}
Finally, for notational convenience, we define predicate $\widetilde{Q}$ in terms of $Q$.

\begin{definition}[Definition $\widetilde{Q}$]\label{def:tilde-q}
$\widetilde{Q}$ is defined as follows. Let $d = (a',b')$ where $a' = a'_1\cdot \cdot \cdot a'_n, \,b' = b'_1 \cdot \cdot \cdot b'_n$ for $a'_1,..,a'_n, b'_1,..,b'_n \in \{0,1\}^{2N}$. $$\widetilde{Q}(t, \pi, a, b, r, u, d) = Q(t, \pi, a \oplus a', b \oplus b', r, u).$$

\end{definition}

\end{comment}

\subsection{The protocol}
\label{sec:prot-decs}

\emph{Parties.}
The argument system involves 
\begin{enumerate}
\item A (QPT) \emph{verifier} $V$,
\item A (QPT) \emph{prover} $P$, and
\item A (classical PPT) \emph{setup machine} $S$.
\end{enumerate}
The verifier sends a single quantum message to the prover in the preprocessing phase of the protocol, and the prover sends the verifier a single classical message in the instance-dependent phase of the protocol. $S$ sends an identical classical message to both the prover and the verifier during the preprocessing phase.

\emph{Inputs.}
(Unless otherwise stated, all inputs are common to all three parties.)
\begin{enumerate}
\item Preprocessing stage:
\begin{enumerate}
\item An instance size parameter $n$ and a security parameter $\mu$.
\item A $\QMA$ relation $(Q,\alpha,\beta)$.
\item The Clifford Hamiltonian $H(Q_n)$ (see equation \eqref{eq:clifford-hamiltonian}).
\item Other parameters:
\begin{enumerate}
\item $c(n)$, an upper bound on the number of qubits in a witness state;
\item $p(n)$, an upper bound on the number of qubits in a history state corresponding to an execution of $Q_n$ on a witness state of length $c(n)$ and an instance of size $n$;
\item $m = \mathrm{poly}(n)$, the number of terms in the Clifford Hamiltonian~(equation \eqref{eq:clifford-hamiltonian});
\item $N = \mathrm{poly}(n)$, the number of physical qubits per logical qubit in the Steane code introduced in section \ref{sec:bjsw}.
\end{enumerate}
\item A perfectly binding, quantum computationally concealing (classical) commitment scheme \textsf{Com} = $(\mathsf{Com.gen, Com.commit, Com.reveal, Com.verify, Com.recover})$, of the form described in section \ref{sec: commitment}.
\item A NIZK argument system with setup for NP, such as the one described in section \ref{sec:nizk-np}. We denote this argument system by a triple $(\textsf{NIZK}.S, \textsf{NIZK}.P, \textsf{NIZK}.V)$.
\item A levelled fully homomorphic encryption scheme with circuit privacy, \\$\mathsf{FHE = (FHE.Gen, FHE.Enc, FHE.Dec, FHE.Eval, FHE.Refresh)}$.
\end{enumerate}
\item Instance-dependent stage:
\begin{enumerate}
\item The instance $x$.
\item {Auxiliary (private) input to the prover:} a witness $\rho$ for the claim that $x \in R_{Q, \alpha}$.
\end{enumerate}
\end{enumerate} \clearpage

\emph{Protocol.}

Refer to Figure \ref{fig:original} for a diagram charting the essential structure of the protocol below.

\begin{figure}[H]
\rule[1ex]{16.5cm}{0.5pt}
\raggedright
\textbf{Preprocessing step:}

\vspace{2mm}

\begin{enumerate}
    %\item[(i)] $V$ and $P$ generate a common random string using a sequence of coin flipping protocols. (this string will be used for a NIZK proof later).
    %\item[(i)] $P$ samples a trapdoor one-way permutation $f$. Sends the description of $f$ to $V$, and keeps the trapdoor. (There will be two distinct commitment schemes: one will be used by $V$, namely $Com_{\textsf{reg}}$, and one by $P$, namely $Com_f$).
    \item[(S)] $S$ samples $\gamma \leftarrow \textsf{NIZK}.S(1^{q(n)}, 1^{\mu})$, for a sufficiently large polynomial $q>0$, in addition to $(pk_P,sk_P), (pk_V,sk_V) \leftarrow \textsf{Com.gen}(1^{\mu})$ (the two pairs are sampled independently). $S$ then outputs $(\gamma, pk_P, pk_V)$.
    \item[(i)] $V$ samples $r' \xleftarrow{\$} [m+n]$. Sets $r = r'$ if $r \leq m$, and $r = m+1$ otherwise. Computes $\sigma \leftarrow \textsf{Com.commit}(pk_V; r)$. Let $s_V$ be the randomness used.
    \item[(ii)] $V$ creates $2Np(n)$ EPR pairs ($2N$ for each qubit of the history state).
    The second qubit of each EPR pair will be sent to $P$ in step (iv). We refer to the remaining qubits (i.e. the first qubit in each EPR pair) as ``$V$'s qubits''.
    \item[(iii)] \label{step: measurement}
    $V$ interprets her $2Np(n)$ qubits as the encoding (according to the authentication code of Fig. \ref{fig: authentication code}) of a history state for $H(Q_n)$. 
    \begin{itemize}
        \item If $r \in [m]$: $V$ applies $C_r$ transversally to the subset of her $2Np(n)$ qubits which encode the $k$ logical qubits on which $C_r$ acts. Measures those qubits in the computational basis, obtaining an outcome string $z$.
        \item If $r=m+1$: $V$ measures the subset of the $2Np(n)$ qubits corresponding to the first qubit of $\mathcal{H}_{\textsf{clock}}$ and all the qubits of $\mathcal{H}_{\textsf{instance}}$ in the computational basis, obtaining an outcome string $z$.
    \end{itemize}
    %\item[(iii)] $V$ creates an additional $2N p(n)$ EPR pairs. Counting the qubits from step (ii), $V$ holds $6Np(n)$ qubits. Label them with indexes in $[6Np(n)]$, so that the qubits from step (ii) have indexes $\{1,..,2Np(n)\}$. The first qubit from each EPR pair created in this step gets an index in $\{2Np(n)+1,..,4Np(n)\}$. The second qubit from each pair gets an index in $\{4Np(n)+1,..,6Np(n)\}$, in the same order. $V$ applies Clifford gates to the qubits indexed $\{4Np(n)+1,..,6Np(n)\}$ based on $H_r$, the $r$-th term of $H_\textsf{cliff-univ-circuit-check}$ (recall that depending on $r$ this is either a $C_r$ or a $H_{\textsf{circuit-check}}$). This prepares a computation by teleportation of $H_r$ (see Fig. .. for an example).
    %\item[(iv)] 
     %For $i =1,..2Np(n)$: $V$ makes a Bell basis measurement on qubits $i$ and $i+2Np(n)$ obtaining outcome $z_i \in \{0,1\}^2$ (formally, $V$ only needs to measure the physical qubits corresponding to the logical qubits on which the $r$-th Hamiltonian term acts non-trivially).
    \item[(iv)] \label{step: sending} $V$ samples $(pk_E,sk_E) \leftarrow \textsf{FHE.Gen}(1^{n})$. $V$ sends to $P$:
    \begin{itemize}
        \item $\alpha \leftarrow \textsf{FHE.Enc}(pk_E, (r, s_V, z))$.
        \item $pk_E$ and $\sigma$.
    \end{itemize}
\end{enumerate}
    
\rule[2ex]{16.5cm}{0.5pt}\vspace{-.5cm}
\caption{The preprocessing step of our NIZK argument for QMA}
  \label{fig: preprocessing step}
\end{figure}

\begin{figure}[H]
\rule[1ex]{16.5cm}{0.5pt}
\raggedright

\textbf{Instance-dependent step:}\\
\vspace{2mm}
\begin{itemize}
    \item  \textbf{Prover's message:}
    \begin{enumerate}
    \item $P$ computes the history state corresponding to an evaluation of the circuit $Q_n$ on the input $\ket{x} \otimes \ket{\psi}$. This is the state $\ket{\Psi} = \sum_{t=0}^{T}\ket{t}_{\textsf{clock}}  \otimes \Pi_{j=1}^{t} U_j \big( \ket{x} \otimes \ket{\psi} \otimes \ket{0}^{\otimes n} \big)$ for some unitary $U_j$, which can be computed efficiently. $P$ computes $\ket{\tilde{\Psi}} \leftarrow \textsf{Auth.Enc}(\ket{\Psi})$ according to the authentication scheme of Fig. \ref{fig: authentication code}. Let the sampled authentication keys be:
\begin{enumerate}
\item $a = a_1,..,a_{p(n)}$, $b= b_1,..,b_{p(n)}$ for $a_1,..,a_{p(n)}, b_1,..,b_{p(n)} \in \{0,1\}^{2N},$
\item $\pi \in S_{2N}$,
\item $t = t_1,..,t_{p(n)}$ where $t_1,..,t_{p(n)} \in \{0, +, +_{y}\}^N$.
\end{enumerate}

$P$ samples commitment randomness $s_P$, and computes $\sigma_{\textsf{keys}} \leftarrow \textsf{Com.commit}(pk, (t,\pi,a,b), s_P)$.

        \item $P$ teleports the state $\rho$ to $V$ using his halves of the $2Np(n)$ shared EPR pairs received in step (iv) of the preprocessing step. Let $d = (x_1,\ldots ,x_{2Np(n)}, y_1,\ldots, y_{2Np(n)}) \in \{0,1\}^{4Np(n)}$ be the Bell basis measurement outcomes obtained during the teleportation. 
        \item $P$ computes $\beta \leftarrow \textsf{FHE.Enc}\big(pk_E, ( d, \sigma, \sigma_{\textsf{keys}}, (t,\pi,a,b), s_P ) \big)$, where $\sigma$ is the commitment received in step (iv) of the preprocessing step.
       $P$ homomorphically evaluates the following circuit $C$ using $\beta$ and the ciphertext $\alpha$ that it received from the verifier. (Recall that $\alpha$ is an encryption of $(r, s_V, z)$.)
        \begin{quote}
            $C$ takes as input $d,\sigma, r, s_V, z, \sigma_{\textsf{keys}}, t, \pi,a,b,s_P$. It checks that $(r,s_V)$ is a valid opening for $\sigma$, and that $Q(t,\pi,a,b,r,z,d) = 1$, where $Q$ is defined in Definition~\ref{def:new-Q}. If its checks pass, using $\gamma$ it computes an NIZK argument for the existence of an opening to $\sigma_{\textsf{keys}}$ such that the opened value $(t,\pi,a,b)$ satisfies $Q(t,\pi,a,b,r,z,d) = 1$. If its checks do not pass, it outputs ``$\perp$''.
        \end{quote}
        \item Let $\tilde{\pi}$ be the encrypted proof that $P$ obtains in step 4. $P$ computes $\tilde{\pi}' \leftarrow \textsf{FHE.Refresh}(\tilde{\pi})$. Sends $d$, $\sigma_{\textsf{keys}}$ and $\tilde{\pi}'$ to $V$.
    \end{enumerate}
    \item \textbf{Verifier's check:} $V$ decrypts $\tilde{\pi}'$, and executes $\textsf{NIZK.V}$ to check the decrypted proof. It checks that the $d$ received from $P$ is the same $d$ that appears in the statement being proven.
\end{itemize}

\rule[2ex]{16.5cm}{0.5pt}\vspace{-.5cm}
\caption{The instance-dependent step of our NIZK argument for QMA}
\label{fig: instance-dependent step}
\end{figure}

\begin{theorem}\label{thm:main}
Assuming that LWE is intractable for quantum polynomial-time (QPT) adversaries, every language in \textsf{QMA} has an adaptively zero-knowledge non-interactive argument system with CRS setup and preprocessing (where completeness is according to Definition \ref{def:compl-alt}) with $\negl$ adaptive soundness. Moreover, the preprocessing phase consists of a single quantum message from the verifier to the prover. 
\end{theorem}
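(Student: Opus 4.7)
The plan is to establish the three required properties of the argument system $\Pi = (S,P,V)$ defined in Section~\ref{sec:prot-decs}: completeness in the sense of Definition~\ref{def:compl-alt}, adaptive soundness against QPT provers, and adaptive computational zero-knowledge.

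\textbf{Completeness.} First I would analyze the honest execution. Given $(x,\rho) \in R_{Q,\alpha}$, after amplifying $\alpha$ to $1-\negl(n)$ via Marriott--Watrous (taking the extra copies of $\rho$ permitted by Definition~\ref{def:compl-alt}), the history state $\ket{\Psi}$ of $Q_n$ on input $(x,\rho)$ has energy at most $\negl(n)$ under $H(Q_n)$ by Lemma~\ref{lem:5local}. When $P$ teleports $\textsf{Auth.Enc}(\ket{\Psi})$ onto $V$'s half of the shared EPR pairs, the state that $V$ effectively measured during preprocessing equals the honest authenticated history state up to Pauli corrections determined by the teleportation outcomes $d$ through~\eqref{eq:clifford-com}. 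A case analysis on $r \in [m]$ (Clifford Hamiltonian term) and $r = m+1$ (instance-consistency check), together with the guarantees of the concatenated-Steane authentication code from~\cite{broadbent2016zero}, shows that $Q(t,\pi,a,b,r,z,d)=1$ with overwhelming probability, so the homomorphically evaluated circuit $C$ produces a valid NIZK proof. By correctness of \textsf{FHE} (preserved under refreshing) and completeness of the NIZK for NP (Lemma~\ref{lem:nizk-np}), $V$ accepts after decryption.

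\textbf{Adaptive soundness.} Fix a QPT $P^*$ and consider a hybrid argument. Hybrid $H_0$ is the real execution. In $H_1$ I replace $\sigma$ by $\textsf{Com.commit}(pk_V; 0)$; quantum-computational hiding of $\textsf{Com}$ preserves the acceptance probability up to $\negl$. In $H_2$ I replace $\alpha$ by $\textsf{FHE.Enc}(pk_E, 0)$; semantic security of \textsf{FHE} against quantum adversaries preserves acceptance up to $\negl$. In $H_2$ the prover's view in the preprocessing phase (beyond $S$'s CRS) is information-theoretically independent of $(r,s_V,z)$. By perfect binding of $\textsf{Com}$ on $\sigma_{\textsf{keys}}$, the prover's instance-dependent message commits him to a unique $(t,\pi,a,b)$, and by adaptive soundness of the NIZK for NP (Lemma~\ref{lem:nizk-np}), if $V$ accepts the decrypted proof then, except with probability $\negl$, those uniquely committed keys together with the declared $d$ and the $(r,z)$ that $V$ decrypts must satisfy $Q(t,\pi,a,b,r,z,d)=1$. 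Since in $H_2$ the challenge $r$ is sampled uniformly and independently of everything $P^*$ has produced (and in particular of the instance $x \in N_{Q,\beta}$ that $P^*$ adaptively selects after preprocessing, which is why adaptive soundness holds), the analysis reduces to the soundness analysis of~\cite{broadbent2016zero}: combining unforgeability of the authentication code with Lemma~\ref{lem:5local}, for $x \in N_{Q,\beta}$ no $(t,\pi,a,b,d)$ satisfies $Q$ on a uniform $r$ with probability better than $1 - 1/\poly(n)$. Soundness is then amplified to $\negl$ by sequential or parallel repetition.

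\textbf{Adaptive zero-knowledge.} For any QPT $V^*$ I would construct a straight-line simulator $\mathsf{Sim}$. In the simulated setup, $\mathsf{Sim}$ generates $(pk_V, sk_V)$ honestly (keeping $sk_V$ for extraction) and invokes the straight-line NIZK simulator of Lemma~\ref{lem:nizk-np} to produce $\gamma$ together with the trapdoor. Upon receiving $V^*$'s preprocessing message $(\sigma, \alpha, pk_E)$, $\mathsf{Sim}$ uses $\textsf{Com.recover}(pk_V, sk_V, \sigma)$ to extract $r$ without any rewinding. Knowing $r$, it invokes the authentication-code simulator from~\cite{broadbent2016zero} to sample, without any witness, a distribution over $(d,\sigma_{\textsf{keys}},(t,\pi,a,b))$ statistically close to the honest distribution and such that $Q$ is satisfied. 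Finally, to fake $\tilde{\pi}'$, $\mathsf{Sim}$ invokes the NIZK-NP simulator to produce a proof $\pi_{\text{sim}}$ and encrypts it via $\mathsf{Enc}^*(pk_E, \pi_{\text{sim}})$; malicious circuit privacy of \textsf{FHE} guarantees that this distribution is statistically close to the honest $\textsf{FHE.Refresh}(\textsf{FHE.Eval}(\cdot))$ output. Indistinguishability of the real and simulated transcripts then follows by a hybrid over (i) the simulation property of the authentication code, (ii) adaptive zero-knowledge of the NIZK for NP, and (iii) circuit privacy of \textsf{FHE}. Adaptivity comes for free because extraction of $r$ and simulation of $\gamma$ are both instance-independent. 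The main obstacle will be the soundness hybrid: the prover's output ciphertext $\tilde{\pi}'$ depends on the plaintext $(r,s_V,z)$ even after $\alpha$ is replaced by an encryption of zero, so I must use circuit privacy together with the perfectly binding commitment and correctness of \textsf{FHE.Eval} to argue that $V$'s acceptance depends only on the decrypted NP statement, and then verify that this NP statement is precisely the one whose adaptive soundness we can invoke.
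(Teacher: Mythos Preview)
Your high-level strategy matches the paper's, but two steps have genuine gaps. In the zero-knowledge construction you have $\mathsf{Sim}$ produce $\pi_{\text{sim}}$ in the clear and output $\mathsf{Enc}^*(pk_E,\pi_{\text{sim}})$. This fails twice: $\mathsf{Enc}^*$ is by definition an \emph{unbounded} sampler (Section~\ref{sec:hom-enc}), so an efficient simulator cannot call it; and the NP statement to be simulated contains the verifier's measurement outcomes $z$, which $\mathsf{Sim}$ holds only inside the (possibly malformed) ciphertext $\alpha$ produced by $V^*$---extracting $r$ from $\sigma$ via $sk_V$ does not yield $z$. The paper's simulator instead \emph{homomorphically} evaluates a modified circuit $C'$ that computes $Q$ and, on the $Q=1$ branch, runs the straight-line NIZK simulator of Lemma~\ref{lem:nizk-np} on the encrypted statement, and then applies $\textsf{FHE.Refresh}$; circuit privacy is used only inside the hybrid argument (to pass through $\mathsf{Enc}^*$ and back), never as a simulator subroutine.

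The second gap is soundness amplification. The theorem requires negligible soundness while remaining non-interactive, so sequential repetition is unavailable, and a bare appeal to parallel repetition is not justified: the preprocessing message is quantum (half-EPR pairs), and no black-box parallel-repetition theorem is known in that setting. The paper gives a tailored argument (Lemma~\ref{lem:sound-ampl}) exploiting the teleportation structure---one coordinate's EPR halves can be re-randomized by teleporting the verifier's true halves through fresh pairs and folding the resulting Pauli byproducts into the reported $d$, which lets one condition on acceptance in one coordinate while preserving the correct marginal in the other. A smaller issue: your hybrid order in soundness (replace $\sigma$ before $\alpha$) blocks the hiding reduction, since $\alpha$ carries the very randomness $s_V$ that the commitment challenger withholds; the paper first strips the FHE payload (temporarily making the protocol interactive, with $(r,s_V,z)$ sent in the clear, and reducing acceptance to the extractable event ``the committed keys satisfy $Q$'') and only later drops the commitment.
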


We refer to the combination of the protocols of Figures \ref{fig: preprocessing step} and \ref{fig: instance-dependent step}  as ``the protocol''. 

To show Theorem~\ref{thm:main} we start with an arbitrary language $L\in \QMA$. Using standard amplification techniques, for any polynomial $t$ there is a family of polynomial-size verification circuits $Q$ such that $L$ is the language associated with the $\QMA$ relation $(Q,1-2^{-t},2^{-t})$ as in Definition~\ref{def:qma-relation}. We show that the protocol associated to this relation is an NIZK argument with setup for $(Q,1-2^{-t},2^{-t})$. Completeness is easy to verify, as for any $(x,\rho)\in R_{Q,1-2^{-t}}$ the prover described in Figure~\ref{fig: instance-dependent step} is accepted with probability negligibly close to $1$, given access to $\rho$. In Section~\ref{sec: soundness} we prove soundness inverse polynomially close to $1$, and in Section~\ref{sec:parallel-ampl} we show how soundness can be amplified in parallel to any $2^{-p}$ for polynomial $p$ (provided $t$ is taken large enough compared to $p$). After parallel amplification, completeness holds only if we allow the prover to receive polynomially many copies of the witness (as in definition \ref{def:compl-alt}). Finally, in Section \ref{sec: zk property} we prove the zero-knowledge property. 

\section{Soundness}
\label{sec: soundness}
In this section we prove soundness of our protocol from Section~\ref{sec:prot-decs}. This is captured by the following lemma.

\begin{lem}\label{lem:soundness}
Assume that LWE is intractable for quantum polynomial-time (QPT) adversaries. Let $(Q,\alpha,\beta)$ be a $\QMA$ relation. Then the non-interactive protocol with setup and preprocessing for $(Q,\alpha,\beta)$ described in Section~\ref{sec:prot-decs} has negligible adaptive soundness.
\end{lem}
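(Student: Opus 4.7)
My plan is to run a standard sequence of computational indistinguishability hybrids to strip away the prover's ability to correlate its message with $r$, then upgrade the resulting statement using the adaptive soundness of the classical NIZK, and finally reduce to the authentication-code soundness analysis of~\cite{broadbent2016zero}. Fix a QPT $P^\ast$ achieving adaptive soundness $\eps(\mu)$ against the protocol from Section~\ref{sec:prot-decs}. In \textbf{Hybrid 1} I replace the ciphertext $\alpha=\mathsf{FHE.Enc}(pk_E,(r,s_V,z))$ sent by $V$ in step (iv) of the preprocessing phase by $\mathsf{FHE.Enc}(pk_E,0)$; quantum-computational semantic security of $\mathsf{FHE}$ (following from quantum hardness of LWE) implies the acceptance probability of $P^\ast$ drops by at most a negligible amount. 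In \textbf{Hybrid 2} I further replace the commitment $\sigma=\mathsf{Com.commit}(pk_V,r,s_V)$ by $\mathsf{Com.commit}(pk_V,0,s_V)$; quantum-computational hiding of $\mathsf{Com}$ again costs only a negligible amount. In Hybrid 2 the prover's view is information-theoretically independent of $r$ and $z$, since the prover's halves of the EPR pairs trace to the maximally mixed state before $V$ acts on her halves, so I can defer the sampling of $r$ and the measurement producing $z$ until after $P^\ast$ has produced its classical message $(d,\sigma_{\mathsf{keys}},\tilde{\pi}')$.

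Next I upgrade this information-theoretic statement to one about the classical NIZK proof. I construct the adaptive-NIZK adversary $A$ that, given the NIZK CRS $\gamma\leftarrow\mathsf{NIZK}.S$, internally simulates Hybrid 2 (sampling $(pk_V,sk_V)$ and $(pk_E,sk_E)$ honestly and running $P^\ast$), decrypts $\tilde{\pi}'$ with $sk_E$ to recover a plaintext proof, and outputs this proof together with the NP statement ``$\exists\,(t,\pi,a,b),s_P\colon \sigma_{\mathsf{keys}}=\mathsf{Com.commit}(pk_P,(t,\pi,a,b),s_P)\ \wedge\ Q(t,\pi,a,b,r,z,d)=1$''. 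By adaptive soundness of the NIZK (Lemma~\ref{lem:nizk-np}), whenever $V$ accepts this statement is true except with negligible probability; perfect binding of $\mathsf{Com}$ then pins down the witness $(t,\pi,a,b)$ uniquely from $\sigma_{\mathsf{keys}}$.

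At this point the situation is combinatorially equivalent to a BJSW-style setup: the prover commits to authentication keys $(t,\pi,a,b)$ and teleports some $p(n)$-qubit state $\rho$ through the EPR pairs (producing Bell-measurement outcomes $d$), and subsequently $V$ samples $r\in[m+n]$ uniformly, measures the corresponding $C_r$-conjugated observable (for $r\in[m]$) or the instance-consistency observable (for $r=m+1$) on her halves of the EPR pairs, and checks the predicate $Q(t,\pi,a,b,r,z,d)$. By the preceding reductions this predicate holds with probability at least $\eps-\negl(\mu)$ over the joint randomness. Applying the authentication-code soundness analysis of~\cite[Section~3]{broadbent2016zero}---which, via the trap qubits in the concatenated Steane encoding, controls up to negligible loss any deviation from honest prover behaviour---I conclude that $Dec_{\pi,a,b}$ applied to the Pauli-corrected teleported state yields a low-energy state of $H(Q_n)$. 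The role of the new $r=m+1$ test (predicate $R_{m+1}$ from Definition~\ref{def:R_{m+1}}) is to additionally enforce that, conditioned on the first clock qubit reading $0$, the instance register carries $\ket{x}$, so the decoded state is close to a history state for the execution of $Q_n$ on $(x,\psi)$ for some $\psi$. Invoking the map $\tau$ from item~3 of Lemma~\ref{lem:5local} then extracts a witness $\sigma$ with $\Pr(Q_n(x,\sigma)=1)\geq\beta$, contradicting $x\in N_{Q,\beta}$.

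The main obstacle is the quantitative combination of the two sources of soundness. The BJSW trap analysis yields the standard authentication-code bound for challenges $r\in[m]$, whereas the $r=m+1$ branch must be analyzed separately because the instance $x$ is only revealed in the instance-dependent phase; here the analysis must argue that a prover who substitutes a wrong $x$ into the history state is detected with probability proportional to $n/(m+n)$, relying on the fact that $x$ has been committed to (via $d$ and $\sigma_{\mathsf{keys}}$) before $r$ is effectively revealed. Carefully tracking this weighting, together with the $1/\poly(T)$ promise gap of the Clifford Hamiltonian from Lemma~\ref{lem:5local} and the negligible errors accumulated in the hybrid and NIZK-soundness steps, is what yields the claimed bound on adaptive soundness.
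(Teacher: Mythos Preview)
Your overall strategy---strip the $\mathsf{FHE}$ layer, strip the commitment to $r$, invoke NIZK soundness, then reduce to the BJSW authentication analysis---matches the paper's. But your Hybrid~1 step has a genuine gap: you cannot invoke semantic security of $\mathsf{FHE}$ to conclude that the \emph{acceptance} probability changes negligibly, because the verifier's acceptance predicate requires decrypting $\tilde\pi'$ with $sk_E$. Any efficient FHE distinguisher must receive $pk_E$ from its challenger and therefore does \emph{not} hold $sk_E$; the natural distinguisher ``simulate the whole experiment and output the verifier's decision'' is simply not implementable. This is not a technicality you can wave through---it is exactly the obstruction that forces the paper into its more elaborate hybrid structure.

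The paper resolves this in its $H_0\approx H_1$ step by replacing ``$V$ accepts'' with a proxy event that does not require $sk_E$: the FHE distinguisher plants its own $pk_P$ (with known $sk_P$) into the CRS, uses the \emph{extractability} of $\mathsf{Com}$ to recover the committed keys $(t,\pi,a,b)$ from $\sigma_{\mathsf{keys}}$, and then checks the predicate $Q(t,\pi,a,b,r,z,d)$ directly. NIZK soundness is invoked separately to tie acceptance to this proxy. Your argument can be repaired the same way, or equivalently by moving the NIZK-soundness step \emph{before} the FHE hybrid (i.e.\ already in $H_0$), so that the FHE reduction only has to compare the $sk_E$-free event ``committed keys satisfy $Q$'' across the two hybrids.

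A smaller point: your treatment of the $r=m+1$ branch is correct in spirit, but the paper makes the final reduction to BJSW literal by splitting the single instance-check into $n$ separate two-qubit checks ($H_5\to H_6$), so that the terminal hybrid is an honest execution of the BJSW protocol on an explicit local Clifford Hamiltonian. Without this splitting you would need to redo the trap analysis for the $(n{+}1)$-local predicate $R_{m+1}$ rather than quote~\cite{broadbent2016zero} directly.
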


We give an overview of the proof of Lemma~\ref{lem:soundness} in the next subsection.

\subsection{Overview}
\label{sec:soundness-overview}

The structure of the proof is as follows. We show through a sequence of hybrids that it is possible to transform an execution of our protocol on some instance $x$, into an execution of the protocol from \cite{broadbent2016zero} on a specific local Clifford Hamiltonian derived from $x$. We show that this transformation can at most negligibly decrease the optimal acceptance probability of the prover. Thus, soundness of our protocol reduces to soundness of the protocol from \cite{broadbent2016zero}. The main steps in our sequence of hybrids are the following: 
\begin{itemize}
\item Remove the encryption of $V$'s choice of $r$, randomness $s_V$ and measurement outcomes $z$ sent in step (iv) of the preprocessing step.
\item Replace the step where $P$ teleports the encoded witness to $V$ through shared EPR pairs (step 2 in Fig. \ref{fig: instance-dependent step}) with one where $P$ directly sends the qubits of the encoded witness to $V$.
\item Remove the portion of the CRS corresponding to the NIZK argument, and replace the NIZK argument sent by the prover in step 4 of Fig. \ref{fig: instance-dependent step} with a ZK proof.
\end{itemize}
%We show that  the optimal probability that a prover is accepted in the modified protocol is negligibly close to  the optimal probability that a prover is accepted in the original protocol. Moreover, we argue that the modified protocol is virtually identical to an execution of the protocol from~\cite{broadbent2016zero}, when the Hamiltoian is a particular Hamiltonian that can be efficiently computed from the instance $x$. Soundness of our protocol thus reduces to soundness of the protocol from~\cite{broadbent2016zero}.

In more detail, fix a $\QMA$ relation $(Q,\alpha,\beta)$ and an instance $x$ of size $n$. We consider the following sequence of hybrid experiments. Each experiment describes a modified verifier, and thus a modified protocol, where the first hybrid $H_0$ is the protocol from Section~\ref{sec:prot-decs} and the last hybrid $H_{10}$ coincides with the protocol from \cite{broadbent2016zero}. In subsection \ref{sec:hybrids}, we will argue that the optimal probability of a prover being accepted in any of the protocols can only increase (or at most negligibly decrease) across hybrids. Since the protocol from \cite{broadbent2016zero} has soundness $1-1/\text{poly}$, this shows that the protocol from Section~\ref{sec:prot-decs} also has  $1-1/\text{poly}$ soundness. In subsection \ref{sec:parallel-ampl}, we amplify soundness by repeating the protocol in parallel. One can check that our proof goes through unchanged for the case of adaptive soundness as well. In particular, the key is that the NIZK proof system for NP employed in our protocol is adaptively sound.

The following are the hybrids.
\vspace{2mm}

$H_0$: This is the real experiment in which $V$ behaves as described in Figure~\ref{fig: preprocessing step} and Figure~\ref{fig: instance-dependent step}.
\vspace{2mm}

$H_1$: Same as $H_0$, except $V_1$ does not send to $P$ an encryption of $r$, $s_V$ and her outcomes $z$. Instead, she sends an encryption of the zero string. The instance-dependent step is now interactive, and proceeds as follows:
\begin{itemize}
    \item[(a)] $V_2$ expects $d$ and $\sigma$ from $P$.
    \item[(b)] Let $z$ be the outcome of $V_1$'s measurements from step (iii) of the preprocessing step. $V_1$ Sends $z, r$ and $s_V$ to $P$ (in the clear).
    \item[(c)] $V_1$ expects an NIZK argument (in the clear) of existence of an opening of $\sigma$ such that the opened keys $t,\pi,a,b$ satisfy ${Q}(t,\pi,a,b,r,z,d) = 1$.
\end{itemize}
\vspace{2mm}

$H_2$: Same as $H_1$ except $V_2$ does not send the encryption of the zero string. 
\vspace{2mm}

$H_3$: Same as $H_2$, except $V_3$'s measurements in step (iii) of the preprocessing step are postponed. $V_3$ proceeds directly to the instance-dependent step as in $H_2$. Only after receiving $d, \sigma$, $V_3$ measures according to step (iii), and sends the outcome $z$ of these measurements to $P$. The rest is the same.
\vspace{2mm}

$H_4$: We modify $H_3$ as follows: $V_4$ does not send any EPR pairs to $P$. In step $(a)$ of the instance-dependent step (using the notation of $H_1$), $V_4$ expects $P$ to send a $2Np(n)$-qubit state directly, together with a string $d$ and a commitment $\sigma$. $V_4$ obtains $z$ by measuring the received state. The rest is the same.
\vspace{2mm}

$H_5$: Same as $H_5$, except $V_5$ does not expect any string $d$. The relation for which $V_5$ expects an NIZK argument is modified so that $d$ (which was previously part of the instance) is now fixed to $0$ (or equivalently $\widetilde{Q}$ is replaced with $Q$, from Definition~\ref{def:bjsw-Q}).
\vspace{2mm}

$H_6:$ Same as $H_5$, except $V_6$ does the following: in part (i) of the preprocessing step, $V_6$ samples $r' \xleftarrow{\$} \in [m+n]$. $V_6$ computes 
 $\sigma \leftarrow \textsf{Com.commit}(pk_V; r')$. In part (iii) of Fig. \ref{fig: preprocessing step}, 
\begin{itemize}
    \item If $r\in\{1,\ldots,m\}$: same as before.
    \item If $r = m+i, \, i\in [n]$: $V_6$ measures the subset of the $2Np(n)$ qubits corresponding to the first qubit of $\mathcal{H}_{\textsf{clock}}$ and to the $i$-th qubit of $\mathcal{H}_{\textsf{circuit-desc}}$ in the computational basis, obtaining an outcome string $z$.
\end{itemize}
In the final step, $V_6$ expects from $P$ an NIZK argument for the existence of an opening to the committed keys such that the opened keys $t,\pi,a,b$ satisfy $Q'(t,\pi,a,b,r,z) =1$, where $Q'$ is defined as follows from $R_r'$. 

\begin{definition}[Definition of $R'_{r}$, $r\in\{m+1,\ldots,m+n\}$]
\label{def:new-soundness R_r}
Let $u_{\textsf{clock}_1}, u_{\textsf{instance}_1}, .., u_{\textsf{instance}_{n}} \in \{0,1\}^{2N}$ (these represent the measurement results that the verifier sends to the prover in step (iii) of Fig \ref{fig: preprocessing step} when $r=m+1$). For $\pi \in S_{2N}$, for each $i \in \{\textsf{clock}_1,\textsf{instance}_1, \ldots, \textsf{instance}_{n}\}$, define strings $p_i, q_i$ in $\{0,1\}^N$ such that $\pi(p_i \| q_i) = u_i$ (alternatively: $\pi^{-1}(u_i) = p_i \| q_i)$. Let $u = u_{\textsf{clock}_1}, u_{\textsf{instance}_1}, .., u_{\textsf{instance}_n}$. The predicate $R'_{m+i}(t, \pi, u)$ takes the value 1 if and only if the following two conditions are met:
\begin{enumerate}
\item $p_{\textsf{clock}_1} \in \mathcal{D}_N^1$ (this corresponds to the \textit{first} qubit of the clock register, expressed in unary, being in state $1$, i.e. the clock register is \textit{not} at time $0$), \emph{or} $p_i \in \mathcal{D}^{x_i}_N$.
\item $\braket{q_{\textsf{clock}_1}\, q_{\textsf{instance}_1} \cdots q_{\textsf{instance}_n} | t_{\textsf{clock}_1} \, t_{\textsf{instance}_1}\cdots t_{\textsf{instance}_n} }\neq 0$
\end{enumerate}
\end{definition}
\begin{definition}[Definition of Q']
\label{def: Q'}
\begin{itemize}
\item If $r\in\{1,\ldots,m\}$, $$Q'(t, \pi, a, b, r, u) = R'_r(t, \pi, u \oplus e_{i_1} \cdots e_{i_k}),$$ 
where $i_1,..,i_k$ are the indices of the qubits on which the $C_r$ acts non-trivially and $e_{i_1}, \ldots, e_{i_k}$ are the unique strings such that
\begin{gather*}
C_r^{\otimes 2N}(X^{a_{i_1}} Z^{b_{i_1}} \otimes \cdots \otimes X^{a_{i_k}} Z^{b_{i_k}}) = \alpha(X^{e_{i_1}} Z^{f_{i_1}} \otimes \cdots \otimes X^{e_{i_k}} Z^{f_{i_k}}) C_r^{\otimes 2N}
\end{gather*}
for some $\alpha \in \{1, i, -1, -i\}$ and some $f_{i_1}, \ldots, f_{i_k} \in \{0,1\}^{2N}$.
    \item If $r=m+i$, for $i \in [n]$: $$Q'(t, \pi, a, b, r, u) = R'_r(t, \pi, u \oplus a_{\textsf{clock}_1} a_{\textsf{instance}_1} \cdots a_{\textsf{instance}_n}).$$
\end{itemize}
\end{definition}

$H_7$: Same as $H_6$, except $V_7$ does not send a commitment to the randomness $r$. We also drop the associated public key $pk_V$ from the CRS. The first message in the protocol is from $P$ who is expected to send a state and a commitment. After that, $V_7$ sends $z,r$ to $P$. The rest is the same.
\vspace{2mm}

$H_8$: Same as $H_7$, except we remove $\gamma$ from the CRS, and the NIZK proof from $P$ to $V_8$ in the last round is replaced by a (interactive) ZK proof.  
\vspace{2mm}

$H_9$: Same as $H_8$, except that after the first message of the prover, $V_9$ and $P$ engage in a coin flipping protocol. This determines the randomness $r$. In the next round $V_9$ sends her measurement outcome $z$ to $P$, and the rest is the same.
\vspace{2mm}

$H_{10}$: Same as $H_9$, except we remove $pk_P$ from the CRS, and we replace the public-key commitment scheme $\textsf{Com}$ with the commitment scheme $\textsf{commit}$ used by the prover in the $\cite{broadbent2016zero}$ protocol.

In Section~\ref{sec:hybrids} we show that, up to negligible quantity, the maximum success probability of a prover in the protocol described in $H_{10}$ can only be higher than the maximum success probability of a prover in the protocol described in $H_0$. In Section~\ref{sec:sound-conc} we show that this implies soundness of the protocol. Finally, in Section~\ref{sec:parallel-ampl} we show that soundness can be amplified in parallel. 

\subsection{Success probability in hybrids}
\label{sec:hybrids}

\begin{lem}
\label{lem: hybrids}
The optimal probability of a prover being accepted in $H_0$ is at most negligibly higher than in $H_{10}$.
\end{lem}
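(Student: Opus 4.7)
The strategy is to show, for each of the ten consecutive pairs of hybrids $(H_i, H_{i+1})$, that the maximum acceptance probability over QPT provers satisfies
\[\max_{P^*}\Pr[V_{i+1}\text{ accepts in } H_{i+1}] \;\geq\; \max_{P^*}\Pr[V_{i}\text{ accepts in } H_{i}] - \negl(\mu),\]
and then to chain the ten bounds. For each transition I will either exhibit, from a QPT prover $P^*$ for $H_i$, a QPT prover $P^{*\prime}$ for $H_{i+1}$ whose acceptance probability is at most negligibly smaller, or argue directly that $V_{i+1}$ is strictly less discriminating than $V_i$.

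The easier transitions are the following. $H_1\to H_2$ removes a classical transmission of $\textsf{FHE.Enc}(pk_E,0)$ that carries no information; $H_5\to H_6$ has the new verifier test a single randomly-chosen instance qubit rather than all of them, strictly weakening the rejection power so the maximum acceptance probability can only increase; $H_6\to H_7$ removes the quantum computationally hiding commitment to $r$; $H_7\to H_8$ replaces NIZK by an interactive ZK system for the same NP statement, which only enlarges the space of admissible prover strategies; $H_8\to H_9$ introduces a coin-flipping subprotocol, whose correctness against QPT provers follows from the concealment of the underlying commitment; and $H_9\to H_{10}$ swaps $\textsf{Com}$ for the \cite{broadbent2016zero} commitment scheme, both of which are perfectly binding and quantum computationally concealing. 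The transitions $H_2\to H_3\to H_4\to H_5$ are quantum-information-theoretic: the verifier's measurements on her EPR halves commute with every operation performed before she receives the prover's first message and may be deferred ($H_2\to H_3$); quantum teleportation establishes an exact equivalence between ``the prover teleports the encoded witness through shared EPR pairs and reports outcomes $d$'' and ``the prover sends a state directly together with a uniformly random $d$'', once the Pauli corrections $X^{x_i}Z^{y_i}$ are accounted for ($H_3\to H_4$); and these corrections can then be absorbed into the prover's Pauli one-time pad keys $(a,b)$, making the string $d$ redundant and reducing the predicate $Q$ of Definition~\ref{def:new-Q} to the $\tilde{Q}$ of Definition~\ref{def:bjsw-Q} ($H_4\to H_5$).

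The main obstacle is the transition $H_0\to H_1$, which collapses the entire FHE layer. The plan proceeds in two cryptographic substeps. First, by the quantum semantic security of $\textsf{FHE}$, replacing the verifier's ciphertext $\alpha=\textsf{FHE.Enc}(pk_E,(r,s_V,z))$ by $\textsf{FHE.Enc}(pk_E,0)$ changes the acceptance probability of any QPT prover by at most $\negl(\mu)$; the reduction internally simulates the verifier (so that it knows the actual $(r,z)$) and uses the NIZK verification decision to instantiate the semantic-security distinguisher. Second, given a QPT $H_0$-prover $P^*$, the $H_1$-prover $P^{*\prime}$ works as follows: upon receiving $(pk_E,\sigma,\alpha)$ from $V$ in $H_1$ (with $\alpha=\textsf{FHE.Enc}(pk_E,0)$), $P^{*\prime}$ generates a fresh keypair $(pk'_E,sk'_E)\leftarrow\textsf{FHE.Gen}(1^n)$, computes $\alpha'=\textsf{FHE.Enc}(pk'_E,0)$, and runs $P^*$ internally on input $(pk'_E,\sigma,\alpha')$ to obtain $(d,\sigma_{\textsf{keys}},\tilde{\pi}')$; it then forwards $(d,\sigma_{\textsf{keys}})$ to $V$ and, upon receiving $(r,z,s_V)$, outputs the plaintext $\pi_0=\textsf{FHE.Dec}(sk'_E,\tilde{\pi}')$. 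Because $(pk'_E,sk'_E)$ is distributed identically to the verifier's real $(pk_E,sk_E)$, the experiment seen by $P^*$ inside $P^{*\prime}$ exactly matches the $H_0$ experiment with encryption-of-zero, and the first substep then guarantees that $\pi_0$ is a valid NIZK proof for the statement $V$ checks in $H_1$ with probability at least $\epsilon-\negl(\mu)$, where $\epsilon$ is $P^*$'s acceptance probability in $H_0$. Malicious circuit privacy of $\textsf{FHE.Refresh}$ is invoked here to ensure that the distribution of $\tilde{\pi}'$ before decryption depends only on the plaintext result of the prover's homomorphic computation, so the first substep's reduction is well-defined even when $P^*$ is adversarial rather than following the honest evaluation template.
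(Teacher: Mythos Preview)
Your treatment of $H_0\to H_1$ has a genuine gap, and the same missing idea resurfaces in your handling of $H_7\to H_8$ and $H_9\to H_{10}$.

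\textbf{First substep.} Your semantic-security reduction simulates the verifier so as to know $(r,s_V,z)$, hands the two plaintexts to the FHE challenger, feeds the returned ciphertext to $P^*$, and then ``uses the NIZK verification decision'' as its distinguishing bit. But the $H_0$ verifier's decision requires decrypting the prover's output $\tilde{\pi}'$ with $sk_E$, which the semantic-security adversary does not possess. So the reduction as described cannot compute its own output bit. Circuit privacy does not help here: it hides the \emph{circuit} from the key-holder, not the \emph{plaintext input} from an outsider.

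\textbf{Second substep.} Your $H_1$-prover runs $P^*$ on a fresh $\alpha'=\textsf{FHE.Enc}(pk'_E,0)$ and then decrypts $\tilde{\pi}'$ to obtain $\pi_0$. But the homomorphic computation that produced $\tilde{\pi}'$ was carried out on plaintext $(r,s_V,z)=(0,0,0)$, so $\pi_0$ is either ``$\perp$'' (the commitment check $\textsf{Com.verify}(pk_V,\sigma,0,0)$ will fail) or an NIZK proof about the statement formed from $(r,z,d)=(0,0,d)$. In $H_1$ the verifier checks the statement formed from the \emph{real} $(r,z)$ it sent in step (b). These are different statements, so $\pi_0$ is useless. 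Your appeal to the first substep to conclude that $\pi_0$ is valid for the real statement does not follow.

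\textbf{What the paper does instead.} The missing ingredient is the extractability of the commitment scheme $\textsf{Com}$. The paper's $H_1$-prover plants its own $\tilde{pk}_P$ (with known $\tilde{sk}_P$) into the CRS it gives to $P^*$, runs $P^*$ to get $(d,\sigma_{\textsf{keys}})$, and uses $\tilde{sk}_P$ to \emph{extract} the committed keys $(t,\pi,a,b)$ in the clear. When the real $(r,z,s_V)$ arrives it checks $Q(t,\pi,a,b,r,z,d)$ itself and, if it holds, computes a \emph{fresh} NIZK proof from scratch; it never tries to reuse $P^*$'s encrypted proof. Correspondingly, the semantic-security reduction (Claim~\ref{claim: partial} in the paper) does not attempt to decrypt $\tilde{\pi}'$; it tests the extractable predicate ``the keys committed in $\sigma_{\textsf{keys}}$ satisfy $Q$ with $(r,z,d)$'', which requires only $(r,z)$ and $\tilde{sk}_P$, not $sk_E$. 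Negligible soundness of the NIZK then bridges ``$Q$ holds'' and ``verifier accepts''. The same extract-and-reprove maneuver is what makes $H_7\to H_8$ and $H_9\to H_{10}$ go through; saying that switching proof systems ``enlarges the strategy space'' is not enough, because a prover for one system is not literally a prover for the other.
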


\begin{proof}
We argue that the optimal probability of the prover can decrease at most negligibly across hybrids.

\paragraph{$H_0 \approx H_1$.} 
We start with the following claim.
\begin{claim}
\label{claim: partial}
For any prover $P^*$ that plays $H_0$ or $H_1$ up to step (a) of $H_1$, the probability that $P^*$ outputs a pair $(d, \sigma)$ such that, letting $t,\pi,a,b$ be the value committed in $\sigma$ and $z,r$ be $V$'s message in step (b), we have $\widetilde{Q}(t, \pi, a, b, r, z, d) = 1$ is negligibly close for $H_0$ and $H_1$.
\end{claim}

\begin{proof}
Suppose for a contradiction that there is a $P^*$ such that the probability that $P^*$ outputs a good $d$ in $H_0$ and in $H_1$ differ non-negligibly. Then, there exists a polynomial $q>0$ such that the difference between the probabilities is at least $1/q(\mu)$ for infinitely many values of the security parameter $\mu$. Suppose it is the first probability that is at least at least $1/q(\mu)$ larger infinitely often. (the reverse case is similar).

%they are not for a contradiction. And suppose that it is the first probability that is inverse-polynomially higher infinitely often (the other case is analogous).
We construct an adversary $\mathcal{A}$ that breaks CPA security of \textsf{FHE}. $\mathcal{A}$ proceeds as follows: 
\begin{itemize}
    \item $\mathcal{A}$ receives a public key $pk_E$ for the $\textsf{FHE}$ scheme from the challenger. 
    \item $\mathcal{A}$ runs $\gamma \leftarrow \textsf{NIZK-CRS.Setup}$ and $(\tilde{pk}_P, \tilde{sk}_P), (\tilde{pk}_V, \tilde{sk}_V)\leftarrow \textsf{Com.gen}(1^{\mu})$. $\mathcal{A}$ then runs the honest verifier $V$ and $P^*$ on common input the CRS $(\gamma, \tilde{pk}_P,  \tilde{pk}_V)$, except that if forwards $pk_E$ to $V$ in step (iv) (it does not matter that $\mathcal{A}$ cannot give the secret key to $V$, since $V$ only uses this in the very final ``Verifier's check'' of Fig. \ref{fig: instance-dependent step}, which $\mathcal{A}$ will not need to run). Let $z$ be the outcomes of $V$'s measurements in step (iii). Let $r$ and $s_V$ be respectively $V$'s committed random choice of Hamiltonian term and the randomness used for committing to it. %Let $a,b,\pi,t $ be the keys committed by the prover.
    $\mathcal{A}$ sets $m_0 = r, s_V, z$, and $m_1 = 0$. Sends $m_0$ and $m_1$ to the challenger.
    \item The challenger returns a ciphertext $c$. $\mathcal{A}$ forwards this to $P^*$ as the first part of the verifier's message in step (iv), and runs $P^*$ obtaining output $d, \sigma$ where $\sigma$ is some commitment.  
    \item %$\mathcal{A}$ ``inverts'' $\sigma$ using the trapdoor (I assume that our commitment scheme has the property that such a trapdoor exists. One can build such commitment from a quantum-secure one-way permutation). If $\sigma$ was an invalid commitment, $\mathcal{A}$ return a uniformly random guess to the challenger. 
    $\mathcal{A}$ decrypts $\sigma$ using $\tilde{sk}_P$, obtaining $(t,\pi,a,b)$. If the output is not of this form or $(t,\pi,a,b)$ is not consistent with $d,r,z$, then $\mathcal{A}$ guesses that the ciphertext received was an encryption of $m_1$; otherwise of $m_0$.
\end{itemize}

Denote by $b$ the bit sampled by the challenger. Notice that in the case that $b=0$, the distribution of the output $d$ that $\mathcal{A}$ obtains from $P^*$ is precisely that of $H_0$. When $b=1$, it is precisely that of $H_1$. Given the hypothesis, this implies that $\mathcal{A}$ wins the CPA-security game with $1/q$ advantage infinitely often, which contradicts CPA-security of $\textsf{FHE}$.
\end{proof}

Next, we show that the optimal probabilities of a prover being accepted in $H_0$ and in $H_1$ are negligibly close. Let $P$ be a prover for $H_0$. We  construct a prover $P'$ for $H_1$ such that 
\[\Pr\big(P' \text{ is accepted}\big) > \Pr\big(P \text{ is accepted}\big) - \negl(|x|)\;.\]
$P'$ runs as follows:
\begin{itemize}
    \item As part of $H_1$, $P'$ receives a CRS $(\gamma, pk_P, pk_V)$. It then receives a commitment $\tilde{\sigma}$ and a ciphertext $c$. $P'$ runs $(\tilde{pk}_P, \tilde{sk}_P)\leftarrow \textsf{Com.gen}(1^{\mu})$. It provides $(\gamma, \tilde{pk}_P, pk_V)$ to $P$ as CRS, and $c$ as the ciphertext. $P$ returns $d$ and $\sigma$. $P'$ decrypts $\sigma$ using $\tilde{sk}_P$. Let $t,\pi,a,b$ be the committed value (if the committed value is not of this form or if decryption fails, $P'$ aborts). $P'$ computes $\sigma' \leftarrow \textsf{Com.commit}(pk_P; t,\pi,a,b)$. Let $s'$ be the randomness used. $P'$ returns $d, \sigma'$ to the verifier of $H_1$.
    \item $P'$ receives $z, r, s_V$. If $z, r, d, t, \pi, a,b$ are consistent, $P'$ computes an NIZK proof of the existence of an opening to $\sigma'$ such that the opened keys are consistent with $z,r,d$.
\end{itemize}

Since the NIZK argument system employed has negligible soundness (we chose the NIZK argument system from Section~\ref{sec:nizk-np}), the probability that $P$ is accepted in $H_0$ conditioned on returning a bad $(d, \sigma)$ is negligible. By Claim \ref{claim: partial}, it follows that
\begin{align}
\Pr\big(P' \text{ is accepted}\big) &> \Pr\big(P \text{ returns a good } (d, \sigma)\big) - \negl(|x|) \nonumber \\
&> \Pr\big(P \text{ is accepted}\big) - \negl(|x|) \nonumber
\end{align}
where the first inequality follows from Claim \ref{claim: partial}, and the second is because the NIZK argument system employed has negligible soundness, and so the probability that $P$ is accepted in $H_0$ conditioned on returning a bad $(d, \sigma)$ is negligible.

We have thus shown that the optimal probability of acceptance in $H_1$ is at least that of $H_0$, up to a negligible quantity. An analogous argument shows the converse. Hence, the optimal probabilities of acceptance in $H_0$ and $H_1$ are negligibly close. 

\paragraph{$H_1 \approx H_2$.} The encryption of the zero string can be simulated by the prover itself, so clearly the optimal acceptance probabilities in $H_1$ and $H_2$ are exactly equal. 
\paragraph{$H_2 \approx H_3$.} For any prover $P^*$, since the action of $P^*$ to produce $d,\sigma$ and the measurements of $V$ from step (iii) of the preprocessing step commute, performing them before receiving $d, \sigma$ or after does not change the acceptance probability of $P^*$.
\paragraph{$H_3 \approx H_4$} We show that the optimal probability of a prover being accepted in $H_3$ and $H_4$ is the same. Given a prover $P$ for $H_3$ one can construct a prover $P'$ that is accepted in $H_4$ with the same probability: $P'$ starts by sharing $2Np(n)$ EPR pairs with $P$. When $P$ outputs $(d, \sigma)$, $P'$ forwards $(d,\sigma)$ as well as his half of the $2Np(n)$ EPR pairs to $V$. When $V$ returns a message of the form $(z,r,s_V)$, $P'$ forwards this to $P$. $P$' returns the final message of $P$ to $V$. It is clear that the acceptance probability of $P'$ is the same as that of $P$, as the distribution of the whole transcript is identical to that of experiment $H_3$. The reverse direction is similar. 
\paragraph{$H_4 \approx H_5$.} Again, the optimal probabilities in $H_4$ and $H_5$ are the same. For any prover $P$ for $H_4$, we construct $P'$ for $H_5$ which is accepted with the same probability. $P'$ runs $P$. The only difference is that when $P$ outputs $d = (a,b)$ and a state $\rho$ to send to $V$, $P'$ sends $X^aZ^b\rho (X^a Z^b)^\dagger$ to $V$. 

\paragraph{$H_5 \approx H_6$.} We show that the optimal acceptance probability can only increase. Let $P$ be a prover in $H_5$. We construct a prover $P'$ for $H_6$ as follows. $P'$ samples a CRS $(\gamma, pk_P, pk_V)$ such that it knows the secret key $sk_P$ corresponding to $pk_P$. Upon receiving a commmitment (to some randomness) from the verifier, it runs $P$ obtaining $\sigma, \tilde{\rho}$. It decrypts $\sigma$ using $sk_P$. Let $(t,\pi,a,b)$ be the opened value (if it is not of this form, $P'$ aborts). $P'$ sends $\sigma, \tilde{\rho}$ to the verifier. Upon receiving $z, r$ from the verifier, $P'$ does the following:
\begin{itemize}
    \item If $r\in \{1,\ldots,m\}$: $P$' obtains an NIZK proof from $P$ and forwards this to the verifier.
    \item If $r = m+i, \, i\in [l(n)]$: $P'$ computes an NIZK proof for the existence of an opening to $\sigma$ such that the opened keys $(t,\pi,a,b)$ satisfy $Q'(t,\pi,a,b, r, z) = 1$ (where $Q'$ is as defined in \ref{def: Q'}).
\end{itemize}
Notice that in $H_6$, the verifier only measures the subset of physical qubits corresponding to the $i$-th qubit of the circuit description, and the check that it performs is strictly less stringent than in $H_5$. A simple calculation shows that the probability of acceptance of $P'$ in $H_6$ is at least that of $P$ in $H_5$. 

\paragraph{$H_6 \approx H_7$.} We first show that the acceptance probability of a prover $P$ cannot change more than negligibly if the commitment to $r$ in $H_6$ is replaced by a commitment to a zero string. Suppose for a contradiction that there exists $P$ such that the probability of acceptance in $H_6$ differs non-negligibly from $H_7$. This implies that there exists a polynomial $q>0$ such that the two probabilities differ by at least $1/q(\mu)$ for infinitely many values of the security parameter $\mu$. Suppose the first probability is higher by at least $1/q$ infinitely often (the reverse case is similar). Then, we can construct an adversary $\mathcal{A}$ that breaks the hiding property of the commitment. $\mathcal{A}$ samples a random $r$ and sends $m_0=r$ and $m_1 = 0$ to the challenger. Upon receiving a commitment and a public key $pk_P$ from the challenger, $\mathcal{A}$ simulates a CRS which includes $pk_P$ and an execution of the verifier from $V_7$ (or equivalently $V_8$) except that it forwards to $P$ the commitment received. If the verifier accepts, $\mathcal{A}$ guesses that it received a commitment to $r$, otherwise that it received a commitment to $0$. Notice that when the challenger sends a commitment to $r$ the view of $P$ is exactly as in $H_6$, and when it sends a commitment to $0$ the view of $P$ is exactly as in $H_7$. Thus $\mathcal{A}$ has $1/q$ distinguishing advantage infinitely often, which gives the desired contradiction.

\paragraph{$H_7 \approx H_8$.} We show that the optimal acceptance probability decreases at most negligibly. Let $P$ be a prover for $H_7$. We construct a prover $P'$ for $H_8$ as follows. $P'$ gets as input the $CRS$ $pk_P$. Then, $P'$ samples a CRS $\tilde{pk}_P$ such that it knows the secret key $\tilde{sk}_P$ corresponding to $\tilde{pk}_P$. It forwards $\tilde{pk}$ to $P$ as the CRS, and obtains $\sigma, \tilde{\rho}$. It decrypts $\sigma$ using $\tilde{sk}_P$. Let $(t,\pi,a,b)$ be the opened value (if it is not of this form $P'$ aborts). $P'$ computes $\sigma' \leftarrow \textsf{Com.commit}(pk_P; t,\pi,a,b)$. Let $s'$ be the randomness used. Sends $\tilde{\rho}, \sigma'$ to the verifier. $P'$ receives $z,r$ from the verifier, and checks that $Q(z,r,t,\pi,a,b) = 1$. If so, it engages with the verifier in a ZK proof of existence of an opening to $\sigma'$ such that the opened keys, together with $z,r$, satisfy the predicate $Q$. Since the NIZK proof has negligible adaptive soundness, the probability that $P$ is accepted in $H_8$ conditioned on $z,r, t,\pi,a,b$ being inconsistent is negligible. This implies that 
\[ \Pr\big(P' \text{ is accepted in } H_8\big)> \Pr\big(P \text{ is accepted in } H_7\big) -negl(|x|)\;.\]

\paragraph{$H_8 \approx H_9$.} From a prover $P$ for $H_8$, we obtain a prover $P'$ for $H_9$ with the same acceptance probability as follows: $P'$ runs $P$ for the first round. Then $P'$ engages in an honest coin flipping protocol with the verifier and forwards the outcome $r$ to $P$. Finally, $P'$ runs $P$ until the end. 

\paragraph{$H_9 \approx H_{10}$.} We show that the optimal acceptance probability can only increase. Let $P$ be a prover for $H_9$. We construct a prover $P'$ for $H_{10}$ as follows. Similar to the earlier argument for $H_7 \approx H_8$, $P'$ samples a CRS $\tilde{pk}_P$ such that it knows the corresponding secret key $\tilde{sk}_P$. It forwards $\tilde{pk}_P$ as CRS to $P$ who returns $\tilde{\rho}, \sigma$. $P'$ decrypts $\sigma$ using $\tilde{sk}_P$. Let $(t,\pi,a,b)$ be the opened value. $P'$ computes $\sigma' \leftarrow \textsf{commit}(t,\pi,a,b)$. Let $s'$ be the randomness used. $P'$ Sends $ \sigma, \tilde{\rho}$ to the verifier. $P'$ engages honestly in a coin-flipping protocol with $V$. Let $r$ be the outcome. Upon receiving $z$ from the verifier, $P'$ checks that $Q(z,r,t,\pi,a,b) = 1$. If so, it engages in a ZK proof of an opening to $\sigma'$ such that the opened keys, together with $z,r$ satisfy the predicate $Q$. Since the ZK proof has negligible soundness, the probability that $P$ is accepted in $H_9$ conditioned on $Q(z,r , t,\pi,a,b) = 0$ is negligible. Moreover, since the interaction of $P$ with $V$ in the coin-flipping protocol could have been simulated by a uniformly random $r$ together with some local operation on the register of $P$, the probability that $Q(z,r,t,\pi,a,b) = 1$ is exactly the same for $P$ and $P'$. This implies that 
\[\Pr\big(P' \text{ is accepted in } H_{10}\big) > \Pr\big(P \text{ is accepted in } H_9\big) -\negl(|x|)\;.\] 
\end{proof}

\subsection{Reduction to~\cite{broadbent2016zero}}
\label{sec:sound-conc}

Consider the following Hamiltonian:
\begin{equation}
    H = H(Q_n) +\ket{0}\bra{0}_{\textsf{clock}}\otimes \sum_{i} \ket{\overline{x_i}}\bra{\overline{x_i}}_{\textsf{instance}_i}
\end{equation}
Notice that this is a \textit{local} Clifford Hamiltonian (once the clock is expressed in unary). It is immediate to verify that the protocol described in $H_{10}$ is identical to an execution of the protocol from \cite{broadbent2016zero} on common input $H$ (see Section~\ref{sec:bjsw}). By invoking the soundness of the \cite{broadbent2016zero} protocol, we deduce that protocol $H_{10}$ is sound, and as a consequence of Lemma~\ref{lem: hybrids}, protocol $H_0$ (the original protocol) is also sound. This concludes the soundness analysis.

\subsection{Parallel amplification of soundness}
\label{sec:parallel-ampl}

%\begin{definition}
%We say that a $\QMA$ relation $(Q,\alpha,\beta)$  \emph{reduces to} $(Q',\alpha',\beta')$ under a map $T$ if $T=\{T_n\}_{n\in \N}$ is a uniformly generated family of (not necessarily efficient) maps $T_n: \C^{p(n)} \to \C^{p'(n)}$ such that the following hold: 
%\begin{itemize}
%\item (Completeness) For any $(x,\ket{\psi}) \in R_{Q,\alpha}$ it holds that $(x,T(\rho))\in R_{Q',\alpha'}$. 
%\item (Soundness) For any $x\in N_{Q,\beta}$  it holds that  $x\in N_{Q',\beta'}$.
%\end{itemize}
%\end{definition}
%
%\begin{lem}[Marriott-Watrous]
%Let $(Q,\alpha,\beta)$ be a $\QMA$ relation such that $\alpha(n)-\beta(n) > 1/\poly(n)$. Then for any integer $k\geq 1$, there is $Q_k$ such that $(Q,\alpha,\beta)$ reduces to $(Q_k,1-2^{-r},2^{-r})$ under the identiy map. 
%\end{lem}

%\begin{proof}
%\tnote{TBD. I'm not sure this is what we need, or that it is correct as stated}
%\end{proof}

%If $Q=\{Q_n\}$ is a family of quantum circuits that each return a single bit as output, for every $k\geq 1$ we define $Q^{(k)} = \{Q_n^{(k)}\}$ where for every $n$, $Q_n^{(k)}$ executes $Q_n$ $k$ times in parallel and returns the AND of their output bits. 

Let $\Pi=(S,P,V)$ be a non-interactive argument with setup. Let $k\geq 1$ be an integer. We define the $k$-th parallel repetition of $\Pi$, $\Pi^{(k)} = (S^{(k)},P^{(k)},V^{(k)})$ as follows: 
\begin{itemize}
\item For every $\mu$ and $n$, $S^{(k)}_{\mu n}$ executes $k$ independent copies of $S_{\mu n}$ and returns their outputs. 
\item For every $\mu$ and $n$, $V^{(k)}_{\mu n} = (V^{(k)}_{1,\mu n}, V^{(k)}_{2,\mu n})$ and $P^{(k)}_{\mu n} = (P^{(k)}_{1,\mu n}, P^{(k)}_{2,\mu n})$. $V^{(k)}_{1,\mu n}$ and $P^{(k)}_{1,\mu n}$ take as input a classical string, interpreted as $k$ outputs of $S_{\mu n}$. They execute $k$ copies of $V_{1,\mu n}$ and $P_{1,\mu n}$ respectively on the $k$ inputs and they each produce a $k$-register quantum state as output.  $V^{(k)}_{2,\mu n}$ takes as input the output of $V^{(k)}_{1,\mu n}$ and an instance $x$. $P^{(k)}_{2,\mu n}$ takes as input the output of $P^{(k)}_{1,\mu n}$, $x$ and a $k$-register quantum state $\rho^{(k)}$. Each of them executes $k$ copies of $V_{2,\mu n}$ and $P_{2,\mu n}$ respectively, where the $i$-th copy of $P_{2,\mu n}$ is executed with the $i$-th register of $\rho^{(k)}$ as its quantum input. $V_{\mu n}^{(k)}$ returns $1$ if and only if each copy of $V_{\mu n}$ returns $1$. 
\end{itemize}

For classical non-interactive arguments (and more generally arguments of up to three rounds) it is well-known that parallel repetition amplifies soundness exponentially~\cite{bellare1997does}. In our setting the argument is made more delicate by the fact that the message sent from verifier to prover in the setup phase is quantum. We do not know if parallel repetition amplifies soundness in a black-box way for (quantum) non-interactive arguments with setup. However, using the specific structure of our protocol we can give a tailored argument.

\begin{lem}\label{lem:sound-ampl}
Let $(Q,\alpha,\beta)$ be a $\QMA$ relation and $\Pi=(S,P,V)$ the argument system for $(Q,\alpha,\beta)$ described in Section~\ref{sec: protocol}. Let $c$ and $s$ be the completeness and soundness of the protocol respectively. Then for any integer $k\geq 1$, $(S^{(k)},P^{(k)},V^{(k)})$ is an argument system for $(Q,\alpha,\beta)$ with completeness $1-k(1-c)$ and soundness $s'$, where $s'$ is negligibly close to $s^k$.
\end{lem}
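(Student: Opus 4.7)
Completeness of $\Pi^{(k)}$ follows from a straightforward union bound. The setup $S^{(k)}_{\mu n}$ samples $k$ independent CRS's, and the honest prover $P^{(k)}_{\mu n}$ operates on $k$ fresh witness registers drawn from $\rho^{\otimes k}$; hence the $k$ executions are mutually independent, each accepting with probability at least $c(\mu)$. The probability that any one copy rejects is therefore at most $k(1-c(\mu))$, yielding the stated completeness.

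For soundness, my plan is to lift the hybrid argument of Section~\ref{sec:hybrids} to the parallel setting. I define $H_j^{(k)}$ to be the protocol in which $k$ copies of $H_j$ are run in parallel, allowing the adversarial prover to share arbitrary quantum state across copies. I then want to show that the optimal acceptance probability in $H_j^{(k)}$ differs from that in $H_{j+1}^{(k)}$ by at most a negligible quantity. The structural transitions (namely $H_1 \to H_2$, $H_2 \to H_3$, $H_3 \to H_4$, $H_4 \to H_5$, $H_5 \to H_6$, $H_8 \to H_9$, $H_9 \to H_{10}$) extend to the parallel setting essentially unchanged, since the rearranged operations act on disjoint registers across copies and hence commute. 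The computational transitions---$H_0 \to H_1$ (CPA security of \textsf{FHE}), $H_6 \to H_7$ (hiding of the commitment to $r$), and $H_7 \to H_8$ (adaptive soundness of the NIZK)---are handled via a standard $k$-step sub-hybrid argument in which the relevant value in one copy at a time is swapped out, while all other $k-1$ copies are simulated honestly using independently sampled keys from $S^{(k)}$. Each sub-hybrid incurs a negligible loss, for a total loss of $k \cdot \negl(\mu) = \negl(\mu)$, since $k = \poly(\mu)$.

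It then suffices to show that $H_{10}^{(k)}$---i.e.\ $k$ parallel copies of the \cite{broadbent2016zero} protocol---has soundness at most $s^k + \negl$. The plan here is to argue that after the $k$ prover first-round messages (the committed encoding keys and encoded states) the internal state is fixed, after which the $k$ coin-flipping subprotocols run on independently sampled commitment randomness and yield $k$ independent uniform challenges $r_1, \ldots, r_k$. Since each per-copy verifier acts on its own disjoint set of physical qubits, the verifier measurements across copies commute, and the joint acceptance probability factorizes over the $k$ copies conditional on the committed state. Each factor is at most $s + \negl$ by the single-copy soundness of $H_{10}$, so the overall acceptance probability is at most $(s + \negl)^k \leq s^k + \negl$.

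The main obstacle lies in the treatment of the coin-flipping subprotocol in parallel. A priori, the adversarial prover may entangle its actions across the $k$ coin flips, and one needs to verify that the $k$ resulting challenges $r_i$ remain independently uniform against such an adversary. I expect this to follow because the coin flip is built from the extractable, perfectly binding commitment scheme of Section~\ref{sec: commitment}, so the standard simulation from~\cite{damgaard2009coinflip} extracts all $k$ commitments simultaneously at no additional cost; once this is established, the factorization of per-copy acceptance events follows from the disjointness of their registers. A secondary concern is ensuring that the NIZK argument system remains adaptively sound under parallel composition, which is immediate from its single-copy adaptive soundness together with the per-copy independence of the CRS components sampled by $S^{(k)}$.
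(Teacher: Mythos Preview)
Your approach is genuinely different from the paper's, and while most of the pieces are in order, there is one real gap that prevents it from establishing the lemma as stated.

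\textbf{What the paper does instead.} The paper does \emph{not} carry the full hybrid chain out to $H_{10}^{(k)}$. It only lifts to (a slight strengthening of) $H_1$ in parallel, where the prover must additionally reveal openings to its encoding-key commitments; call this $\tilde{\Pi}'$. After showing (Claim~\ref{claim:ampl-1}, via essentially the sub-hybrid ideas you sketch) that the maximum success probability in $\Pi^{(2)}$ and in $\tilde{\Pi}'$ are negligibly close, the paper argues by conditioning: if some $P^*$ succeeds in $\tilde{\Pi}'$ with probability non-negligibly above $s^2$, write this as $p_1 p_2$ where $p_1$ is the probability the first copy accepts and $p_2$ the conditional probability the second accepts. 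Since $p_1 \leq s + \negl$ by single-copy soundness, $p_2$ must exceed $s$ non-negligibly. The novel step is a single-copy prover $P'$ that achieves $p_2$: $P'$ simulates the first verifier itself, but feeds $P^*$ \emph{fresh half-EPR pairs} in place of the second verifier's quantum message, lets $P^*$ commit, checks whether copy~1 would accept (restarting if not), and only then \emph{teleports} the real verifier's halves through those EPR pairs, absorbing the teleportation correction into the reported $d$ via the Clifford commutation relation~\eqref{eq:clifford-com}. This teleportation-based rewinding is what decouples the copies while keeping the reduction pinned to the soundness $s$ of $\Pi$ itself.

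\textbf{The gap in your route.} Pushing to $H_{10}^{(k)}$ and factorizing is workable in principle---the operator bound $\mathbb{E}_r[A^r] \leq s_{10}\cdot I$ from single-copy soundness of $H_{10}$, together with independence of the $r_i$, does give $\bigotimes_i \mathbb{E}_{r_i}[A_i^{r_i}] \leq s_{10}^k \cdot I$---but the quantity $s_{10}$ you land on is the soundness of the \emph{BJSW} protocol, not of $\Pi$. The hybrid chain of Section~\ref{sec:hybrids} is one-directional from $H_5$ onward (e.g.\ $H_5 \to H_6$ strictly relaxes the instance check from ``all qubits'' to ``one random qubit''), so Lemma~\ref{lem: hybrids} only gives $s \leq s_{10} + \negl$, not the reverse; the gap can be $1/\poly$. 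Your argument therefore proves soundness $s_{10}^k + \negl$ rather than $s^k + \negl$. This suffices for Theorem~\ref{thm:main}, but not for the lemma as written. A smaller point: your justification for the parallel coin flip invokes the extractable commitment of Section~\ref{sec: commitment}, but in $H_{10}$ that scheme has been replaced by the plain BJSW commitment. The correct reason the flips remain fair in parallel is simply that the \emph{prover} commits first with a perfectly binding scheme, so the verifier's independently uniform contributions force independently uniform outcomes regardless of cross-copy entanglement.
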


\begin{proof}
Let $k\geq 1$ and for ease of notation write $\Pi' = (S',P',K')=(S^{(k)},P^{(k)},V^{(k)})$. We also fix a $\mu$ and an integer $n$ and omit the indices.

\paragraph{Completeness.} Let $(x,\rho) \in R_{Q,\alpha}$. Then by definition $Q$ accepts $\rho$ with probability at least $\alpha$. By the completeness property of $\Pi$, $(S,P(x,\rho_i),V(x))$ returns $1$ with probability at least $c$. By a union bound, $(S',P'(x,\rho^{\otimes k}),V'(x))$ returns $1$ with probability at least $1-k(1-c)$. 

\paragraph{Soundness.} 
For ease of notation we give the proof for $k=2$, as the general case is similar. Suppose for contradiction that there exists an $x\in N_{Q,\beta}$ and a quantum polynomial-time $P^*$ such that $(S',P^*(x),V'(x))$ returns $1$ with probability non-negligibly larger than $s'=s^2$. 

As a first step we show that there exists a $\tilde{P}^*$ that succeeds with probability non-negligibly larger than $s'=s^2$ in an (interactive) hybrid variant $H'_1$ of the protocol, that is identical to hybrid $H_1$ considered in the soundness proof (see Section~\ref{sec:soundness-overview}), except for the following modifications: First, the protocol described in $H_1$ is executed twice in parallel. Second, at step (a) of $H_1$, the prover is expected to report, in addition to the commitments $\sigma_i$, valid openings $(t_i,\pi_i,a_i,b_i)$ (including the randomness used) for $i\in\{1,2\}$. (The verifier checks that the openings are valid.) We let $\tilde{V}$ be the verifier in the hybrid $H'_1$, and denote by $\tilde{\Pi}'$ the resulting interactive protocol. 

\begin{claim}\label{claim:ampl-1}
 The maximum success probability of a prover in protocol $\Pi'$ and in protocol $\tilde{\Pi}'$ are negligibly close. 
\end{claim}

\begin{proof}
The fact that the maximum success probability of a prover in the two-fold parallel repetition of $H_1$ is negligibly close to that in $\Pi'$ follows from the soundness analysis in Section~\ref{sec:hybrids}, with straightforward modifications to handle the parallel repetition. 

Next, we establish that the optimal success probabilities in $H_1'$ and in the two-fold parallel repetition of $H_1$ are negligibly close. From now on, for convenience, we say $H_1$ to mean the two-fold parallel repetition of $H_1$. Clearly, since the prover is asked for more information in $H_1'$ than in $H_1$ the maximum success probability cannot increase. To show that it can only decrease negligibly, let $P^*$ be a prover that succeeds in $H_1$ with some probability, and consider the following prover $\tilde{P}$ for $H_1'$. 

$\tilde{P}$ first executes the setup phase with $\tilde{V}$ exactly as $P^*$ would. As a result, it obtains two CRS $(\gamma_1,pk_{P,1},pk_{V,1})$ and $(\gamma_2,pk_{P,2},pk_{V,2})$ as well as two $2Np(n)$-qubit states $\rho_1$ and $\rho_2$ and classical information $(pk_{E,1},\sigma_1,\alpha_1)$ and $(pk_{E,2},\sigma_2,\alpha_2)$ from the preprocessing phase $\tilde{V}_1$ of $\tilde{V}$. 

Next, for $i\in\{1,2\}$ $\tilde{P}$ generates new CRS $(\tilde{\gamma}_i,\tilde{pk}_{P,i},\tilde{pk}_{V,i})$ such that $\tilde{\gamma}_i = \gamma_i$ and $\tilde{pk}_{V,i} = pk_{V,i}$ but $\tilde{pk}_{P,i}$ is sampled jointly with a secret key $\tilde{sk}_{P,i}$ for $\textsf{Com}$.

In the instance-dependent interactive phase, after having received $\tilde{V}_2$'s first message (encryptions of $0$), $\tilde{P}$ simulates $P^*$, \emph{except} that it uses the new CRS in place of the older one. Receiving $P^*$'s messages $(d_i,\sigma_i)$ for $i\in\{1,2\}$, $\tilde{P}$ uses $\tilde{sk}_{P,i}$ to open $\sigma_i$. Let $(t_i,\pi_i,a_i,b_i)$ be the opened keys. $\tilde{P}$ samples $\tilde{\sigma}_i \leftarrow  \textsf{Com.commit}(t_i,\pi_i,a_i,b_i)$. Sends $(d_i, \tilde{\sigma}_i)$ together with the opening $(t_i,\pi_i,a_i,b_i)$ and the randomness used to the verifier. Then, proceeds to play the remaining part of $H_1'$ optimally. The optimal probability of being accepted at this point only depends on whether the committed values are consistent with the information received from the verifier in the subsequent step. Thus $\tilde{P}$ performs at least negligibly close to $P^*$ (if not better). 
\end{proof}

Let $P^*$ be a prover with optimal success probability in $H_1'$. 
Let $p_1$ be the probability that the first proof returned by $P^*$ is accepted by the first verifier, and $p_2$ the probability that the second proof is accepted, conditioned on the first one being accepted. Since the maximum success probability in $\Pi$ is $s$, it follows from (the single-repetition analogue of) Claim~\ref{claim:ampl-1} that the maximum success probability of any prover in the single-repetition analogue $\tilde{\Pi}$ of $\tilde{\Pi}'$ is at most negligibly higher than $s$. By a straightforward simulation argument it follows that $p_1$ is at most negligibly larger than $s$. Applying Bayes' rule, $p_2$ is non-negligibly higher than $s$. In the remainder of the proof we derive a contradiction by showing   that there is an (efficient) $P'$ such that $P'$ succeeds in the single-repetition hybrid $\tilde{\Pi}$ with probability negligibly close to $p_2$.

\paragraph{Definition of $P'$.}
Let $\rho_2, m_2$ be respectively the quantum state and the classical message received from $V_1$ in the preprocessing phase. $P'$ independently simulates another execution of $V_1$ and $V_2$ to obtain $\rho_1$ and $m_1$. Instead of executing directly $P^*$ on these inputs, it executes it on the state $\rho_1$, but the state $\rho_2$ is replaced by half-EPR pairs. 

Let $(d'_i,\sigma_i)$ and $(t_i,\pi_i,a_i,b_i)$, for $i\in\{1,2\}$, be the outcomes obtained from $P^*$. Note that at this point $P'$ can already verify if the outcomes associated with the first repetition $i=1$ will lead to acceptance in the protocol, simply by running the simulated verifier one step forward to obtain $(z_1,r_1,s_1)$ and checking whether $Q(t_1,\pi_1,a_1,b_1,r_1,z_1,d_1)=1$. If this is not the case, $P'$ repeats the first part with a new execution of the setup phase $V_1$ and fresh half-EPR pairs. 

Suppose now that $P'$ has obtained $(d_i,\sigma_i)$ and $(t_i,\pi_i,a_i,b_i)$, for $i\in\{1,2\}$ from $P^*$ such that the execution of the first copy of the verifier leads to acceptance. 

To complete the protocol, $P'$ now performs a Bell basis measurement between the half-EPR pairs not given to $P^*$ and the state $\rho_2$, that until now has not been used. Let $d'_2 = (x'_i,z'_i)$ denote the teleportation outcomes obtained. $P'$ then pursues the interaction with $V_2$, except that it replaces the outcome $d_2 = (x_i,z_i)$ obtained from $P^*$ by $d_2 \oplus d'_2 = (x_i\oplus x'_i, z_i \oplus z'_i)$. All other messages are the same. This completes the description of $P'$. 

\paragraph{Success probability of $P'$.}
To conclude we claim that $P'$ succeeds with probability non-negligibly larger than $s$ in $\tilde{\Pi}$. Note that the result of the ``teleportation'' operation performed by $P'$ is that the prover $P^*$ is executed on the state $(X^{x'} Z^{z'}) \rho_2 (X^{x'} Z^{z'})^\dagger$ instead of the state $\rho_2$. Since the state $\rho_2$ is obtained from half-EPR pairs that were prepared and measured by the verifier $V_1$, the effect of this one-time pad is identical to not modifying $\rho_2$, but instead modifying the verifier's measurement outcomes $z$ by XORing them with a string $e'_i$ defined as in~\eqref{eq:clifford-com} with $(a,b)$ on the left-hand side replaced by $(x',z')$. Let $\tilde{z}$ be the modified outcomes. Using that by definition for any $t,\pi,a,b,r,d$ it holds that 
\[\tilde{Q}(t,\pi,a,b,r,\tilde{z},d) = \tilde{Q}(t,\pi,a,b,r,z,\tilde{d})\;,\]
where $\tilde{d}$ is computed from $d$ as done by $P'$ (described above), it follows that the simulation of $P^*$ that is performed by $P'$ is indistinguishable (from the point of view of $P^*$) of an actual execution of protocol $\tilde{\Pi}'$. Furthermore, since $P'$ repeatedly executes $P^*$ until the first repetition of the protocol accepts, the probability that $P'$ succeeds in its interaction with $V_2$ is identical to the probability that  $P^*$ succeeds in the second repetition in $\tilde{\Pi}'$, conditioned on the first repetition succeeding. This probability is exactly $p_2$, which according to the discussion above is non-negligibly larger than $s$. This is a contradiction and concludes the proof. 

%\paragraph{Definition of $P_2^*$.}
%Finally, we use $\tilde{P}$ to construct a prover $P_2^*$ for a single execution of $\Pi$ such that $P_2^*$ succeeds with probability that is negligibly close to $p_2$, leading to a contradiction since by assumption $p_2$ is non-neglibly larger than $s$. $P_2^*$ starts by executing an independent copy of the setup phase with $S$ and $V$. It then runs $\tilde{P}$ on the messages obtained from $V$, and EPR pairs and encryptions of $0$ to replace the messages $\tilde{P}$ expects from $V_2$. $P_2^*$ repeats this procedure until the proof obtained that is meant for the first verifier $V_1$ is accepted by the verifier ($P_2^*$ can check this by itself, since it is entirely simulating the interaction with $V_1$). Once a valid proof has been obtained, $P_2^*$ completes the execution of $\tilde{P}$ by performing the Bell basis measurements and updating the second proof as specified in the description of $\tilde{P}$ above. 
%
%\paragraph{Success probability of $P_2^*$.} By construction, $P_2^*$ succeeds with probability begligibly close to $p_2 > s$ in protocol $\Pi$. This is a contradiction. 
\end{proof}

Lemma \ref{lem:sound-ampl}, which we just proved, establishes soundness amplification for the parallel repetition of the protocol from Section~\ref{sec:prot-decs}. Next, we prove that such parallel repetition preserves the computational zero-knowledge property. We will later prove in Section \ref{sec:zk property} that the protocol from Section~\ref{sec:prot-decs} is computational zero-knowledge.

\begin{lem}\label{lem:zk-ampl}
Let $(Q,\alpha,\beta)$ be a $\QMA$ relation, and let $\Pi=(S,P,V)$ be the argument system for $(Q,\alpha,\beta)$ which is described in Section~\ref{sec: protocol}. Suppose that $\Pi$ is computationally zero-knowledge with respect to any quantum polynomial time verifier with (quantum) auxiliary input.\footnote{This will be proven in section \ref{sec:zk property}.} Then $\Pi^{(k)} = (S^{(k)}, P^{(k)}, V^{(k)})$ is also computationally zero-knowledge with respect to any quantum polynomial time verifier with quantum auxiliary input.
\end{lem}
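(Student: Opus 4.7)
The plan is a standard hybrid argument that reduces zero-knowledge of $\Pi^{(k)}$ to that of $\Pi$. Fix an arbitrary QPT verifier $V^{*(k)}$ for $\Pi^{(k)}$ together with quantum auxiliary input $\ket{\Psi}$. First I would define a candidate simulator $Sim^{(k)}$ by orchestrating $k$ invocations of the single-session simulator $Sim$ promised by the ZK property of $\Pi$: $Sim^{(k)}$ runs $V^{*(k)}$ internally a single time, and uses $Sim$ once per parallel session to produce the simulated CRS components and prover messages that are fed to $V^{*(k)}$, ultimately outputting the contents of $V^{*(k)}$'s register $\mathsf{Z}$.

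To argue indistinguishability from the real parallel interaction, I would introduce the hybrid distributions $H_0,H_1,\ldots,H_k$, where in $H_i$ the first $i$ parallel sessions are produced via $Sim$ and the remaining $k-i$ sessions are produced by the honest prover $P$, each acting on a fresh copy of the witness $\sigma$ (which is permitted under Definition~\ref{def:compl-alt}). By construction, $H_0$ is the output of the real execution $(S^{(k)},P^{(k)}(x,\sigma^{\otimes k}),V^{*(k)}(\mathsf{Z}))$, while $H_k$ is the output of $Sim^{(k)}$, so it suffices to establish $H_{i-1}\approx H_i$ for each $i\in[k]$ and combine these via the triangle inequality (which is valid because $k=\poly(\mu)$). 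Each single hybrid step will be proved by reducing to the ZK property of $\Pi$: from $V^{*(k)}$ I would construct a single-session QPT verifier $V^*_i$ whose quantum auxiliary input bundles $\ket{\Psi}$ together with $k-i$ fresh copies of the witness $\sigma$ and any state needed to drive $Sim$ on the previously simulated sessions. Upon receiving the external CRS $\crs_i$, $V^*_i$ samples the other $k-1$ CRS strings locally, invokes $V^{*(k)}$ coherently to obtain its $k$-tuple of preprocessing messages, forwards the $i$-th of these to its external prover, completes sessions $j<i$ by invoking $Sim$ and sessions $j>i$ by running $P$ honestly, and returns the state in $V^{*(k)}$'s $\mathsf{Z}$ register. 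Pairing $V^*_i$ with the honest prover reproduces $H_{i-1}$ and pairing it with $Sim$ reproduces $H_i$, so Definition~\ref{def: zk qma} applied to $V^*_i$ yields the required indistinguishability.

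The main obstacle will be ensuring that the internal invocations of $Sim$ for sessions $j<i$, and the internal runs of $P$ for sessions $j>i$, do not interfere with the external single-session interaction for session $i$. This is where the straight-line structure of the simulator for $\Pi$ established in Section~\ref{sec:zk property} is crucial: because $Sim$ does not rewind $V^{*(k)}$, one can run $V^{*(k)}$ coherently exactly once and then process each session's messages independently, so that $V^*_i$ itself remains QPT and the composition is well defined. Were $Sim$ to rely on quantum rewinding, the $k$ sessions would share entanglement through $V^{*(k)}$ and a more delicate analysis in the spirit of Watrous's quantum rewinding lemma would be required. I would finally verify that $V^*_i$'s quantum auxiliary input is of polynomial size, which holds because $k=\poly(\mu)$ and each copy of $\sigma$ has polynomial size, completing the reduction.
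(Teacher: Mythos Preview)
Your proposal is correct and follows essentially the same approach as the paper. Both define the parallel simulator by running $V^{*(k)}$ once to obtain all $k$ preprocessing messages and then invoking the single-session simulator on each coordinate, and both rely on the same structural fact---that the protocol is one-round (verifier speaks, then prover speaks), so the simulator is straight-line and never rewinds $V^{*(k)}$.

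The only difference is in presentation. The paper argues directly from the structural observation that the verifier's intermediate state $Z_1$ (after sending its $k$ preprocessing messages) is independent of anything the prover does, so the $k$ prover slots can be swapped for simulator slots without further analysis; the hybrid is left implicit. You instead spell out the standard hybrid $H_0,\ldots,H_k$ and build an explicit single-session verifier $V^*_i$ carrying the remaining witness copies as auxiliary input. Your version is more detailed and makes the role of auxiliary-input ZK explicit at each step; the paper's version is terser and foregrounds the one-round structure as the reason parallel composition is easy here (contrasted in their closing remark with the multi-round case). Both arrive at the same simulator and the same indistinguishability claim.
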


\begin{proof}
Consider a single execution of $\Pi^{(k)}$, in which $V^{*(k)}$, a potentially cheating verifier, outputs $k$ quantum messages, and receives $k$ proofs in return:

\begin{figure}[H]
\begin{center}
\includegraphics{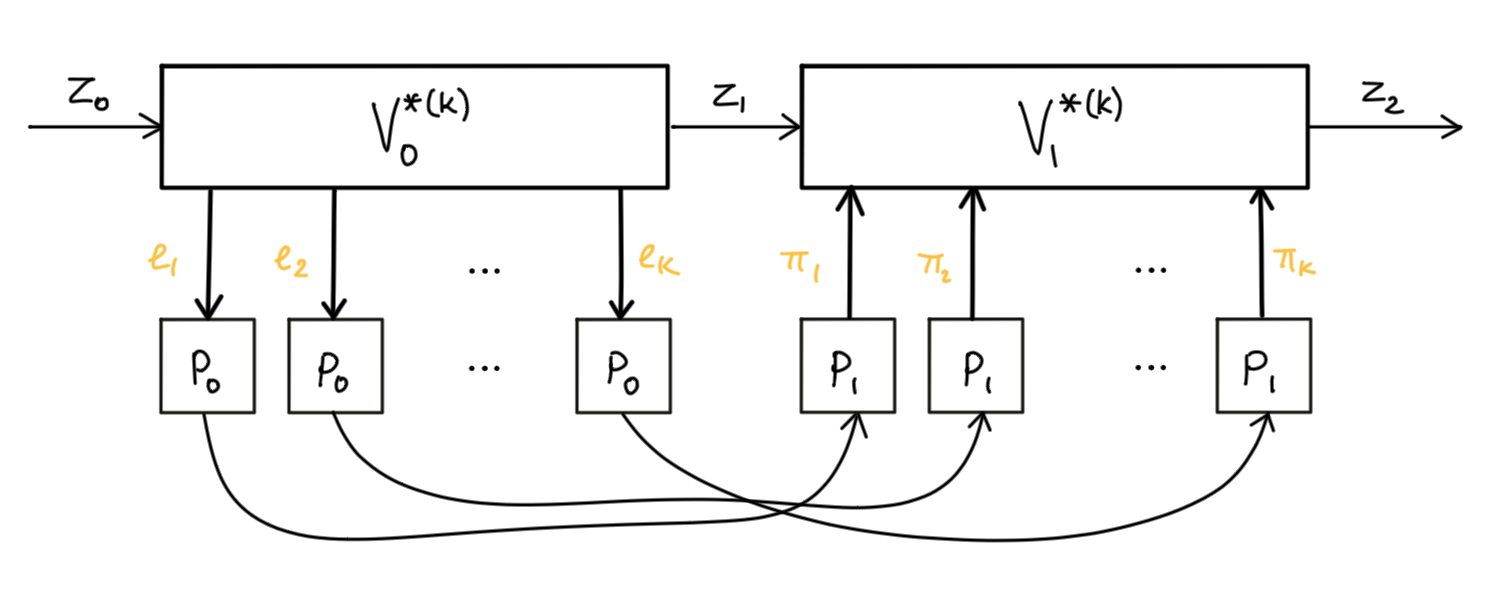}
\caption{The first stage of the cheating verifier, $V_0^{*(k)}$, takes in an auxiliary input $Z_0$ and outputs $k$ (quantum) setup messages $e_1, \ldots, e_k$, each of which goes into one copy of $P_0$. $V_0^{*(k)}$ then communicates with $V_1^{*(k)}$, the second stage of the cheating verifier, by sending an arbitrary quantum state $Z_1$. $V_1^{*(k)}$ takes proofs $\pi_1, \ldots, \pi_k$ from the $k$ copies of $P_1$, along with $Z_1$ from $V_0^{*(k)}$, and outputs an arbitrary quantum state $Z_2$.}
\label{fig:ampl-1}
\end{center}
\end{figure}

Notice that $Z_1$ is entirely independent of the behaviour of $P^{(k)}$. As such, in order to guarantee that $Z_2$ remains computationally indistinguishable from what it would be in a real execution of the protocol, we need only simulate the following: 

\begin{figure}[H]
\begin{center}
\includegraphics{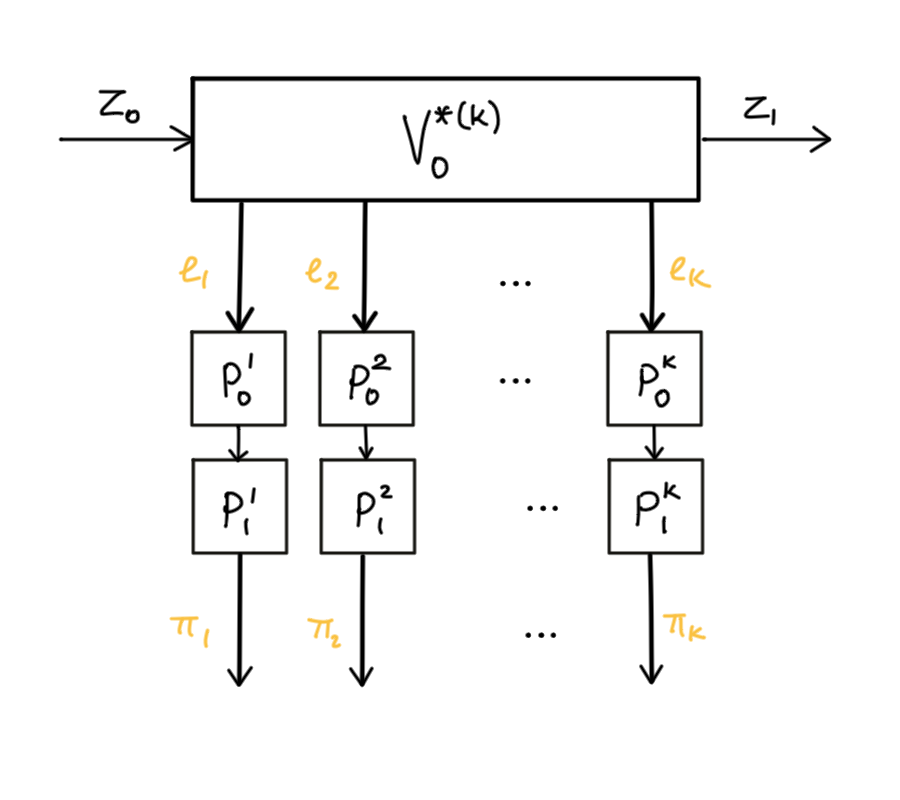}
\caption{If we can create a simulation which takes in $Z_0$ and outputs $\pi'_1, \ldots, \pi'_k$ which are indistinguishable from $\pi_1, \ldots, \pi_k$ that arise from a real execution of the protocol, then we can also simulate Figure \ref{fig:ampl-1}.}
\label{fig:ampl-2}
\end{center}
\end{figure}

That Figure \ref{fig:ampl-2} is efficiently simulable follows directly from the fact that $\Pi$ is zero-knowledge. The simulator $S_{\Pi^{(k)}}$ would run $V_0^{*(k)}$ on $Z_0$ to obtain $Z_1, e_1, \dots, e_k$, and then replace each of the $k$ pairs $(P_0^i, P_1^i), i \in \{1, \dots, k\}$ in the diagram above with a simulator $S^i$ (which is guaranteed to exist by the auxiliary-input zero-knowledge property of $\Pi$).

\begin{remark}
Note that it is in general not possible to extend this simple argument to a multi-round protocol. The intuitive reason is that, in a multi-round protocol, the parallel instances of the protocol may depend on each other, because the cheating verifier may introduce cross-correlations. More formally, the assumption we made that the internal communications of the cheating verifier do not depend on the behaviour of $P^{(k)}$ is false if the protocol is multi-round.
\end{remark}

\end{proof}

\section{Zero-knowledge property}

\label{sec:zk property}

\begin{lem}
Assume that LWE is intractable for quantum polynomial-time (QPT) adversaries. Let $L$ be a language in \QMA, let $x \in \{0,1\}^*$ be a problem instance, and let $V^* = \{V^*_{\mu n}\}$ be an arbitrary QPT verifier for the protocol of Section \ref{sec: protocol}. There exists a QPT simulator $S = \{S_{\mu n}\}$ such that, for any $\mu,n$ and yes-instance $x \in L$ with $|x|=n$, and for any auxiliary quantum input $Z_0$ to the verifier, the distribution of $V^*$'s final output after its interaction with the honest prover $P$ in the protocol is quantum computationally indistinguishable from $S$'s output distribution on auxiliary input $Z_0$.

Furthermore, the simulator $S$ only requires knowledge of the instance $x$ after the preprocessing phase has been executed (simulated) with $V^*$. As such, the zero-knowledge property holds in the adaptive setting.
\end{lem}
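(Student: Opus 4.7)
The plan is to build a QPT simulator $\mathsf{Sim}$ that plays the role of the prover toward $V^*$ but replaces the witness by a BJSW-style fake state, commits to authentication keys designed to pass the verifier's check, and replaces the homomorphically-computed NIZK by a simulated one. In the setup phase, $\mathsf{Sim}$ samples $\gamma$ together with the simulation trapdoor $\tau_\gamma$ using the straight-line NIZK simulator from Lemma~\ref{lem:nizk-np}, and samples $(pk_V,sk_V)$ and $(pk_P,sk_P)$ from $\mathsf{Com.gen}(1^\mu)$ honestly, retaining $sk_V$. It then runs $V^*$'s preprocessing stage, receives $(\sigma, pk_E, \alpha)$ and the qubits $V^*$ purports to be halves of EPR pairs, and uses $sk_V$ together with the extractability property of $\mathsf{Com}$ (Section~\ref{sec: commitment}) to recover the value $r$ committed in $\sigma$. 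No use of $x$ is made up to this point, establishing adaptivity. Once $x$ is revealed, $\mathsf{Sim}$ prepares a fake encoded state $\tilde{\rho}_{\mathsf{sim}}$ tailored to $r$ along the lines of the BJSW simulator~\cite{broadbent2016zero}: for $r\in[m]$ any state orthogonal to $\ket{0^k}$ in the $C_r$ basis on the $k$ relevant qubits (and $\ket{0}$ elsewhere), and for $r = m+1$ an encoding of $\ket{0}_{\textsf{clock}}\otimes\ket{x}_{\textsf{instance}}\otimes\ket{0}_{\textsf{witness}}$, both under freshly sampled authentication keys $(t,\pi,a,b)_{\mathsf{fake}}$. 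It then commits to $(t,\pi,a,b)_{\mathsf{fake}}$ to form $\sigma_{\textsf{keys}}$, teleports $\tilde{\rho}_{\mathsf{sim}}$ through the qubits received from $V^*$ to obtain $d$, and homomorphically evaluates under $pk_E$ a variant $C'$ of the circuit of Figure~\ref{fig: instance-dependent step} in which the call to $\mathsf{NIZK}.P$ is replaced by the straight-line NIZK simulator with hard-coded $\tau_\gamma$. Finally it refreshes the resulting ciphertext and sends $(d,\sigma_{\textsf{keys}},\tilde{\pi}')$ to $V^*$.

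\textbf{Indistinguishability via hybrids.} Starting from the real interaction, we argue indistinguishability of $V^*$'s final output through the following hybrids. (i) Using circuit privacy of $\mathsf{FHE}$ (Section~\ref{sec:hom-enc}), replace $\tilde{\pi}'$ by a fresh $\mathsf{Enc}^*$ of the plaintext output of the evaluation; this change is statistically undetectable. (ii) Replace the honest NIZK proof that constitutes that plaintext by one produced by the Lemma~\ref{lem:nizk-np} simulator, using adaptive ZK of the NIZK. Crucially the statement is still true at this point, since $\sigma_{\textsf{keys}}$ opens to the honest keys $(t,\pi,a,b)$ and $Q(t,\pi,a,b,r,z,d)=1$ with overwhelming probability on an honestly encoded witness. (iii) Jointly replace the honestly encoded witness by $\tilde{\rho}_{\mathsf{sim}}$ and the commitment by a commitment to $(t,\pi,a,b)_{\mathsf{fake}}$; indistinguishability reduces to the authentication-code analysis of~\cite[Section~4]{broadbent2016zero} combined with computational hiding of $\mathsf{Com}$, and the NIZK statement remains true because the fake keys were designed to pass $Q$. (iv) Replace the inefficient procedure used in the preceding hybrids to obtain $r$ by extraction from $\sigma$ via $sk_V$; extractability of $\mathsf{Com}$ makes this step perfectly indistinguishable. (v) Reverse (i) via circuit privacy to match the refreshed ciphertext that $\mathsf{Sim}$ actually produces. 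The composition yields the distribution of $V^*$'s output under $\mathsf{Sim}$.

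\textbf{Main obstacle.} The chief difficulty is step (iii): BJSW's analysis applies directly only when the prover hands the encoded state to the verifier, whereas in our setting the state reaches $V^*$ through teleportation over qubits that $V^*$ supplied and may have arbitrarily tampered with during preprocessing. This is resolved by the same observation used in the soundness proof (hybrids $H_3\approx H_4$ of Section~\ref{sec:hybrids}): for any state $\xi$ on $V^*$'s side, running ``teleport $\tilde{\rho}$ through the other half of $\xi$ and report $d = (d_X,d_Z)$'' induces exactly the same distribution on $(d, V^*\text{'s post-teleportation qubits})$ as ``place $X^{d_X} Z^{d_Z} \tilde{\rho} Z^{d_Z} X^{d_X}$ on the corresponding qubits and report a uniformly random $d$''. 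Once this identification is made, BJSW's authentication-code argument transfers verbatim, and the $r=m+1$ case is in fact easier because $\tilde{\rho}_{\mathsf{sim}}$ is then a genuine low-energy state for the $H_{\textsf{in}}$ component rather than a cheating construction. A secondary subtlety is invoking the NIZK simulator inside a homomorphic evaluation over a potentially malformed $pk_E$: this is permissible because $\tau_\gamma$ is a classical string that can be hard-coded into $C'$, and circuit privacy of $\mathsf{FHE}$ ensures that the refreshed ciphertext leaks nothing about the evaluated circuit beyond the simulated NIZK proof itself, which is already covered by the NIZK simulator's guarantees.
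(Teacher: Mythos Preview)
Your overall architecture---extracting $r$ via $sk_V$, replacing the NIZK by its straight-line simulator under the cover of circuit privacy, replacing the commitment by hiding, and swapping the witness for a BJSW-style fake state---matches the paper's. The gap is in your resolution of the ``main obstacle.'' The equivalence you invoke is false: teleporting $\tilde{\rho}$ through the prover's half of an \emph{arbitrary} state $\xi$ supplied by a malicious verifier does \emph{not} induce the same joint distribution on $(d,\text{verifier's qubits})$ as placing $X^{d_X}Z^{d_Z}\tilde{\rho}Z^{d_Z}X^{d_X}$ there with uniform $d$. A one-qubit counterexample: if $\xi_{PV}=\proj{00}$ is a product state, the verifier's register is $\proj{0}$ after the Bell measurement regardless of $\tilde{\rho}$ and $d$, whereas your claimed alternative would put (a Pauli-conjugated) $\tilde{\rho}$ there. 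The soundness step $H_3\approx H_4$ you cite goes the other direction and relies on the party creating the EPR pairs being \emph{honest}; it says nothing about a verifier who hands the prover junk.

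What the paper actually does in step (iii) is different and is where the real work lies. It absorbs the teleportation target and all of $V^*$'s pre-processing actions into a single adversarial box $V'$ that sits between the honest encoding and the predicate, then observes that $V'$ is sandwiched by a uniformly random Pauli on each side: the $X^aZ^b$ from the authentication code's one-time pad on the left, and the $X^e$ correction (computed from $a,b,d,C_r$) that is stripped off inside the predicate $Q$ on the right. This is exactly a Pauli twirl, which collapses $V'$ to a convex combination of fixed Pauli (equivalently XOR) attacks on the classical measurement outcomes. Only \emph{after} this reduction does the authentication-code analysis of~\cite{broadbent2016zero} apply verbatim. Your hybrid (iii) needs this twirl argument in place of the teleportation equivalence; without it, the step does not go through. (A minor related point: your justification of hybrid (ii)---``$Q=1$ with overwhelming probability''---is also off for a malicious $V^*$; the correct statement is that \emph{conditioned} on $Q=1$ the NP statement is true, and when $Q=0$ both real and simulated branches output~$\perp$.)
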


We show that our protocol is (adaptively) zero-knowledge by replacing the components of the honest prover with components of a simulator one at a time, and demonstrating that, when the input is a yes-instance, the dishonest verifier's output after each replacement is made is at the least computationally indistinguishable from its output before. The end result, after all replacements have been made, is an efficient simulation which does not require access to the witness $\rho$. (Our simulator is what is known as a \emph{straight-line} simulator, meaning that the simulator never needs to rewind the dishonest verifier.) The proof follows an outline that is similar to that of the proof of zero-knowledge in \cite{broadbent2016zero}. We emphasise those aspects of our proof which do not appear in the proof of zero-knowledge in \cite{broadbent2016zero}, and refer the reader to \cite[Section 5]{broadbent2016zero} for a more detailed exposition of the steps that the two proofs share.

\label{sec: zk property}
\subsection{The original protocol}

\begin{figure}[H]
\includegraphics[width=\textwidth]{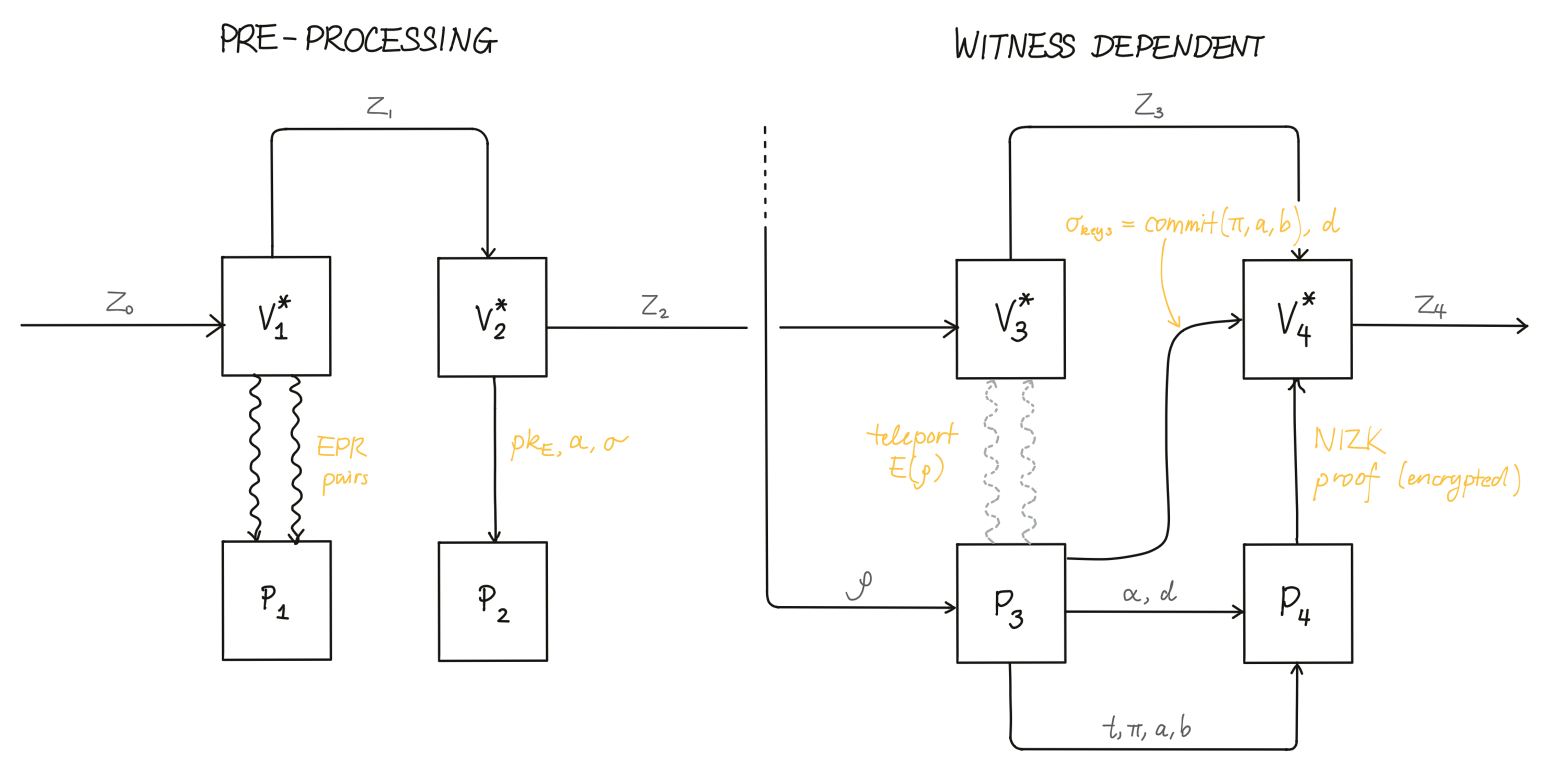}
\caption{Diagram representing the original protocol execution between the honest prover $P$ and a cheating verifier $V^*$. For visual clarity, the prover and the (cheating) verifier have been split into parts $\{P_i\}$ and $\{V^*_i\}$ with $i \in \{1, 2, 3, 4\}$, respectively, where parts 1 and 2 execute the preprocessing phase of the protocol, and parts 3 and 4 execute the instance-dependent phase of the protocol. Communications between verifier and prover are labelled in orange; internal communications on either side are labelled in grey. In the two subsequent diagrams, we will omit the auxiliary input $Z_0$ that the cheating verifier receives, as well as the internal communications $Z_1, Z_2, Z_3$ between the different parts of the cheating verifier. (Note, however, that the auxiliary input $Z_0$ and these internal communications make a reappearance in Figure \ref{fig:simplification-1}.) In the two subsequent diagrams, we will highlight replacements in green.}

\label{fig:original}

\end{figure}

\subsection{Replacing the NP proof and the commitment}

\begin{figure}[H]
\includegraphics[width=\textwidth]{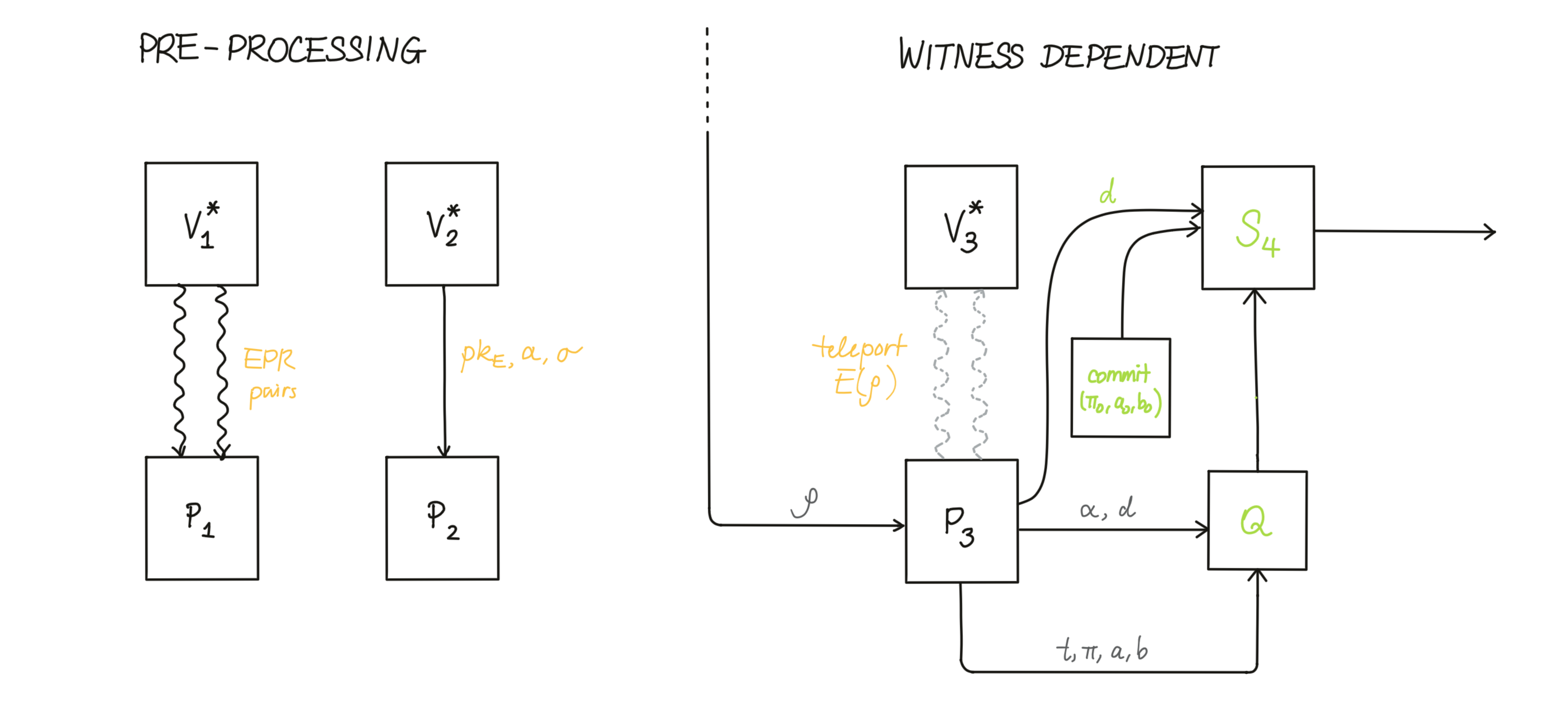}
\caption{Figure \ref{fig:original} after the NIZK NP proof has been replaced by a simulation, and the commitment $z$ has been replaced by a commitment to a fixed string.}

\label{fig:replacement-1}

\end{figure}

\label{sec:s_4'}

We replace the interaction between $P_4$ and $V_4^*$ with the entities $Q$ and $S_4$. (Recall that $P_4$ is the part of the prover which executes steps 4 and 5 in Figure \ref{fig: instance-dependent step}, and that $V_4^*$ is the part of the verifier which is meant to perform the `verifier's check' described in the same figure.) $Q$ is an entity which evaluates the predicate $Q$ (see Definition \ref{def:new-Q} for the definition of the predicate $Q$) homomorphically, and outputs an encryption of the result, which is either 0 or 1. $S_4$ takes in $Q$'s output, and homomorphically does one of two things. If the output of $Q$ is (homomorphically) 0, $S_4$ runs $V_4^*$ on an encryption of the abort symbol ``$\perp$" (as well as the other inputs which $V_4^*$ expects, namely $d, \sigma_{keys}$, and $Z_3$). If the output of $Q$ is homomorphically 1, then $S_4$ homomorphically runs the simulation $Sim_{\mu n}$ from Lemma \ref{lem:nizk-np} that is guaranteed to exist by the zero-knowledge property of the NIZK NP proof system that $V_4^*$ and $P_4$ use. More precisely: in the (incompletely simulated) version of the interaction in which $P_4$ and $V_4^*$ are replaced by $S_4$, the following occurs.

\begin{enumerate}
\item The first stage \footnote{We refer to the stage of $Sim$ which produces a CRS as the `first stage', and the stage which produces a simulated proof $\pi'$ as the `second stage'.} of $Sim$ is run in order to obtain a CRS $\gamma_P$ for the prover's NIZK NP proof. The rest of the CRS (i.e. $\gamma_V, pk_V, pk_P$) is chosen as before.
\item $V_1^*$, $P_1$, $V_2^*$, $P_2$, $V_3^*$ and $P_3$ are executed normally.
\item $S_4$ is executed (instead of $V_4^*$ and $P_4$). $S_4$ homomorphically evaluates a circuit $C'$ which, firstly, computes $Q$, and secondly does one of the following, depending on whether the output of $Q$ is 0 or 1:
\begin{enumerate}
\item If $Q = 0$, $C'$ outputs $\perp$.
\item If $Q = 1$, $C'$ runs the second stage of $Sim$. Because the NP proof system is non-interactive, the second stage of $Sim$ (consisting of steps 2 and 3 from Lemma \ref{lem:nizk-np}) is particularly simple: the only reason $Sim$ requires access to $V^*$ is to run it forward, once, in order to produce an instance $x$, from which (and no other input) it can produce a simulated proof $\pi$. In this case, $S_4$ wishes to use $Sim$ to produce a simulated proof for the NP statement `There exists a commitment string $s_P$ and an encoding key $(t, \pi, a, b)$ such that $\sigma_{keys} = \textsf{Com.commit}(pk_P, (\pi, a, b), s_P)$ and $Q(t, \pi, a, b, r, z, d) = 1$.' Toward this end, we let $x$ be some bitstring representation of this NP statement. From $P_3$, $S_4$ receives $(d, \sigma_{keys})$, which together define $x$. $S_4$ can then run $Sim$ homomorphically (using the ability which \textsf{FHE.Eval} has to evaluate \emph{classical} circuits) on instance $x$, and obtain from it an encryption of a bogus (simulated) proof $\pi'$.
\end{enumerate}

We shall refer to the ciphertext which $S_4$ obtains from either (a) or (b) above as $c$.

\item $S_4$ then runs $V_4^*$ with $\mathsf{FHE.Refresh}(pk_E, c)$ as input (as well as the other inputs which $V_4^*$ expects, namely $d, \sigma_{keys}$, and $Z_3$). (See section \ref{sec:hom-enc} for a definition of $\mathsf{FHE.Refresh}$ and its properties.)
\end{enumerate}

We argue that $V_4^*$'s output after this replacement occurs is computationally indistinguishable from its output before.

In the original protocol, $P_4$ homomorphically evaluates the predicate $Q$ to check that it is satisfied, and if it is, $P_4$ provides an encryption of a zero-knowledge proof to $V_4^*$ of the NP statement `There exists a commitment string $s_P$ and an encoding key $(t, \pi, a, b)$ such that $\sigma_{keys} = \textsf{Com.commit}(pk, (\pi, a, b), s_P)$ and $Q(t, \pi, a, b, r, z, d) = 1$.' Let $S$ be the predicate that is 1 if and only if the NP statement `There exists a commitment string $s_P$...' is true. Then, because $P_4$ is honest, $S$ is satisfied whenever $Q$ is satisfied; and, by definition, $Q$ is satisfied whenever $S$ is satisfied. Therefore, if the simulation $S_4$ aborts if and only if $Q$ is not satisfied, $S_4$ will abort exactly when $P_4$ would have aborted.

When $Q$ is satisfied, meanwhile, we can argue via a series of hybrids that the verifier $V_4^*$'s output is computationally indistinguishable from its output in the original protocol. Let $C$ be the circuit described in step 4 of Figure \ref{fig: instance-dependent step}, which evaluates $Q$ and outputs a real proof $\pi$ if $Q$ is 1; and let $C'$ be the circuit evaluated homomorphically by $S_4$, which evaluates $Q$ and outputs a simulated proof $\pi'$ if $Q$ is 1. Recall that $\alpha$ is the ciphertext which $V_2^*$ provides to $P_2$. It is this ciphertext $\alpha$ on which $P_4$ homomorphically evaluates $C$, and on which $S_4$ homomorphically evaluates $C'$.

\begin{enumerate}
\item In the real protocol, $V_4^*$ receives $\textsf{FHE.Refresh}(pk_E, \textsf{FHE.Eval}(pk_E, C, \alpha))$ from the prover $P_4$. (It also receives its other inputs $d, \sigma_{keys}$ and $Z_3$ from various sources. Since we will not alter these other inputs, we omit them in descriptions of subsequent hybrids.) We call this original experiment $H_0$.
\item $H_1$: $V_4^*$ receives $\textsf{Enc}^*\big(pk_E, C\big(\textsf{Dec}^*(pk_E, \alpha)\big)\big)$. (See `Circuit privacy' under section \ref{sec:hom-enc} for a definition of $\textsf{Enc}^*$.) By the malicious circuit privacy property of \textsf{FHE}, $H_0$ and $H_1$ are statistically indistinguishable, and so $V_4^*$'s output distribution differs only negligibly between the two hybrids.
\item $H_2$: $V_4^*$ receives
\begin{equation*}
\textsf{FHE.Refresh}\Big( \textsf{FHE.Eval}\big(pk_E, I, \: C(\textsf{Dec}^*(pk_E, \alpha)) \: \big) \Big)
\end{equation*}
where $I$ is the identity circuit. $H_2$ and $H_1$ are again indistinguishable by malicious circuit privacy.
\item $H_3$: $V_4^*$ receives
\begin{equation*}
\textsf{FHE.Refresh}\Big( \textsf{FHE.Eval}\big(pk_E, I, \: C'(\textsf{Dec}^*(pk_E, \alpha)) \: \big) \Big).
\end{equation*}
(The only difference between $H_3$ and $H_2$ is that $C$ has been replaced with $C'$.)

$H_3$ and $H_2$ are computationally indistinguishable from $V_4^*$'s point of view, because $C'(\textsf{Dec}^*(pk_E, \alpha))$ is computationally indistinguishable from $C(\textsf{Dec}^*(pk_E, \alpha))$ (this follows from the fact that $S$ is satisfied because we have by hypothesis in this section that $Q$ is satisfied, and that $\pi' = Sim(x)$ is computationally indistinguishable from a real proof $\pi$ for any yes-instance $x=\braket{S}$ of the verifier's choice, by the adaptive zero-knowledge property of Lemma \ref{lem:nizk-np}), and \textsf{FHE.Refresh} and \textsf{FHE.Eval} are both PPT algorithms.
\item $H_4$: $\textsf{Enc}^*\big(pk_E, C'\big(\textsf{Dec}^*(pk_E, \alpha)\big)\big)$. $H_4$ and $H_3$ are statistically indistinguishable because of malicious circuit privacy.
\item $H_5$: $\textsf{FHE.Refresh}(pk_E, \textsf{FHE.Eval}(pk_E, C', \alpha))$. $H_5$ and $H_4$ are again statistically indistinguishable by malicious circuit privacy. Note that $H_5$ is exactly the experiment that $S_4$ executes.
\end{enumerate}

We have thus argued that the interaction between $P_4$ and $V_4^*$ in Figure \ref{fig:original} can be replaced with $S_4$. Having made this first replacement (i.e. turning $P_4$ and $V_4^*$ into $Q$ and $S_4$), we can then replace the commitment $\sigma_{keys}$ with a commitment to a fixed encoding key $(\pi_0, a_0, b_0)$ that is independent of the problem instance. The reasoning that justifies this replacement is the same as in \cite[Section 5, step 3]{broadbent2016zero}: because the commitment is computationally hiding, and because it is never opened after the NP proof has been replaced by a simulation, the real commitment $\sigma_{keys}$ is computationally indistinguishable from $\sigma_{keys}' = \textsf{Com.commit}(pk_P, (\pi_0, a_0, b_0), s_P)$ to all of the entities in the diagram above, given that all of said entities are quantum polynomial-time. As such, after both replacements have been made, the output of $S_4$ is computationally indistinguishable from the output of $V_4^*$ in the original protocol.

\subsection{Replacing $\rho$ with $\rho_r$}
\label{section:replacing-rho}

We now claim that, if we replace the real witness $\rho$ with a simulated witness $\rho_r$ in Figure \ref{fig:replacement-1}, $S_4$'s output is computationally indistinguishable from what it was in Figure \ref{fig:replacement-1}. (This claim holds regardless of the verifier's choice of $r$.) The simulated witness $\rho_r$ is a state that, given $r$, can be efficiently constructed.

For the remainder of section \ref{section:replacing-rho}, we assume that $r$ is chosen and published by the verifier before the witness-dependent part of the protocol begins, so that $r$ is fixed and known to all entities which make use of it (such as $Q$). Although this is, in the real world, untrue---in the real protocol, the verifier only sends a \emph{commitment} to $r$, which hides the true value of $r$ from the prover---we will show in section \ref{section:feasibility} that the simulator can \emph{extract} $r$ from the verifier's perfectly binding commitment, so that the effect is the same (for the simulator, not the real prover) as if the verifier had published $r$.

\subsubsection{Pauli-twirling the verifier}
Recall that $P_3$ is the part of the prover which receives the witness $\rho$, applies an encoding to $\rho$, and then teleports $E(\rho)$ to the verifier. In order to facilitate the analysis, we firstly separate the action of $P_3$ from Figure \ref{fig:replacement-1} into two parts: we make applying the encoding $E$ the responsibility of a separate entity labelled $E$, and delegate the remainder of $P_3$'s actions to a new entity $P_3'$. Since $P_3$ was honest, this replacement results in no change in the output of $S_4$.

\begin{figure}[H]
\includegraphics[width=\textwidth]{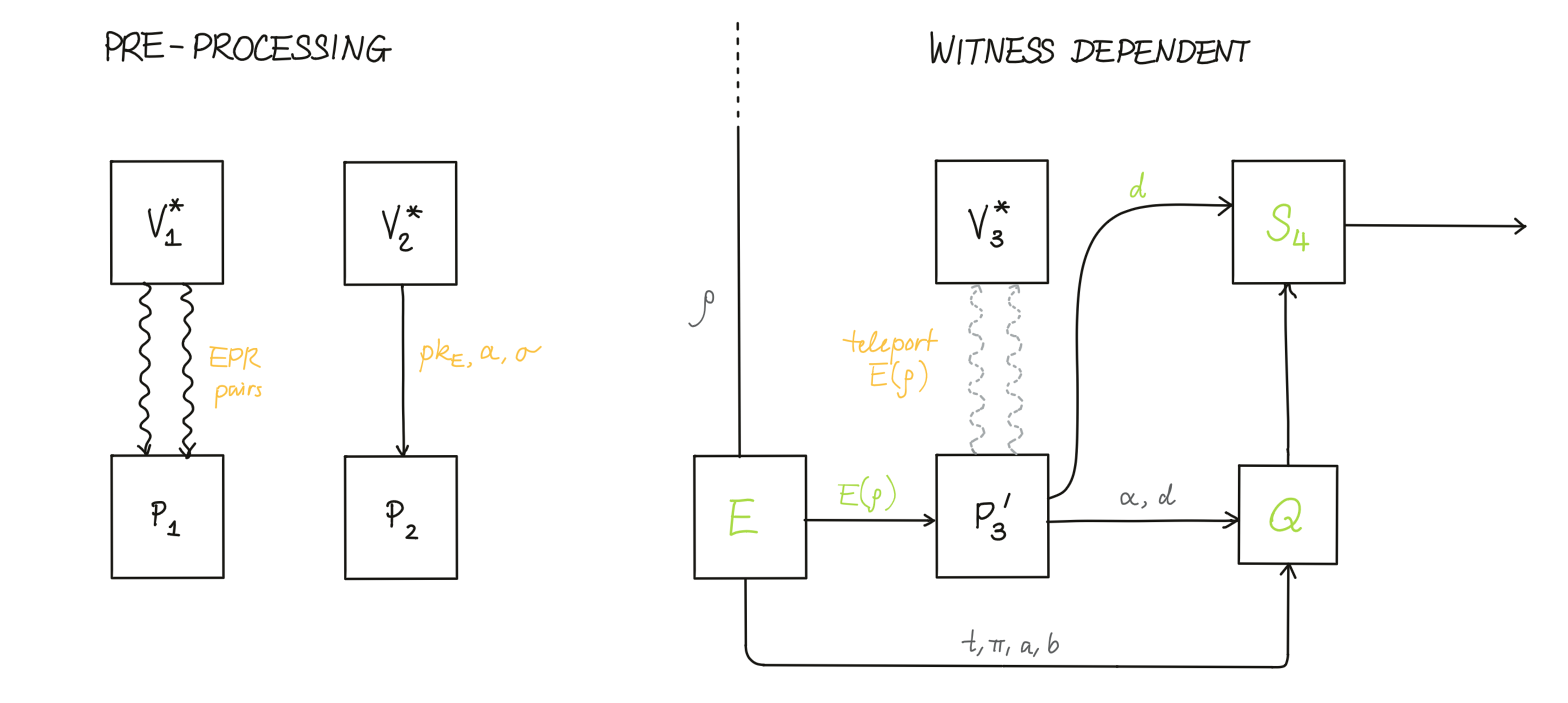}
\caption{$E$ represents a procedure that samples an encoding key from the same distribution from which $P_3$ would have sampled its encoding key in Figure \ref{fig:replacement-1}, applies the encoding represented by that key to $\rho$, and sends the encoded $\rho$ to $P_3$ and the key it chose to $Q$. Because $P_3$ was honest, the output distribution of $S_4$ in this diagram is exactly the same as the output distribution of $S_4$ in Figure \ref{fig:replacement-1}.}

\label{fig:replacement-2}

\end{figure}

Now, we consider $(V^*_1, P_1), (V^*_2, P_2), (V^*_3, P_3')$ taken together to be a cheating verifier $V'$:
\begin{figure}[H]
\includegraphics[width=\textwidth]{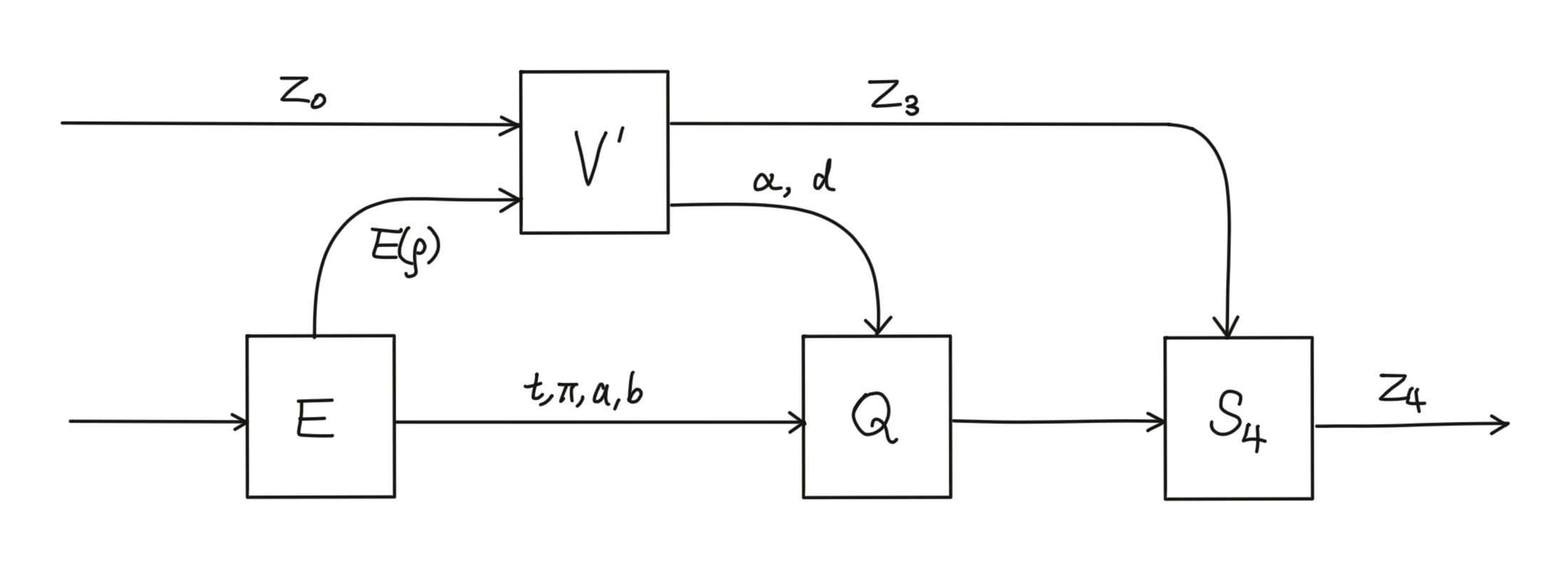}
\caption{$V'$ receives the encoded witness, $E(\rho)$, along with auxiliary information $Z_0$. It sends some arbitrary auxiliary $Z_3$ to $S_4$, and also sends $d = (x,y)$ and the ciphertext $\alpha$ (which, in the honest verifier's case, contains the homomorphic encryptions of $(r, s_V, z)$) to $Q$. $Q$ receives its other inputs from $E$.}

\label{fig:simplification-1}

\end{figure}

\begin{remark}
Although the string $d$ in Figure \ref{fig:simplification-1} now appears to be originating from what we have called the cheating verifier $V'$, we know that it is honestly produced, because we know that $P_3'$ is honest. We include $P_3'$ as part of the cheating verifier because we would like to consider the ciphertext $\alpha$ a potentially dishonest input to $Q$.
\end{remark}

We will now make a series of simplifications to Figure \ref{fig:simplification-1} such that, after each simplification, the distribution of $Z_4$ is statistically indistinguishable from what it was before the simplification. Our first step is to visually reorganise a number of the features of Figure \ref{fig:simplification-1}.
\begin{enumerate}
\item We split the process of applying the encoding $E$ into two stages. The first stage, which we call $E_1$, comprises steps (a) and (b) in Figure \ref{fig: authentication code}. The second stage is step (c) from Figure \ref{fig: authentication code}, which consists of the application of the Pauli one-time-pad $X^a Z^b$.
\item We explicitly show the application of $X^a Z^b$, as well as the application of the additional one-time-pad which arises from the prover's teleportation measurements that resulted in $d$. (This latter is $X^{a'}Z^{b'}$, using the $a', b'$ notation from Definition \ref{def:new-Q}.) These two one-time-pads taken together result in a total one-time-pad of $X^{a \oplus a'} Z^{b \oplus b'}$ which is applied to the state $E_1(\rho)$.
\item We let $e, f$ be the unique strings such that $X^{a \oplus a'} Z^{b \oplus b'} = C_r X^e Z^f C_r^*$, and we rewrite $X^{a \oplus a'} Z^{b \oplus b'}$ as $C_r X^e Z^f C_r^*$. Recall that evaluating $Q$ on $(t, \pi, a, b, r, z, d)$ is equivalent to evaluating $R_r(t, \pi, z \oplus e_{i_1} \dots e_{i_k})$. We make this explicit by rewriting $Q$ as $X^e$ followed by $R_r$.
\item We separate $S_4$ into two entities, one of which homomorphically evaluates the circuit $C'$ which is defined in section \ref{sec:s_4'} (we call this entity $S_4'$ to avoid confusion with the Cliffords $C_r$), and the other of which evaluates \textsf{FHE.Refresh} on the ciphertext which results from homomorphically evaluating $C'$.
\end{enumerate}

Redrawing Figure \ref{fig:simplification-1} according to steps 1, 2, 3 and 4 above (and omitting the encoding keys $(t, \pi, a, b)$ that the honest $E$ sends to $Q$) results in Figure \ref{fig:simplification-2}:

\begin{figure}[H]
\includegraphics[width=\textwidth]{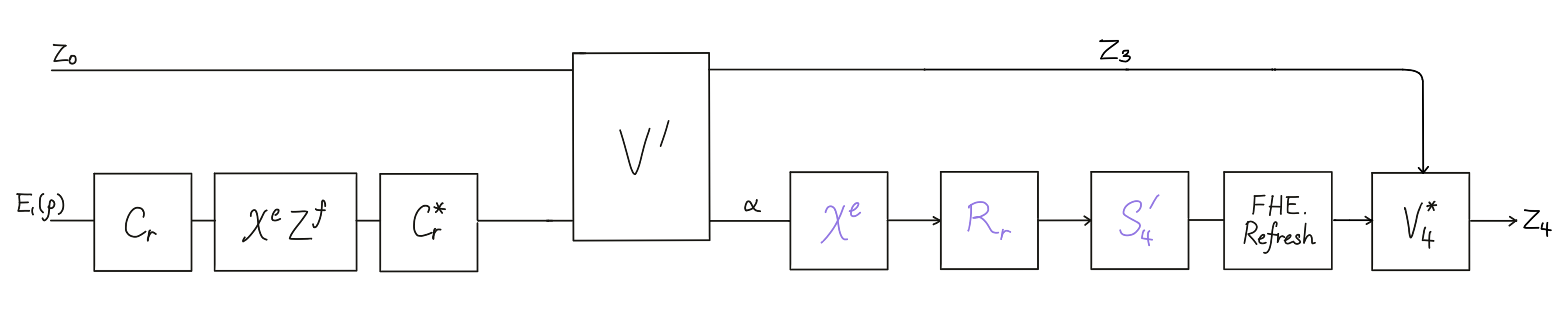}
\caption{Operations shown in purple are operations which are applied \emph{homomorphically} to an encrypted input (in this case, the encrypted input is $\alpha$).}

\label{fig:simplification-2}

\end{figure}

Using the malicious circuit privacy property of \textsf{FHE} which is described in section \ref{sec:hom-enc}, the output of $V^*_4$ in Figure \ref{fig:simp-3} is statistically indistinguishable from its output in Figure \ref{fig:simplification-2}:

\begin{figure}[H]
\includegraphics[width=\textwidth]{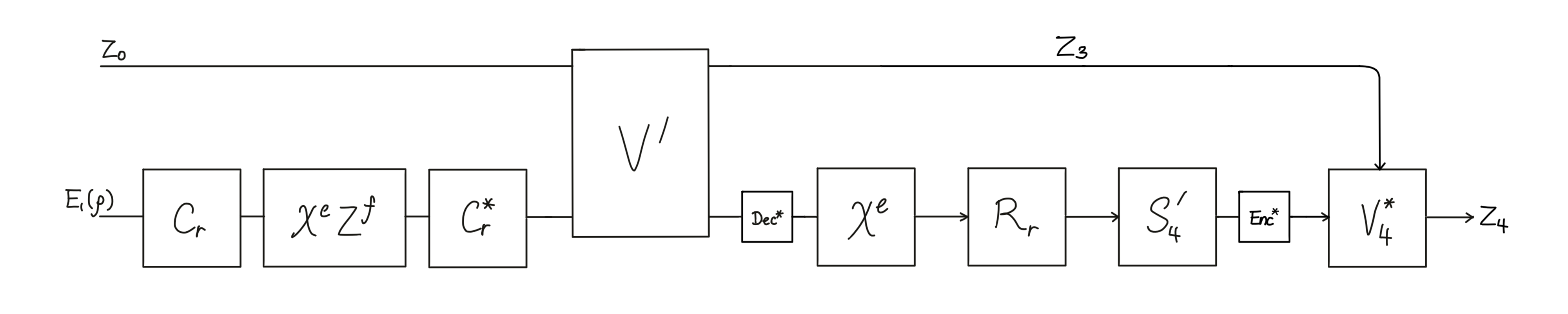}
\caption{Note that the applications of $X^e$, $R_r$ and $S_4'$ are no longer homomorphic.}

\label{fig:simp-3}

\end{figure}

Following \cite{broadbent2016zero}, we merge the operations $C_r^*$ and $\textsf{Dec}^*$ into $V'$, and apply a Pauli twirl in order to limit the action of $V'$ to an XOR attack which does not depend on the state in the lower wire.

\begin{remark}

The reader may observe that there is an $X^e Z^f$ on the left-hand-side of $V'$ and only an $X^e$ on the right-hand-side. We can add a $Z^f$ to the right-hand-side without consequence, since the state passing out of $V'$ on the lower wire is treated as a classical string by all entities which follow it. Adding a $Z^f$ allows us to perform a Pauli twirl on $V'$.

\end{remark}

\begin{figure}[H]
\includegraphics[width=\textwidth]{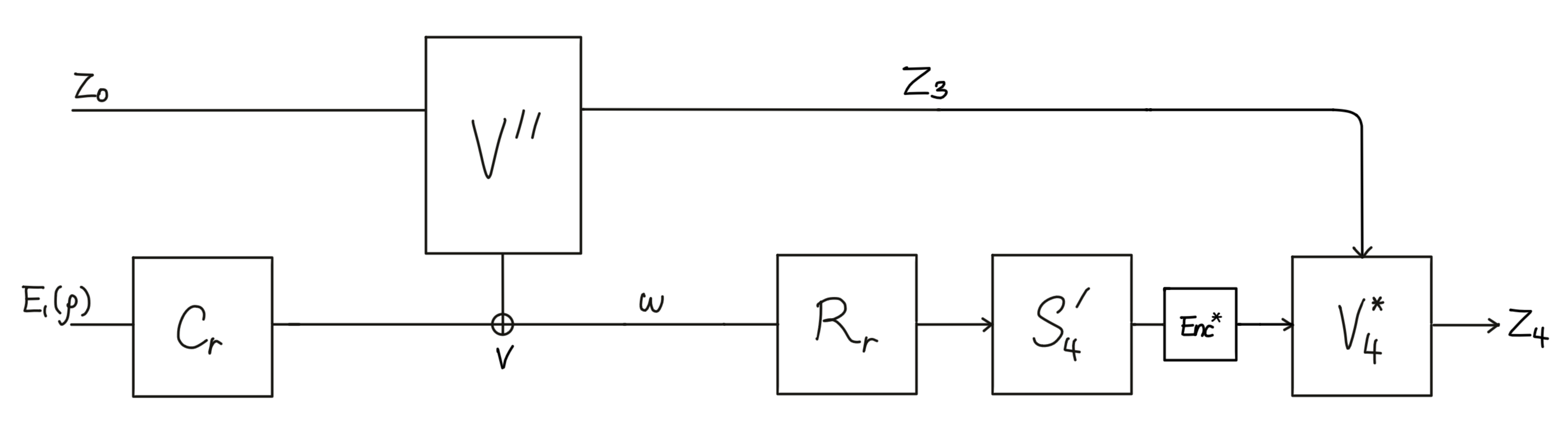}
\caption{Using the Pauli twirl, we have replaced the cheating verifier $V'$ with the box $V''$, shown on the left. $V''$ acts only by XOR attack on the lower wire.}

\label{fig:simp-4}

\end{figure}

We began section \ref{section:replacing-rho} with the stated intention of showing that $V_4^*$'s output in Figure \ref{fig:replacement-1} when $\rho$ is provided to the prover is statistically indistinguishable from its output when $\rho_r$ is provided to the prover, where $\rho_r$ is a state designed particularly to pass the challenge determined by the verifier's choice of $r$ with probability 1. Since Figure \ref{fig:simp-4} is statistically indistinguishable from (a subset of) Figure \ref{fig:simplification-1}, which is in turn equivalent to Figure \ref{fig:replacement-1}, it suffices to show that the output of $\mathsf{Enc}^*$ in Figure \ref{fig:simp-4} when $E_1(\rho_r)$ is provided as input on the lower wire is (statistically) indistinguishable from its output when $E_1(\rho)$ is provided as input. This statement is implied by the statement that the output of $R_r$ in Figure \ref{fig:simp-4} when $E_1(\rho_r)$ is provided as input is (statistically) indistinguishable from its output when $E_1(\rho)$ is provided as input. If we can show the latter, then we can substitute Figure \ref{fig:simp-4} back into Figure \ref{fig:simplification-1}, and conclude that---by the statistical indistinguishability which the preceding argument has proven---the final output $Z_4$ in Figure \ref{fig:simplification-1} when $\rho_r$ is used is statistically indistinguishable from its output when the real witness $\rho$ is used.

As such, it suffices to show that the output of $R_r$ in Figure \ref{fig:bjsw-11} when $\rho_r$ is given as input is (statistically) indistinguishable from its output when $\rho$ is given as input.

\begin{figure}[H]
\includegraphics[width=\textwidth]{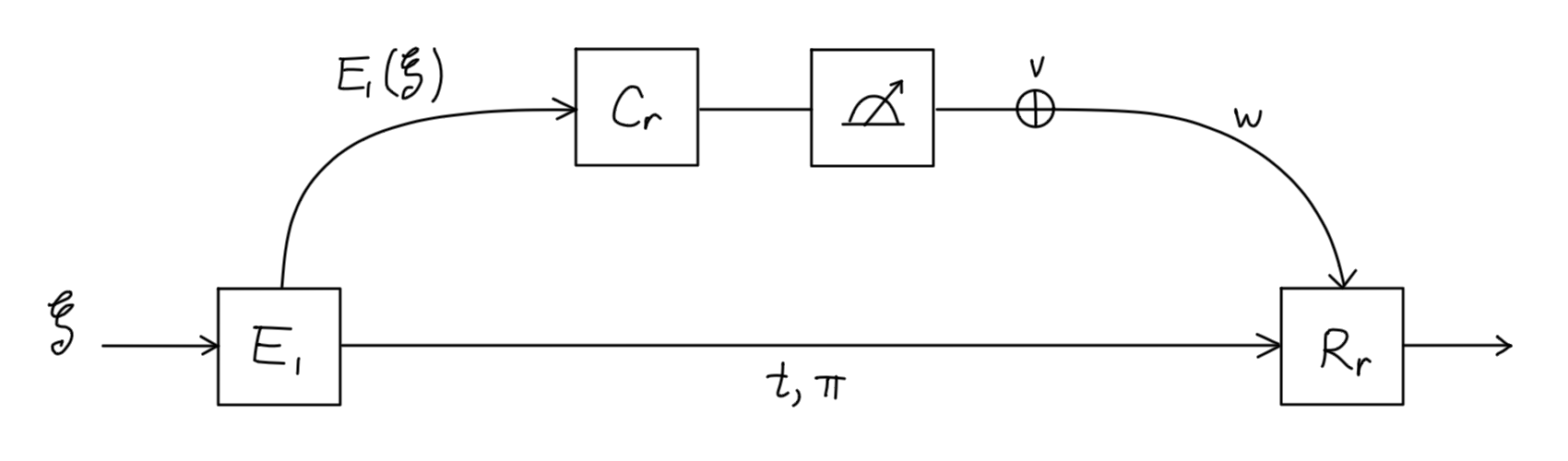}
\caption{It is sufficient to argue that the output of $R_r$ when $\xi = \rho_r$ and when $\xi = \rho$ are statistically indistinguishable. Note that the upper wire in this diagram is the \emph{lower} wire in Figure \ref{fig:simp-4}.}
\label{fig:bjsw-11}
\end{figure}

\subsubsection{Reducing to \cite{broadbent2016zero}}

Figure \ref{fig:bjsw-11} is identical to Figure 11 from \cite{broadbent2016zero}. As such, we can apply step 6 of their zero-knowledge proof, almost in a black-box fashion, in order to argue that the output of $R_r$ when $\xi = \rho_r$ is statistically indistinguishable from its output when $\xi = \rho$.

One difference between their situation and ours of which we must be mindful is that our Hamiltonian (effectively) includes one extra term designed to check the registers of the witness state in which the prover is supposed to put a classical description of the instance. When any other term in our Hamiltonian is selected according to $r$, our predicate $R_r$ can be defined as $R_r(t, \pi, u) = \tilde{R}_r(t, \pi, u)$, where $\tilde{R}_r$ is the predicate known as $R_r$ in \cite{broadbent2016zero}. (Note that, in Figure \ref{fig:bjsw-11}, the third argument of $R_r$ is called $w$. This is to distinguish the potentially dishonest $w$ from the honest measurement outcomes $u$.) As such, in these cases, the argument in step 6 of \cite[Section 5]{broadbent2016zero} applies verbatim. In the case when the special instance-check term is selected, the translation between our $R_r$ and theirs is not so direct, but the substance of their argument is not affected. It is still true that, in order to change the \emph{logical} meaning of $u$, the verifier must apply an XOR attack of large Hamming weight, with the result that the probability its XOR attack goes undetected by the traps $t$ becomes negligibly small. As such, even though the specific form of the correctness check condition (condition (a) in Definition \ref{def:bjsw-R_r} and condition 1. in Definition \ref{def:new-R_r}) is different when the instance-check term from our Hamiltonian is selected, the argument made in step 6 of \cite{broadbent2016zero} still holds.

It follows from the argument in \cite{broadbent2016zero}, then, that when $\rho$ in Figure \ref{fig:replacement-2} is replaced by $\rho_r$, where $\rho_r$ is a state that passes the particular challenge determined by $r$ with probability 1, the output of $S_4$ is computationally indistinguishable from what it was before the replacement occurred. $\rho_r$ in our case can be constructed efficiently, just as it can be in the \cite{broadbent2016zero} protocol: for those $r$s that do not correspond with the instance check term, $\rho_r$ is the same that it would be in the \cite{broadbent2016zero} protocol, and for those $r$s that do correspond with the instance check term, a state which had the correct instance description in the relevant registers and the state $\ket{0}$ on all other qubits would pass with probability 1.

\subsection{Feasibility of the $\rho \rightarrow \rho_r$ replacement}

\label{section:feasibility}

It remains only to argue that, as we claimed at the start of section \ref{section:replacing-rho}, the simulator can efficiently extract $r$ from the verifier's commitment to $r$. (We assume the latter to be perfectly binding, so that, conditioned on the prover accepting when it checks the verifier's commitment homomorphically, $r$ is well-defined.) The simulator does this by taking advantage of the CRS. The verifier expects as input a CRS which contains in it the public key $pk_V$ that it is to use for the commitment scheme \textsf{Com}. We assume, however, that \textsf{Com} is \emph{extractable} (see section \ref{sec: commitment}), meaning that every valid public key for \textsf{Com} has a corresponding secret key $sk_V$ such that the efficient algorithm \textsf{Com.recover} can, given $sk_V$ and any commitment $z = \textsf{Com.commit}(pk_V, b, s)$, recover $(b, s)$ from $z$. Then, in order to extract $r$, the simulator simply samples a public key, secret key \emph{pair} $(pk_V^*, sk_V^*) \leftarrow \textsf{Com.gen}$; inserts $pk_V^*$ into the CRS when it gives the verifier the CRS as input; and recovers $r$ from the verifier's commitment to $r$ using $\textsf{Com.recover}$ and $sk_V^*$.

The final simulation $S$ is given by Figure \ref{fig:replacement-1}, except with $\rho$ replaced with $\rho_r$. Since all the entities in this simulation are efficient, and none have access to the real witness, we have shown that the protocol described in section \ref{sec: protocol} is zero-knowledge.

\section{NIZK argument of quantum knowledge with setup for QMA}
\label{sec: aoqk}

In this section we show that for any $\QMA$ relation the NIZK argument system with CRS setup and preprocessing described in section \ref{sec: protocol} is also a NIZK Argument of Quantum Knowledge with CRS setup and preprocessing (as defined in section \ref{sec: poqk}). The intuition for this is simple. From the proof of soundness of the protocol from \cite{broadbent2016zero}, to which soundness of our argument system reduces, we are able to infer that any prover which is accepted in our protocol with high probability must be teleporting to the verifier \textit{an encoding} of a low-energy witness state for the given instance of the 5-local Clifford Hamiltonian problem. Then, all that an extractor (given oracle access to such a prover) has to do in order to output a good witness is: 
\begin{itemize}
\item Simulate an honest verifier so as to receive (by teleportation) such an encoded witness from the prover,
\item Find a way to recover the committed encoding keys and use them to decode the received state.
\end{itemize}
The extractor recovers the keys by using the same technique used multiple times in the soundness proof of section \ref{sec: soundness}, and also in the zero-knowledge proof of section \ref{sec: zk property}. Namely, the extractor samples a CRS which includes a public key for the prover's commitment scheme for which it knows the associated secret key. The distribution of such CRS is identical to the true distribution, but the additional side information allows the extractor to recover the authentication keys once the prover sends his commitment. In the rest of this section, we formalize this sketch.

The following lemma, that is a direct consequence of the soundness proof in \cite{broadbent2016zero}, will be useful in the analysis. 

\begin{lem}
\label{lem: technical bjsw}
For any $N \in \mathbb{N}$ there exists an efficiently implementable quantum channel $\Phi$ from $2N$ qubits to $1$ such that the following holds for any 5-local Clifford Hamiltonian $H$ on $N$ qubits: let $P$ be a prover for the protocol of \cite{broadbent2016zero} (where $N$ is the number of physical qubits per logical qubit in the concatenated Steane code), and suppose $P$ is accepted with probability $p$ on common input $H$. Let $\sum_{\sigma} p_\sigma \rho_\sigma \otimes \ket{\sigma}\bra{\sigma}$ be the state of the message register of $P$ after the first round, where the $\sigma$'s are commitments to authentication keys, and the $\rho_\sigma$'s are the associated (possibly invalidly) encoded witnesses. Assume $\sum_{\sigma \text{ is a valid commitment}} p_{\sigma} =1$. Let $\xi_{\sigma}$ be the state obtained from $\rho_{\sigma}$ after applying the decoding map $Dec_{\pi,a,b}$ (from Definition \ref{def: dec auth}), where $\pi,a,b$ are values committed in $\sigma$. Let $\xi = \sum_{\sigma} p_{\sigma}\xi_{\sigma}$. Then
$$\Tr[H \Phi^{\otimes n}(\xi)] < 1-p + \negl(n)\;. $$ 
\end{lem}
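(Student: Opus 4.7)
The plan is to derive the lemma almost directly from the soundness analysis of~\cite{broadbent2016zero}. Define $\Phi$ to be the Steane-decoding map of the concatenated Steane code, composed (as needed, depending on whether $Dec_{\pi,a,b}$ has already stripped the trap qubits) with the trap-check projection, so that $\Phi$ produces a designated ``fail'' eigenstate whenever a trap qubit fires or the syndrome is invalid, and outputs the logical qubit otherwise. By construction $\Phi^{\otimes n}(\xi)$ is an $n$-qubit state on which $H$ is well-defined.

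The core technical step is a conditional inequality: for every $\sigma$ that is a valid commitment to some keys $(t,\pi,a,b)$,
\[
  \Pr\!\bigl[\text{verifier accepts}\,\big|\,\sigma\bigr] \;\leq\; 1 - \Tr\!\bigl(H\,\omega_\sigma\bigr) + \negl(n),
\]
where $\omega_\sigma := \Phi^{\otimes n}(\xi_\sigma)$. This inequality is precisely what the soundness analysis of~\cite{broadbent2016zero} establishes. The intuition is the following. Because the keys $(t,\pi,a,b)$ are fixed at the moment $\sigma$ is sent---which occurs before the coin-flipping phase that determines the random Hamiltonian term $r$---the properties of the Clifford authentication code kick in: any Pauli deviation that would nontrivially alter the logical content of the decoded state is caught by the random trap qubits $t$ with probability $1-\negl(N)=1-\negl(n)$ (using that $N$ is polynomial in $n$). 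Consequently, up to negligible statistical distance, the verifier's transversal measurement of $C_r^{\ast}\ket{0^k}\bra{0^k}C_r$ on the authenticated state is equivalent to an honest measurement of the same term on the decoded state $\omega_\sigma$. The final NP zero-knowledge proof, combined with the perfect binding of the commitment, forces the prover to abort whenever this measurement would ``fire,'' so acceptance conditional on $\sigma$ is bounded above by $1-\mathbb{E}_r\Tr(H_r\omega_\sigma)=1-\Tr(H\omega_\sigma)$ up to the same negligible authentication error.

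Granting the conditional bound, the lemma follows immediately by averaging over $\sigma$ and invoking linearity of $\Phi^{\otimes n}$ and of the trace:
\[
  p \;=\; \sum_\sigma p_\sigma\,\Pr[\text{accept}\mid\sigma] \;\leq\; 1 - \sum_\sigma p_\sigma\,\Tr\!\bigl(H\,\omega_\sigma\bigr) + \negl(n) \;=\; 1 - \Tr\!\bigl(H\,\Phi^{\otimes n}(\xi)\bigr) + \negl(n),
\]
which rearranges to $\Tr(H\,\Phi^{\otimes n}(\xi)) < 1 - p + \negl(n)$. The main obstacle is the conditional bound; since it is essentially the content of the soundness theorem of~\cite{broadbent2016zero} specialized to a fixed opening of the authentication keys (rather than a statement about worst-case provers over all openings), the work is largely bookkeeping: translating the BJSW16 soundness analysis into the form above and tracking the normalization of $H$.
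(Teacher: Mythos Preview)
The paper does not give an explicit proof of this lemma; it simply introduces it with the sentence ``The following lemma, that is a direct consequence of the soundness proof in~\cite{broadbent2016zero}, will be useful in the analysis.'' Your proposal is exactly the intended unpacking of that sentence: the per-$\sigma$ conditional bound $\Pr[\text{accept}\mid\sigma]\leq 1-\Tr(H\,\omega_\sigma)+\negl(n)$ is precisely what the BJSW soundness analysis establishes (authentication-code soundness under a transversal Clifford measurement, combined with perfect binding of the commitment and soundness of the final NP proof), and the averaging step is immediate. So your approach and the paper's (implicit) approach coincide.

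Your two parenthetical flags are both legitimate observations about the statement itself rather than gaps in your argument. First, as written $\Phi$ maps $2N$ qubits to $1$, but after $Dec_{\pi,a,b}$ (Definition~\ref{def: dec auth}) each logical block already has only $N$ qubits, so $\Phi$ should really just be the Steane decoder on $N$ qubits; you handled this correctly. Second, your remark about ``tracking the normalization of $H$'' is on point: the verifier samples one of $m$ terms uniformly, so the conditional bound is literally $1-\frac{1}{m}\Tr(H\omega_\sigma)+\negl$, and the resulting factor of $m$ is absorbed into the polynomial $t$ that appears in the quality parameter downstream.
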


Consider a $\QMA$ relation $(Q, \alpha, \beta)$ with $\alpha- \beta > 1/poly$ and a language $L \in \QMA$ specified by  $(Q, \alpha, \beta)$. Given an instance $x$ for the associated decision promise problem, let $H(x)$ be an instance of the ($\QMA$-complete) 5-local Clifford Hamiltonian problem that is associated to it. We assume without loss of generality that this reduction has the property that for all $x \in L$, there exists a state $\sigma$ such that $\Tr(H(x)\sigma)< \negl(|x|)$. Moreover, there exists $\gamma = 1/\poly$ such that for all $x\notin L$, for all $\sigma$, $\Tr(H(x) \sigma)> \gamma(|x|)$. Moreover, such a reduction has the property that there exists an efficiently implementable unitary $U$ which maps a witness state for $x$, tensored with some auxiliary qubits in the state $\ket{0^t}$ for some $t$, to a witness state for $H(x)$. Let $\tau$ be the efficiently implementable map obtained from item 3 in Lemma~\ref{lem:5local}. Then by the lemma $\tau$ has the property that there exists a universal polynomial $t$ such that whenever 
 $\Tr(H(x) \sigma)\leq 1-\eps$ for some state $\sigma$ and $\eps \leq 1$, it holds that $(x,\tau(\sigma)) \in R_{Q,1-\poly((1-\eps)t)}$.
%Let $q: \mathbb{N} \times [0,1] \rightarrow [0,1]$ be a map satisfying the following conditions: 
%\begin{itemize}
%\item $q$ is monotonically increasing,
%\item For all $n$, $p\in [0,1]$, $q(n, p-\negl(n)) = q(n, p)- \negl(n)$,
%\item For all $n$, $p\in [0,1]$, if $p+\frac{1}{\poly(n)} < 1$, then $q(n, p+\frac{1}{\poly(n)} )> q(n, p) + \negl(n)$,
%\item For all $n$, instances $x$ and states $\sigma$, $\Pr[Q_n(x, \tau(\sigma)) = 1] >  q(|x|, 1-\Tr[H(x) \sigma])$, 
%\item For all $n$, $\delta \in [0,1]$, $\,\,\,\,\delta < s(n) \,\,\Rightarrow \,\, q(n, 1-\delta) > \beta(n)$,
%\item For all $n$, $\delta \in [0,1]$,  $\,\,\,\,\delta =negl(n) \,\,\Rightarrow \,\, q(n, 1-\delta) > \alpha(n)-\negl(n)$.
%
%
%\end{itemize}
%We call such a map $q$ a \textit{probability conversion} map for $(Q,\alpha,\beta)$.

We are ready to state the main result of this section. For simplicity we only consider relations $(Q,\alpha,\beta)$ such that $\alpha = 1-\negl(n)$ and $\beta=\negl(n)$; by~\cite{marriott2005quantum} any relation can be amplified to such parameters without changing the witness for the case of a yes instance. 

\begin{theorem}
\label{thm: nizk aoqk}
Let $(Q,\alpha, \beta)$ be a $\QMA$ relation with $\alpha = 1-\negl(n)$ and $\beta=\negl(n)$. There is a polynomial $t$ such that the following holds. There is a NIZK Argument of Quantum Knowledge with CRS setup and single-message preprocessing for $(Q,\alpha,\beta)$ with completeness $1-\negl$, soundness $s$ (where $s$ is the soundness from Thereom \ref{thm:main}), knowledge error $1- \frac{1}{2t}$ and quality $q(n,\eps)=1-\poly((1-\eps) t(n))$.
\end{theorem}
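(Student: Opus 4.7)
The approach is to design an extractor $K$ that simulates the honest verifier's interaction with $P^*$, but with a CRS in which $K$ has chosen $pk_P$ together with a matching trapdoor $sk_P$ (using the extractable commitment scheme of Section~\ref{sec: commitment}). Since honestly generated $pk_P$ and trapdoor-generated $pk_P^*$ are identically distributed, $P^*$'s acceptance probability is unchanged. During the simulated run, $K$ records the prover's commitment $\sigma_{\textsf{keys}}$ and teleportation outcomes $d$, and also completes the instance-dependent check using its knowledge of $sk_E$ (which $K$ sampled when simulating the verifier). If the verifier's check rejects, or if $\textsf{Com.recover}(pk_P^*,sk_P^*,\sigma_{\textsf{keys}})$ fails to return a valid tuple $(t,\pi,a,b)$, then $K$ outputs the special symbol $\perp$. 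Otherwise $K$ applies $\textsf{Dec}_{\pi,a,b}$ (Definition~\ref{def: dec auth}) to the state $\tilde\rho$ it received by teleportation, followed by the channel $\Phi^{\otimes p(n)}$ from Lemma~\ref{lem: technical bjsw} to obtain a state $\xi$ on the $(n+m+T)$-qubit space of $H(Q_n)$, and finally the map $\tau$ from Lemma~\ref{lem:5local} to extract a candidate $n$-qubit witness $\tau(\xi)$.

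The first key step is to show that $\Pr[K \neq \perp]$ is inverse-polynomially close to the prover's acceptance probability $\eps$. This follows because $K$'s simulation of the honest verifier is perfect (it has $sk_E$, so it correctly performs the ``Verifier's check''), and because the commitment scheme is perfectly binding, so recovery from a syntactically valid $\sigma_{\textsf{keys}}$ succeeds. In particular, $\Pr[K \neq \perp] \geq \eps - \negl(n)$.

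The second key step, which is the technical heart of the argument, is bounding the energy of the decoded state with respect to $H(Q_n)$ plus the extra instance-check term. I would invoke the hybrid sequence $H_0,\ldots,H_{10}$ from Section~\ref{sec: soundness} applied to the conditional distribution $(P^* \mid K \neq \perp)$: the same chain of reductions shows that if $P^*$ is accepted with probability $\eps$ in the protocol, then the distribution $\sum_\sigma p_\sigma \rho_\sigma \otimes \proj{\sigma}$ of teleported state and commitment is negligibly close to one to which Lemma~\ref{lem: technical bjsw} applies, so that after decoding and applying $\Phi^{\otimes p(n)}$, the resulting mixture $\xi$ satisfies $\Tr\bigl(H \,\xi\bigr) \leq 1-\eps + \negl(n)$ where $H = H(Q_n) + \ket{0}\bra{0}_{\textsf{clock}} \otimes \sum_i \ket{\overline{x_i}}\bra{\overline{x_i}}_{\textsf{instance}_i}$ is the augmented Clifford Hamiltonian introduced in Section~\ref{sec:sound-conc}. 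In particular, the instance-check term forces $\xi$ (up to a small correction) to have the correct classical description of $x$ in the instance register.

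The third step is to translate the energy bound into a quality guarantee for $\tau(\xi)$. Invoking item 3 of Lemma~\ref{lem:5local} with $\delta = 1-\eps + \negl(n)$ yields that $Q_n$ accepts $(x, \tau(\xi))$ with probability at least $1 - \delta^{1/4}\poly(T,m) = 1 - \poly\bigl((1-\eps)\,t(n)\bigr)$ for the polynomial $t$ packaging $\poly(T,m)^4$. Setting the knowledge error to $\kappa = 1 - \frac{1}{2t}$ ensures $\eps - \kappa > 0$ implies $(1-\eps) t(n) < 1/2$, so that the quality $q(n,\eps) = 1 - \poly((1-\eps)t(n))$ is bounded strictly above $\beta=\negl(n)$. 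Combining the two bounds gives the success probability required by Definition~\ref{def:aoq-single} with polynomial $p(\eps - \kappa, 1/\mu) = \eps - \negl(n)$. The main obstacle I anticipate is carefully porting the hybrid-based soundness reduction to a statement about the \emph{residual} state of the prover (not just its acceptance bit), so that Lemma~\ref{lem: technical bjsw} can be applied to the actual teleported register rather than to the state of a hypothetical prover in the \cite{broadbent2016zero} protocol; this requires verifying that each hybrid preserves not only the prover's win probability but also the joint distribution of its first-round message up to negligible error.
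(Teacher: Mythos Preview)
Your overall strategy---sample the CRS with a trapdoor for $pk_P$, recover the encoding keys from $\sigma_{\textsf{keys}}$, decode the teleported state, and apply $\Phi^{\otimes p(n)}$ followed by $\tau$---is the same as the paper's. There is, however, a concrete gap in the step where you have $K$ simulate the honest verifier exactly and then decode ``the state $\tilde\rho$ it received by teleportation.'' The honest verifier \emph{measures} a subset of its EPR halves in step~(iii) of the preprocessing, before the prover ever teleports. By the time $K$ has completed the instance-dependent check (which you explicitly have it do, using $sk_E$), those physical qubits have been collapsed to the classical outcome $z$, so $\tilde\rho$ is not the intact encoded witness but that witness with $2Nk$ physical qubits torn out. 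Lemma~\ref{lem: technical bjsw} is stated for the first-round message of a prover in the \cite{broadbent2016zero} protocol---an unmeasured encoded state---and does not directly bound the energy of the partially-destroyed state you would actually be decoding. Your closing paragraph flags exactly this obstacle, but the proposed resolution (``verify that each hybrid preserves the joint distribution of the first-round message'') is not what the hybrids do: steps like $H_3\to H_4$ and $H_4\to H_5$ genuinely \emph{transform} the first-round message (replacing teleportation by direct transmission, absorbing a Pauli correction) rather than preserve its distribution.

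The paper sidesteps this by using the hybrid sequence \emph{constructively} rather than argumentatively. Each replacement $H_i\to H_{i+1}$ in Section~\ref{sec:hybrids} is proved by exhibiting an efficient reduction that builds a prover for $H_{i+1}$ out of one for $H_i$ with at most negligibly smaller acceptance probability. Composing all of these reductions yields an efficient $A$ such that $A^{\ket{P^*}}$ is itself a prover for the \cite{broadbent2016zero} protocol (hybrid $H_{10}$). The steps $H_2\to H_3\to H_4$ are what postpone and then remove the verifier's measurement, so the first-round message of $A^{\ket{P^*}}$ is an intact encoded state; the step $H_9\to H_{10}$ is where the reduction samples $pk_P$ with a trapdoor and opens the commitment, so the keys $(t,\pi,a,b)_\sigma$ sit in a known register alongside $\rho_\sigma$, as in equation~\eqref{eq: state}. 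The extractor $K$ simply runs $A^{\ket{P^*}}$ for one round, reads off the keys, applies $Dec_{\pi,a,b}$, $\Phi^{\otimes n}$, and $\tau$, and Lemma~\ref{lem: technical bjsw} applies verbatim. A pleasant side-effect is that $K$ never needs to condition on acceptance or output $\perp$: it succeeds with probability~$1$, which the paper remarks on. So the fix to your plan is not to add an argument that the hybrids preserve the state, but to have $K$ literally execute the chain of prover reductions from Section~\ref{sec:hybrids} and extract from the resulting \cite{broadbent2016zero} prover.
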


\begin{proof}
We prove that the argument system with CRS setup for $(Q,\alpha, \beta)$ from section \ref{sec: protocol} is also an argument of Quantum Knowledge with CRS setup for $(Q,\alpha, \beta)$. We describe an extractor $K$. Fix an instance $x$ and a prover $P'$ that is accepted with probability $p>1-\frac{1}{2t}$ in the protocol of section \ref{sec: protocol} on common input $x$. At a high level, $K$ does the following:
\begin{enumerate}
    \item $K$ uses $P'$ to construct a prover $P''$ for the protocol of \cite{broadbent2016zero}. The crucial part is that $K$ is able to know the authentication keys that $P''$ commits to.
    \item $K$ runs $P''$ and obtains a state $\rho$ of $2nN$ qubits and a commitment $\sigma$. Let $(t,\pi,a,b)$ be the committed keys. $K$ outputs $\Phi^{\otimes n}(Dec_{\pi,a,b} (\rho))$.
\end{enumerate}
In more detail, the hybrid argument in section \ref{sec: soundness} implies that there exists an efficient $A$ such that $A^{\ket{P'}}$ is accepted with probability at least $p-\negl(|x|)$ in the protocol from hybrid $H_{10}$, i.e.\ the protocol from \cite{broadbent2016zero}. In particular, the prover reduction in the last replacement $H_{9} \approx H_{10}$ implies that we can choose $A$ such that after the first step of the instance-dependent part of the protocol the state of $A^{\ket{P'}}$ is (where the first two registers are message registers and the third register is an internal register):
\begin{equation}
\label{eq: state}
    \sum_{\sigma} p_{\sigma}\, \rho_{\sigma} \otimes \ket{\sigma} \bra{\sigma} \otimes \ket{(t,\pi,a,b)_{\sigma}}\bra{(t,\pi,a,b)_{\sigma}}\;,
\end{equation}
where $(t,\pi,a,b)_{\sigma}$ is the value committed in $\sigma$, and all $\sigma$'s that appear in the sum are valid commitments. The extractor $K$ runs $A^{\ket{P'}}$ to obtain \eqref{eq: state}, and efficiently maps this to $\Xi = \sum_{\sigma} p_{\sigma} \Phi^{\otimes n} \left(Dec_{\pi,a,b}(\rho_{\sigma})\right)$. Since $A^{\ket{P'}}$ is accepted with probability at least $p-\negl(|x|)$ in the protocol of \cite{broadbent2016zero}, Lemma \ref{lem: technical bjsw} implies that $\Xi$ satisfies $\Tr(H(x) \Xi) < 1-p + \negl(|x|)$. The final output of the extractor is $\tau(\Xi)$, where $\tau$ is the efficiently implementable channel defined at the start of this section. The claim on the quality of $K$'s output follows from the property of $\tau$ stated below Lemma~\ref{lem: technical bjsw}.

\end{proof}

\begin{remark} Notice that $K$ succeeds at extracting with probability $1$. This is perhaps surprising given that the only known zero-knowledge proofs or arguments of knowledge for $\NP$ in the quantum settting \cite{unruh2012quantum} have an extractor that only succeeds with inverse-polynomial probability. Here, our argument is non-interactive and it works for $\QMA$! The main difference, which grants the additional power to the extractor, is the presence of the CRS setup.
\end{remark}

\begin{remark}
The knowledge error in Theorem \ref{thm: nizk aoqk} is $1-1/poly$. This can reduced to negligible by sequential repetition according to Lemma \ref{lem: seq-rep}, but this would forego non-interactiveness. We believe that starting from the parallel amplified version of our protocol from section \ref{sec:parallel-ampl}, it should be possible to obtain a non-interactive argument of knowledge with negligible knowledge error, but we leave a more thorough investigation of this for future work. 
\end{remark}

\section{Proofs of quantum knowledge for QMA}
\label{sec:bjsw-poqk}

The interactive protocol that we show is a proof of quantum knowledge for languages in $\QMA$ is identical to the protocol from~\cite{broadbent2016zero}, as recalled in Section~\ref{sec:bjsw}, except for one modification: at the same time as the prover sends the encoded state $E(\rho)$ and the commitment $\sigma$ to the verifier (end of step 1 of the protocol), the prover also sends a classical zero-knowledge PoK of an opening to the commitment. More precisely, define a relation $R$ such that $R(\sigma, z) = 1$ if $z$ is a valid opening for the commitment $\sigma$. $V$ and $P$ engage in a ZK PoK protocol for the relation $R$ on common input $\sigma$, as defined in Definition~\ref{def:pok}. If the verifier rejects in this protocol, then the verifier outputs ``reject'' for the whole protocol; otherwise the verifier proceeds to the next phase.

We denote by $\Pi_{PoK}$ the ZK PoK protocol for relation $R$. Using~\cite[Corollary 21]{unruh2012quantum}, we may assume that this protocol has negligible knowledge error.

The main result of this section is that with this modification, the protocol from~\cite{broadbent2016zero} is a quantum proof of quantum knowledge for any language in $\QMA$. As in Section~\ref{sec: aoqk}, for simplicity we only consider relations $(Q,\alpha,\beta)$ such that $\alpha = 1-\negl(n)$ and $\beta=\negl(n)$. 
\begin{theorem}
\label{thm: main}
There is a polynomial $t$ such that the following holds. Let $(Q,1-\negl(n),\negl(n))$ be a $\QMA$ relation. There is a zero-knowledge PoQK for $(Q,1-\negl(n),\negl(n))$ with completeness $1-\negl(n)$, negligible soundness and knowledge error, and quality $q(n,\eps)=1-\poly((1-\eps) t(n))$. 
\end{theorem}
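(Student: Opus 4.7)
The plan is to adapt the strategy of Theorem~\ref{thm: nizk aoqk}, replacing the use of a trapdoor in the CRS with the knowledge extractor for the added classical zero-knowledge PoK $\Pi_{PoK}$. Completeness, soundness, and zero-knowledge are inherited with little effort: the honest prover trivially knows its own opening so $\Pi_{PoK}$ accepts with probability negligibly close to $1$ and the rest of the analysis reduces to~\cite{broadbent2016zero}; soundness is inherited because adding a PoK on the first message only further constrains the prover; and zero-knowledge follows by composing the~\cite{broadbent2016zero} simulator with the simulator for $\Pi_{PoK}$, which by assumption is zero-knowledge and so can be simulated given $\sigma$ but no opening.

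The heart of the argument is the construction and analysis of the extractor $K$. Given quantum oracle access to a prover $P^*$ accepted with probability $\eps$, $K$ simulates the honest verifier $V$ of~\cite{broadbent2016zero} interacting with $P^*$ through the first message, storing the encoded state $\rho_\sigma$ together with the commitment $\sigma$ in its own registers. $K$ then invokes the quantum knowledge extractor $K_{PoK}$ for $\Pi_{PoK}$ on common input $\sigma$, with oracle access to $P^*$ restricted to the $\Pi_{PoK}$ subprotocol; by~\cite[Corollary 21]{unruh2012quantum} we may assume $\Pi_{PoK}$ has negligible knowledge error, so $K_{PoK}$ returns a valid opening $(t,\pi,a,b,s_P)$ of $\sigma$ with probability at least $p'(\eps-\negl(\mu), 1/\mu)$ for some polynomial $p'>0$. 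If $K_{PoK}$ fails, $K$ outputs $\perp$; otherwise it applies $Dec_{\pi,a,b}$ from Definition~\ref{def: dec auth} to $\rho_\sigma$ and then $\Phi^{\otimes n}$ from Lemma~\ref{lem: technical bjsw} to obtain a state $\xi$, and returns $\tau(\xi)$ for $\tau$ the efficient map from Lemma~\ref{lem:5local}.

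For the quality analysis, whenever $K_{PoK}$ succeeds, the acceptance probability of $P^*$ in the surrounding~\cite{broadbent2016zero} subprotocol (conditioned on $\sigma$ being a valid commitment with known opening) is at least $\eps - \negl$, so Lemma~\ref{lem: technical bjsw} gives $\Tr\bigl(H(x)\,\Phi^{\otimes n}(Dec_{\pi,a,b}(\rho_\sigma))\bigr) \leq 1-\eps+\negl(n)$, and item~3 of Lemma~\ref{lem:5local} then converts this energy bound into the quality bound $q(n,\eps) = 1 - \poly((1-\eps)\,t(n))$ for $\tau(\xi)$. Combined with the inverse-polynomial extraction probability of $K_{PoK}$, this matches the PoQK guarantee of Definition~\ref{def:poq-single}.

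The main technical points to verify are that $K_{PoK}$ can be invoked as a black box without disturbing $K$'s stored copy of $\rho_\sigma$, and that it indeed extracts an opening of the particular $\sigma$ seen in the first message rather than some other commitment produced by rewinding. The first is immediate because $\rho_\sigma$ lives entirely in $K$'s internal registers, to which $K_{PoK}$ has no oracle access (Unruh's extractor only acts on the prover's internal and message registers via $U$ and $U^\dagger$). The second is immediate because $\Pi_{PoK}$ is a PoK for the relation of openings of a fixed commitment, invoked on common input $\sigma$. The subtlest point will be verifying that Unruh's quantum extraction procedure composes correctly with the quality analysis of Lemma~\ref{lem: technical bjsw}: since $\Pi_{PoK}$ accepts whenever the full protocol does, the estimate $p'(\eps-\negl, 1/\mu)$ is conditional on the same event that makes the energy bound applicable, and the two estimates combine straightforwardly.
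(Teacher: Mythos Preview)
Your overall approach is the same as the paper's: run the prover through its first message, set aside the encoded state $\rho_\sigma$ and commitment $\sigma$ in the extractor's own registers, invoke the PoK extractor $K_{PoK}$ for $\Pi_{PoK}$ on the prover's residual state to recover the committed keys, then apply $Dec_{\pi,a,b}$, $\Phi^{\otimes n}$, and $\tau$. Your remarks on completeness, soundness, zero-knowledge, and on why $K_{PoK}$ cannot disturb the stored $\rho_\sigma$ are all correct and match the paper.

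There is, however, a genuine gap in your quality analysis. You assert that ``whenever $K_{PoK}$ succeeds, the acceptance probability of $P^*$ in the surrounding \cite{broadbent2016zero} subprotocol \ldots is at least $\eps-\negl$'', and then invoke Lemma~\ref{lem: technical bjsw} to conclude $\Tr(H\,\xi)\le 1-\eps+\negl$. But Lemma~\ref{lem: technical bjsw} bounds the energy of $\xi_\sigma$ in terms of the acceptance probability $1-s(\sigma)$ \emph{conditioned on that particular commitment} $\sigma$, not in terms of the global $\eps$. Different $\sigma$'s can have arbitrary $s(\sigma)$ subject only to $\sum_\sigma p_\sigma s(\sigma)\le 1-\eps$, and the event ``$K_{PoK}$ succeeds'' is itself $\sigma$-dependent (the PoK guarantee lower-bounds extraction probability as a function of $1-s(\sigma)$ but says nothing preventing extraction from succeeding on $\sigma$'s with large $s(\sigma)$). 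Hence the state output conditioned on non-$\perp$ is a mixture over a posterior on $\sigma$ whose energy you have not controlled. The paper handles this with a Markov step: since $\sum_\sigma p_\sigma s(\sigma)\le s$, at least half the mass lies on commitments with $s(\sigma)\le 2s$; for each such $\sigma$ the PoK extractor succeeds with probability at least a fixed inverse polynomial and the decoded $\xi_\sigma$ has energy at most $2s+\negl$, which yields both the success-probability lower bound and the quality bound simultaneously.

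You also omit the sequential-repetition step. A single execution of the modified \cite{broadbent2016zero} protocol has soundness and knowledge error only $1-1/\poly$; the theorem claims negligible values for both. The paper first establishes knowledge error $1-1/(2t)$ for the base protocol and then appeals to Lemma~\ref{lem: seq-rep} to amplify.
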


\begin{proof}
Given a $\QMA$ relation $(Q,1-\negl(n),\negl(n))$ and an instance $x$, as in Section~\ref{sec: aoqk} by applying Lemma~\ref{lem:5local} we can efficiently construct a $5$-local Clifford Hamiltonian $H(x)$ and an efficient quantum channel $\tau$ such that there is a polynomial $t$ such that for any state $\rho$ such that $\Tr(H\rho) \leq \eps$ it holds that $(x,\tau(\rho)) \in R_{Q,1-\poly(\eps t)}$. We then consider the protocol from~\cite{broadbent2016zero} for the Hamiltonian $H(x)$, modified as described at the start of the section. Completeness holds trivially. Provided $t$ is chosen large enough, soundness is at most $1-1/t$. We first show that the knowledge error is at most $1-1/(2t)$. Using Lemma \ref{sec:seq-rep}, both soundness and knowledge error can be improved by sequential repetition.

The Zero-Knowledge property follows essentially as in~\cite{broadbent2016zero}, with the only difference being the addition of the quantum zero-knowledge PoK of an opening to the commitment, that also has the Zero-Knowledge property. 

We show that the protocol is a quantum proof of quantum knowledge by constructing an extractor $K$ as follows. Let $P^*$ be an arbitrary (quantum polynomial-time) prover in the protocol, and $(x,\rho)$ a pair of a problem instance and a quantum witness such that the protocol execution $(P^*(x,\rho),V(x))$ returns $1$ with probability at least $1-s$ for some $s\leq \min\{1/t,1/2\}$. 

Informally, the extractor $K$ first takes the quantum state $\rho^*$ sent by $P^*$ in the first message. It then executes an extractor $K'$ for an opening to the commitment sent in the first message, that must exist by the quantum proof of knowledge property for the sub-protocol. If $K'$ succeeds in recovering the committed keys, $K$ decodes the state received in the first message using these keys and returns the decoded state. Otherwise, $K$ returns an abort symbol ``$\perp$''. 

We provide more detail. 
%Let $Q$ be the efficiently computable function that maps strings to verifying quantum circuits for the $k$-local Clifford Hamiltonian problem. 
%We define the extractor $K$ for the PoK property below. $K$ has oracle access to a prover $P(x, \rho)$ for the protocol of Fig. \ref{fig: protocol} where $x$ is a ``yes'' instance of the $k$-local Clifford Hamiltonian problem and $\rho$ is a state.
%Denote by $r(P,x,\rho)$ the probability that honest $V$ accepts when interacting with $P$ on common input $x$ and auxiliary input $\rho$. We want to show that for all such possible $P, x, \rho$, the extractor $K$ as defined below outputs a state $\rho'$ satisfying $\Tr{\ket{1}\bra{1} Q(x) \rho' Q(x)^{\dagger}} > r(x, \rho)$ with probability at least inverse polynomial in $|x|$, and outputs ``$\perp$'' otherwise. 
Denote by $\textsf{N}$ and $\textsf{S}_P$ the message and state registers of $P$ respectively. Initially, $\textsf{S}_P$ contains the witness $\rho$ and $\textsf{N}$ is initialized to a zero string. Let $U_P$ be the unitary applied by the prover at the first step of the protocol. (For simplicity, we omit dependence on the instance $x$.)

The extractor $K$ operates as follows. 
(In the following description, we sometimes write for simplicity that ``$K$ measures a certain register''. Formally, however, $K$ does not perform any measurement, and what we mean is that $K$ uses its auxiliary register $\textsf{S}_{K_{aux}}$ to implement the channel corresponding to these measurements in a coherent way.) 
\begin{itemize}
    \item[(i)] $K$ executes $U_{P}$ on registers $\textsf{N}$ and $\textsf{S}_P$. Let $n$ is the number of logical qubits of a witness. $K$ swaps the first $2nN$ qubits of $\textsf{N}$ (the ones that the prover would have sent to the verifier) in an auxiliary register $\textsf{S}_{K_{q}}$, and the qubits corresponding to the ``commitment'' into an auxiliary register $\textsf{S}_{K_c}$. $K$ measures $\textsf{S}_{K_c}$ to obtain a string $\sigma$. 
		%If $\sigma$ is a valid commitment, then $K$ proceeds to the next step. Otherwise $K$ halts with the outcome ``$\perp$'' on its output register.
    \item[(ii)] Denote by $\tilde{\rho}$ the state of $\textsf{S}_{K_{q}}$. Let $\rho'$ be the leftover state of register $\textsf{S}_P$. Let $K'$ be the knowledge extractor for $\Pi_{QPoK}$. $K$ executes $K'$ on the remaining registers of the prover to obtain an outcome $z$. If $R(\sigma, z) = 1$ then $K$ proceeds to the next step. Otherwise $K$ halts with the outcome ``$\perp$'' on its output register.
    \item[(iii)] Let the opened commitment be $z=(t, \pi, a,b)$. $K$ applies the map 
	$Dec_{\pi,a,b}$ from Definition~\ref{def: dec auth} to the state $\tilde{\rho}$ contained in register $\textsf{S}_{K_{q}}$, followed by the channel $\Phi^{\otimes n}$ from Lemma~\ref{lem: technical bjsw}, followed by the map $\tau$ described at the start of the proof. $K$ returns the resulting $n$-qubit state in its output register. 
\end{itemize}
We show that $K$ is a valid extractor for the PoQK property. First note that the state of $\textsf{S}_{K_q} \otimes \textsf{S}_{K_c}$ after step (i), conditioned on not returning ``$\perp$'', takes the form 
\begin{equation}
\label{eq: 1}
    \sum_v p_\sigma \,\rho_\sigma \otimes \ket{\sigma}\!\bra{\sigma}\;,
\end{equation}
where each $\sigma$ is a valid commitment. Let $1-s(\sigma)$ be the probability that $V$ accepts $P$ in the protocol, conditioned on $V$ having received commitment $\sigma$. Then $\sum_\sigma p_{\sigma}s(\sigma) \leq s$. Let $p,q$ be a polynomials such that $K'$ succeeds at extracting an opening for a valid commitment with probability at least $\frac{1}{q(\mu)}p(1-s(\sigma))$. Denote by $\xi_{\sigma}$ the state returned by the verifier at step (iii) prior to applying the map $\tau$, when the commitment is $\sigma$ and conditioned on the extraction performed by $K'$ having succeeded. Thus as long as $s(\sigma)<1$ (which implies that $\sigma$ is a valid commitment), the probability that the prover is eventually accepted in the protocol, conditioned on having given commitment $\sigma$ and the extraction having succeeded is at least $1-\frac{s(\sigma) q(\mu)}{p(1-s(\sigma))}$. Applying Lemma~\ref{lem: technical bjsw}, we get 
\[ \Tr\big( H \xi_\sigma \big) \leq \frac{s(\sigma) q(\mu)}{p(1-s(\sigma))} + \negl(n) \;.\]
By Markov's inequality, a fraction at least $1/2$ (under the distribution $(p_\sigma)$) of commitments $\sigma$ are such that $s_{\sigma} \leq 2s$. For any such $\sigma$, conditioned on the extraction performed by $K'$ succeeding, $K$ obtains a state such that $\Tr( H \xi_\sigma ) < 2s q(\mu)/p(1-2s) + \negl(n)$. By the discussion at the start of the proof, in any such case the extractor returns $\tau(\xi_\sigma)$ such that $(x,\tau(\xi_\sigma)) \in R_{Q,1-\poly(2s t q/p(1-2s))}$. To obtain the desired conclusion, we combine the polynomials $q$ and $t$.
\end{proof}

\bibliographystyle{alpha}
\bibliography{references}

\end{document}